\newtheorem{theorem}{Theorem}
\newtheorem{lemma}{Lemma}
\newtheorem{definition}{Definition}
\newtheorem{proposition}{Proposition}
\newtheorem{example}{Example}
\begin{document}

\title{Coding for Two-User SISO and MIMO Multiple Access Channels}
\author{J. Harshan and B. Sundar Rajan, Senior Member, IEEE
\thanks{This work was supported partly through grants to B.Sundar Rajan by the DRDO-IISc program on Advanced Research in Mathematical Engineering. Part of the content of this paper is in the proceedings of IEEE International Symposium on Information theory (ISIT 2008). Some parts of this paper is submitted to IEEE ISIT 2009 to be held at Seoul, Korea.
The authors are with the Department of Electrical Communication Engineering, Indian Institute of Science, Bangalore-560012, India. Email:\{harshan,bsrajan\}@ece.iisc.ernet.in.}}
\maketitle

\begin{abstract}
Constellation Constrained (CC) capacity regions of a two-user SISO Gaussian Multiple Access Channel (GMAC) with finite complex input alphabets and continuous output are computed in this paper. When both the users employ the same code alphabet, it is well known that an appropriate rotation between the alphabets provides unique decodability to the receiver. For such a set-up, a metric is proposed to compute the angle(s) of rotation between the alphabets such that the CC capacity region is maximally enlarged. Subsequently, code pairs based on Trellis Coded Modulation (TCM) are designed for the two-user GMAC with $M$-PSK and $M$-PAM alphabet pairs for arbitrary values of $M$ and it is proved that, for certain angles of rotation, Ungerboeck labelling on the trellis of each user maximizes the guaranteed squared Euclidean distance of the \textit{sum trellis}. Hence, such a labelling scheme can be used systematically to construct trellis code pairs for a two-user GMAC to achieve sum rates close to the sum capacity of the channel. More importantly, it is shown for the first time that ML decoding complexity at the destination is significantly reduced when $M$-PAM alphabet pairs are employed with \textit{almost} no loss in the sum capacity.\\
\indent A two-user Multiple Input Multiple Output (MIMO) fading MAC with $N_{t}$ antennas at both the users and a single antenna at the destination has also been considered with the assumption that the destination has the perfect knowledge of channel state information and the two users have the perfect knowledge of only the phase components of their channels. For such a set-up, two distinct classes of Space Time Block Code (STBC) pairs derived from the well known class of real orthogonal designs are proposed such that the STBC pairs are information lossless and have low ML decoding complexity. 
\end{abstract}

\begin{keywords}
Multiple access channels, Ungerboeck partitioning, constellation constrained capacity, trellis coded modulation, MIMO multiple access channel, space time coding and real orthogonal designs.
\end{keywords}

\section{Introduction and Preliminaries}
\label{sec1}
Capacity regions of a two-user Gaussian Multiple Access Channel (GMAC) with continuous input alphabets and continuous output is well known \cite{ThC}-\cite{Li}. Such a model assumes the users to employ Gaussian code alphabets and the additive noise is assumed to be Gaussian distributed. Though, capacity regions of such a channel provides insights in to the achievable rate pairs ($R_{1}$, $R_{2}$) in an information theoretic sense, it fails to provide information on the achievable rate pairs when we consider finitary  restrictions on the input alphabets and analyze some real world practical signal constellations like $M$-QAM and $M$-PSK signal sets for some positive integer, $M$. Gaussian multiple access channels with finite complex input alphabets and continuous output was first considered in \cite{FTL} with the assumption of random phase offsets in the channel from every user to the destination. For such a channel model, Constellation Constrained (CC) sum capacity has been computed for some well known alphabets such as $M$-PSK and $M$-QAM in \cite{FTL} and trellis codes have also been proposed in \cite{AuE} and \cite{FAR}. Note that the assumption of random phase offsets in the channel naturally provides Uniquely Decodable (UD) property to the receiver when all the users use the same alphabet.\\
\indent Subsequently, a $K$-user GMAC model with no random phase offsets in the channel has been considered in \cite{WCA} and codes based on Trellis Coded Modulation (TCM) \cite{Ub} have been proposed wherein the UD property at the destination is achieved by employing distinct alphabets for all the users. In particular, an alphabet of size $KM$ (example: $KM$-PSK or $KM$- QAM) is chosen and it is appropriately partitioned in to $K$ groups such that every user uses one of the groups as its code alphabet. Towards designing trellis codes, the authors of \cite{WCA} only propose steps to choose the labellings on the edges of the trellises of all the users, but do not provide explicit labellings on the individual trellises.\\
\indent In this paper, a two-user GMAC with finite complex input alphabets and continuous output is considered without the assumption of random phase offsets in the channel (it is shown that the assumption of random phase offsets in the channel leads to loss in the CC sum capacity) and the impact of the rotation between the alphabets of the two-users on capacity regions is investigated. Code pairs based on TCM are also proposed such that sum rates close to the CC sum capacity can be achieved for some classes of alphabet pairs. Throughout the paper, the mutual information value for a GMAC when the symbols from the input alphabets are chosen with uniform distribution is referred as the Constellation Constrained (CC) capacity of the GMAC \cite{Eb}. Henceforth, unless specified, (i) input alphabet refers to a finite complex alphabet and a GMAC refers to a Gaussian MAC with finite complex input alphabets and continuous output alphabet and (ii) capacity (capacity region) refers to the CC capacity (capacity region). Throughout the paper, the terms alphabet and signal set are used interchangeably. \\
\indent The idea of rotation between the alphabets of the two users is also exploited to construct Space Time Block Code (STBC) pairs with low ML decoding complexity for a two-user MIMO (Multiple Input Multiple Output) fading MAC. For a background on space-time coding for MIMO-MAC, we refer the reader to \cite{GaB, MaB}. Till date, we are not aware of any work which address the design of STBC pairs with low ML decoding complexity for a two-user MIMO-MAC. Note that STBC pairs with minimum ML decoding complexity have been well studied in the literature for collocated MIMO channels \cite{TJC, Xl, ZaR, KaR} and relay channels \cite{YiK, HaR2} as well. The contributions of the paper may be summarized as below: 
\begin{itemize}
\item For a two-user GMAC, when both the users employ identical alphabet, it is well known that an appropriate rotation between the alphabets can guarantee UD property (See Definition \ref{ud_definition}) to the receiver \cite{FTL}. For such a setup, we identify that the primary problem is to compute the angle(s) of rotation between the alphabets such that the capacity region is maximally enlarged. A metric to compute the angle(s) of rotation is proposed which provides maximum enlargement of the capacity region (See Theorem \ref{thm}). Through simulations, such angles of rotation are presented for some well known alphabets such as $M$-PSK, $M$-QAM etc for some values of $M$ at some fixed SNR values (See Table \ref{rotation_table1}).
\item For a two-user GMAC, code pairs based on TCM are designed with $M$-PSK and $M$-PAM alphabet pairs to achieve sum rates close to the CC sum capacity of the channel. In particular, trellis codes are explicitly designed for each user by exploiting the structure of the sum alphabets of $M$-PSK and $M$-PAM alphabet pairs.
\item For each $i = 1, 2$, if User-$i$ employs the trellis $T_{i}$ labelled with the symbols of the signal set $\mathcal{S}_{i}$, it is clear that the destination sees the sum trellis, $T_{sum}$ (See Definition \ref{def_sum_trellis}) labelled with the symbols of the sum alphabet, $\mathcal{S}_{sum}$ (See Section \ref{sec2_subsec1}) in an equivalent SISO (Single Input Single Output) AWGN channel. Recall that, for a SISO AWGN channel, Ungerboeck labelling on the trellis maximizes the guaranteed minimum squared Euclidean distance, $d^{2}_{g, min}$ of the trellis and hence such a labelling scheme has become a systematic method of generating trellis codes to go close to the capacity \cite{Ub}. However, when TCM based trellis codes are designed for a two-user GMAC, it is not clear if the two users can distributively achieve Ungerboeck labelling on the sum trellis through the trellises $T_{1}$ and $T_{2}$. In other words, it is not known whether Ungerboeck labelling on $T_{1}$ and $T_{2}$ using $\mathcal{S}_{1}$ and $\mathcal{S}_{2}$ respectively induces an Ungerboeck labelling on $T_{sum}$ using $\mathcal{S}_{sum}$. In this paper, it is analytically proved that, for the class of symmetric $M$-PSK signal sets, when the relative angle is $\frac{\pi}{M}$, Ungerboeck labelling on the trellis of each user induces an Ungerboeck labelling on $T_{sum}$ which in-turn maximizes the $d^{2}_{g, min}$ of the $T_{sum}$ (See Theorem \ref{ungerboeck_theorem}). Hence, such a labelling scheme can be used as a systematic method of generating trellis code pairs for a two-user GMAC to go close to the sum capacity. An example for an alphabet pair is presented using which a non-Ungerboeck labelling on the trellis of each user maximizes the $d^{2}_{g, min}$ of the $T_{sum}$. (See Section \ref{sec3_subsec3}, Example \ref{counter_example_labelling}).
\item Trellis code pairs are also designed in this paper with $M$-PAM signal sets for a two-user GMAC (See Section \ref{sec3_subsec4}). For such signal sets, it is shown that the relative angle of rotation that maximally enlarges the CC capacity region is $\frac{\pi}{2}$ $\forall M$ and for all values of SNR. Note that the above structure on $M$-PAM alphabet pairs keep the two users orthogonal to each other and hence trellis codes designed for SISO AWGN channel with $M$-PAM alphabets are applicable in this set-up. Hence the ML decoding complexity is significantly reduced when trellis codes with $M$-PAM signal sets are employed. Through simulations, it is shown that, for a particular SNR, the sum capacity of $4$-PAM signal sets (when used with a relative rotation of $\frac{\pi}{2}$) and QPSK signal sets (with appropriate angles of rotation provided in \cite{HaR}) are \textit{almost} the same and hence unlike in a SISO AWGN channel there is no loss in the sum capacity by using $4$-PAM alphabets over QPSK signal sets in a two user GMAC. Similar observations are also presented for Gaussian code alphabets.
\item A two-user MIMO fading MAC with $N_{t}$ antennas at both the users and a single antenna at the destination is considered in this paper. The destination is assumed to have the perfect knowledge of Channel State Information (CSI) whereas the two users are assumed to have the perfect knowledge of only the phase components of their channels to the destination. For such a set-up, two classes of Space Time Block Code (STBC) pairs are introduced such that the code pairs are (i) information lossless and (ii) have reduced ML decoding complexity for all values of $N_{t}$. To the best of our knowledge, this is the first paper that addresses construction of STBC pairs for MIMO-MAC with low ML decoding complexity as well as information losslessness property. (See Section \ref{sec4}).
\item The first class of STBC pairs are from a class of complex designs called Separable Orthogonal Designs (SOD) (See Definition \ref{def_sod}) which in-turn are constructed using the well known class of Real Orthogonal Designs (ROD) \cite{TJC, Xl}. It is shown that STBC pairs generated from SODs are (i) information lossless for all values of $N_{t}$ and (ii) are two-group ML decodable \cite{KaR2}. (See Section \ref{sec4_subsec1}).
\item The second class of STBC pairs are generated straight from RODs wherein certain restrictions on input alphabet pairs are imposed. Such a class of STBC pairs are shown to be information lossless for large values of $N_{t}$. However, for smaller values of $N_{t}$, the loss in the sum capacity is shown to be marginal. Importantly, the proposed code pairs also have the single symbol ML decodable property as they are generated straight from RODs. Simulation results are presented which show that STBC pairs from RODs perform better than those from SODs in terms of total Bit Error Rate (BER). (See Section \ref{sec4_subsec2}).
\end{itemize}
\textit{Notations:} For a random variable $X$ which takes value from the set $\mathcal{S}$, we assume some ordering of its elements and use $X(i)$ to represent the $i$-th element of $\mathcal{S}$. i.e. $X(i)$ represents a realization of the random variable $X$. Cardinality of the set $\mathcal{S}$ is denoted by $|\mathcal{S}|$. Absolute value of a complex number $x$ is denoted by $|x|$ and $E \left[x\right]$ denotes the expectation of the random variable $x$. A circularly symmetric complex Gaussian random vector, $x$ with mean $\mu$ and covariance matrix $\Gamma$ is denoted by $x \sim \mathcal{CN} \left(\mu, \Gamma \right)$. Also, the set of all real and complex numbers are denoted by $\mathbb{R}$ and $\mathbb{C}$ respectively. For $a, b \in \mathbb{C}$, distance between between $a$ and $b$ is denoted by $d(a, b)$ whereas the line segment connecting $a$ and $b$ is denoted by $l(a, b)$.

The remaining content of the paper is organized as follows: In Section \ref{sec2}, we present CC capacity regions of a two-user GMAC with finite alphabet pairs and provide details on computing the angles of rotation between the alphabets such that the CC capacity region is maximally enlarged. In Section \ref{sec3}, we discuss details on designing TCM schemes for a two-user GMAC with $M$-PSK and $M$-PAM as input alphabet pairs $\forall M$. In Section \ref{sec4}, a two-user MIMO-MAC model is introduced and two distinct classes of low ML decoding complexity STBC pairs are presented. Section \ref{sec5} constitutes conclusion and some directions for possible future work.
\begin{figure}
\centering
\includegraphics[width=2.3in]{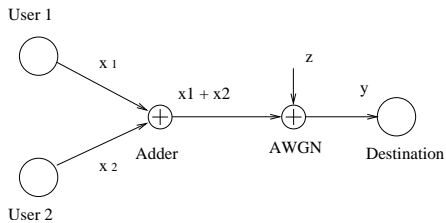}
\caption{Two-user Gaussian MAC model} 
\label{gmac_model}
\end{figure}
\section{Two-user GMAC - signal model and constellation constrained capacity regions}
\label{sec2}
The model of a two-user Gaussian MAC is as shown in Fig. \ref{gmac_model} consists of two users that need to convey information to a single destination. It is assumed that User-1 and User-2 communicate to the destination at the same time and in the same frequency band. Symbol level synchronization is assumed at the destination. The two users are equipped with alphabets $\mathcal{S}_{1}$ and $\mathcal{S}_{2}$ of size $N_{1}$ and $N_{2}$ respectively. When User-1 and User-2 transmit symbols $x_{1} \in \mathcal{S}_{1}$ and $x_{2} \in \mathcal{S}_{2}$ respectively, the destination receives a symbol $y$ given by,
\begin{equation}
\label{gmac_model}
y = x_{1} + x_{2} + z ~~ \mbox{where }  z \sim \mathcal{CN} \left(0, \sigma^{2} \right).
\end{equation}
We compute the mutual information values $I(x_{2} : y)$ and $I(x_{1} : y ~|~ x_{2})$ when the symbols $x_{1}$ and $x_{2}$ are assumed to take values form $\mathcal{S}_{1}$ and $\mathcal{S}_{2}$ with uniform distribution. By symmetry, $I(x_{1} : y)$ and $I(x_{2} : y ~|~ x_{1})$ can also be computed. Considering $x_{1} + z$ as the additive noise, $I(x_{2} : y)$ and $I(x_{1} : y ~|~ x_{2})$ can be computed and are presented in \eqref{mi} and \eqref{mi1} (shown at the top of the next page).
\begin{figure*}
\begin{equation}
\label{mi}
I(x_{2} : y) = \mbox{log}_{2}(N_{2}) - \frac{1}{N_{1}N_{2}}\sum_{k_{1} = 0}^{N_{1} - 1}\sum_{k_{2} = 0}^{N_{2} - 1}E\left[\mbox{log}_{2}\left[ \frac{\sum_{i_{1} = 0}^{N_{1} - 1}\sum_{i_{2} = 0}^{N_{2} - 1} \mbox{exp}\left(- |x_{1}(k_{1}) + x_{2}(k_{2}) - x_{1}(i_{1}) - x_{2}(i_{2}) + z|^{2}/\sigma^{2}\right)}{\sum_{i_{1} = 0}^{N_{1} - 1} \mbox{exp}\left(- |x_{1}(k_{1}) - x_{1}(i_{1})+ z|^{2}/\sigma^{2}\right)}\right] \right].
\end{equation}
\begin{equation}
\label{mi1}
I(x_{1} : y ~|~x_{2} ) = \mbox{log}_{2}(N_{1}) - \frac{1}{N_{1}}\sum_{k_{1} = 0}^{N_{1} - 1}E\left[\mbox{log}_{2}\left[ \frac{\sum_{i_{1} = 0}^{N_{1} - 1} \mbox{exp}\left(- |x_{1}(k_{1}) -  x_{1}(i_{1}) + z|^{2}/\sigma^{2}\right)}{\mbox{exp}\left(- |z|^{2}/\sigma^{2}\right)}\right] \right].
\end{equation}
\hrule
\end{figure*}
Therefore, using \eqref{mi} and \eqref{mi1}, the sum mutual information for both the users is $I(x_{2} : y)$ + $I(x_{1} : y~|~x_{2})$. Using Fano's inequality, it is straightforward to prove that rates (in bits per channel use) more than the above mutual information values cannot be achieved. Hence, the capacity region is as given below \cite{ThC},
$$R_{1} ~< ~ I(x_{1} : y ~|~ x_{2}),$$
$$R_{2} ~< ~ I(x_{2} : y ~|~ x_{1}) \mbox{ and }$$
\begin{equation}
\label{capacity_region}
R_{1}  + R_{2} ~ < ~ I(x_{1}, x_{2} : y ) = I(x_{1} : y ~|  ~x_{2}) + I(x_{2} : y).
\end{equation}
Note that $I(x_{2} : y)$ and $I(x_{1} : y ~|~ x_{2})$ (similarly $I(x_{1} : y)$ and $I(x_{2} : y ~|~ x_{1})$) are only upper bounds on the achievable rate pairs since coding schemes achieving rate pairs close to the edges of the CC capacity region are yet to be identified.
\subsection{Uniquely decodable alphabet pairs for GMAC}
\label{sec2_subsec1}
In this subsection, we formally define a UD alphabet pair. Given two alphabets $\mathcal{S}_{1}$ and $\mathcal{S}_{2}$, we denote the sum alphabet of $\mathcal{S}_{1}$ and $\mathcal{S}_{2}$ by $\mathcal{S}_{sum}$ defined as $\mathcal{S}_{sum} = \left\lbrace x_{1} + x_{2} ~ | ~ \forall ~ x_{1} \in \mathcal{S}_{1}, x_{2} \in \mathcal{S}_{2}\right\rbrace$. The adder channel in the two-user GMAC (as shown in  Fig. \ref{gmac_model}) can be viewed as a mapping $\phi$ given by $\phi : \mathcal{S}_1 \times \mathcal{S}_2 \longrightarrow \mathcal{S}_{sum} \mbox{ where } \phi((x_{1}, x_{2})) = x_{1} + x_{2}$.

\begin{definition}
\label{ud_definition}
An alphabet pair ($\mathcal{S}_{1}$, $\mathcal{S}_{2}$) is said to be uniquely decodable if the mapping $\phi$
is one-one.
\end{definition}

\indent Example for a UD alphabet pair is $\mathcal{S}_{1} = \left\lbrace 1, -1 \right\rbrace$ and $\mathcal{S}_{2} = \left\lbrace i, -i \right\rbrace$. An example for a non-UD alphabet pair is given by $\mathcal{S}_{1} = \mathcal{S}_{2} = \left\lbrace 1, -1 \right\rbrace$. Note that if $S_1$ and $S_2$ have more than one element common, then the pair $(\mathcal{S}_1,\mathcal{S}_2)$ is necessarily non-UD. However, not having more than one common signal point is not sufficient for a pair to be UD, as exemplified by the pair $S_1=\{1, \omega, \omega^2  \}$ and $S_2=\{-1, 1+\omega, 1+\omega^2 \}$ where $\omega$ is a cube root of unity.

It is clear that uncoded multi-user communication with non-UD alphabet pair results in ambiguity while performing joint decoding for the symbols of both users at the destination. In order to circumvent this ambiguity, the two users can jointly construct code pairs ($\mathcal{C}_{1}, \mathcal{C}_{2}$) (codes constructed by adding redundancy across time) over the non-UD alphabet pair so that the codewords of both users can be uniquely decoded. Note that, there will be a loss in the rate of transmission (in other words, there will be an expansion of the bandwidth) by adopting such schemes. Therefore, for band-limited multiuser Gaussian channels, coding across time is forbidden to achieve the UD property and hence, the use of UD alphabets is essential.
\subsection{Capacity maximizing alphabet pairs from rotations}
\label{sec2_subsec2}
For a GMAC with $\mathcal{S}_{1} = \mathcal{S}_{2}$, it is clear that if one of the users employ an appropriate rotated version of the alphabet used by the other, then UD property is attainable. Moving one step further, we consider the problem of finding the optimal angle(s) of rotation between the alphabet pair such that the capacity region is maximally enlarged for a given of SNR.

\indent For a given alphabet $\mathcal{S}_{1}$, let $\mathcal{S}_{2}$ denote the set of symbols obtained by rotating all the symbols of $\mathcal{S}_{1}$ by $\theta$ degrees. From \eqref{capacity_region}, the capacity region is determined by the mutual information values $I(x_{1} : y ~|~ x_{2})$, $I(x_{2} : y ~|~ x_{1})$ and $I(x_{2} : y)$ (or $I(x_{1} : y)$). Note that, the terms $I(x_{1} : y ~|~ x_{2})$ and $I(x_{2} : y ~|~ x_{1})$ are functions of the Distance Distribution (DD) of $\mathcal{S}_{1}$ and $\mathcal{S}_{2}$ respectively. Since, we start with a known $\mathcal{S}_{1}$ and $S_{2} = e^{i\theta}S_{1}$, the DD of $\mathcal{S}_{1}$ and $\mathcal{S}_{2}$ are the same. Hence $I(x_{1} : y ~|~ x_{2})$ and $I(x_{2} : y ~|~ x_{1})$ are independent of $\theta$. However, from \eqref{mi}, the term $I(x_{2} : y)$ is a function of the DD of $\mathcal{S}_{sum}$ and the DD of $\mathcal{S}_{sum}$ changes with $\theta$ and hence the term $I(x_{2} : y)$ is a function of $\theta$.

In the following theorem, we provide a criterion to choose the value of $\theta$ such that $I(x_{2} : y)$ is maximized which in-turn maximally enlarges the capacity region in \eqref{capacity_region}.
\begin{theorem}
\label{thm}
Let ($\mathcal{S}_{1}$, $\mathcal{S}_{2}$) be an alphabet pair such that $\mathcal{S}_{2} = e^{i \theta}\mathcal{S}_{1}$ and $| \mathcal{S}_{1} | = N$. The mutual information value, $I(x_{2} : y)$ in \eqref{mi} is maximized by choosing the angle of rotation, $\theta^{*}$ such that $\theta^{*} = \mbox{arg} \min_{\theta \in (0, 2 \pi)} M(\theta)$ where $M(\theta)$ is given in \eqref{thmeq}.
\begin{figure*}
\begin{equation}
\label{thmeq}
M(\theta) = \mbox{arg} \min_{\theta \in (0, 2 \pi)} \sum_{k_{1} = 0}^{N - 1}\sum_{k_{2} = 0}^{N - 1}\mbox{log}_{2}\left[ \sum_{i_{1} = 0}^{N - 1}\sum_{i_{2} = 0}^{N - 1} \mbox{exp}\left(- |x_{1}(k_{1}) - x_{1}(i_{1}) + e^{i \theta} ( x_{1}(k_{2})  - x_{1}(i_{2})) |^{2}/4\sigma^{2}\right)\right] .
\end{equation}
\end{figure*}

\end{theorem}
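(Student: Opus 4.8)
The plan is to show that every dependence of $I(x_2 : y)$ on $\theta$ is carried by the sum constellation $\mathcal{S}_{sum}$, and then to upper bound the $\theta$-dependent term in \eqref{mi} by a sum of pairwise Bhattacharyya coefficients, which is exactly $M(\theta)$. First I would isolate the $\theta$-dependence: in \eqref{mi} the denominator inside the logarithm, $\sum_{i_1}\exp(-|x_1(k_1)-x_1(i_1)+z|^2/\sigma^2)$, depends only on the distance distribution of $\mathcal{S}_1$ (and on $k_1$ and $z$), so after taking the expectation over $z$ and summing over $k_1,k_2$ it contributes a constant independent of $\theta$ --- this is just a restatement of the fact, already noted before the theorem, that $I(x_1:y\mid x_2)$ is independent of $\theta$. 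Writing the sum-constellation points as $s_a := x_1(k_1)+e^{i\theta}x_1(k_2)$ with $a=(k_1,k_2)$, the numerator becomes $\sum_b\exp(-|s_a-s_b+z|^2/\sigma^2)$, and maximizing $I(x_2:y)$ over $\theta$ reduces to maximizing the differential entropy $h(\mathcal{S}_{sum}+z)$, equivalently to minimizing $\sum_a E_{y\sim g_a}[\log_2\sum_b g_b(y)/g_a(y)]$, where $g_a(y):=\frac{1}{\pi\sigma^2}\exp(-|y-s_a|^2/\sigma^2)$ and the self-term $g_a$ only shifts the objective by a $\theta$-independent constant.

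The core step is to bound the per-point term $E_{y\sim g_a}[\log_2(\sum_b g_b/g_a)]$ using two inequalities in sequence. Since $t_b:=\sqrt{g_b(y)/g_a(y)}\ge 0$ satisfies $\sum_b t_b^2\le(\sum_b t_b)^2$, we get $\log_2\sum_b g_b/g_a\le 2\log_2\sum_b\sqrt{g_b/g_a}$; then Jensen's inequality (concavity of $\log_2$) moves the expectation inside, giving $E_{y\sim g_a}[\log_2\sum_b\sqrt{g_b/g_a}]\le\log_2\sum_b E_{y\sim g_a}[\sqrt{g_b/g_a}]=\log_2\sum_b\int\sqrt{g_a g_b}\,dy$. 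The Gaussian integral $\int\sqrt{g_a g_b}\,dy=\exp(-|s_a-s_b|^2/(4\sigma^2))$ is the Bhattacharyya coefficient, and it is precisely this square-root step that produces the factor $4\sigma^2$ in \eqref{thmeq} (the raw overlap $\int g_a g_b\,dy$ would instead give $2\sigma^2$). Substituting $s_a-s_b=(x_1(k_1)-x_1(i_1))+e^{i\theta}(x_1(k_2)-x_1(i_2))$ and summing over all $N^2$ points $a$ yields a lower bound on $I(x_2:y)$ of the form $(\text{constant independent of }\theta)-\frac{2}{N^2}M(\theta)$, which is maximized exactly when $M(\theta)$ is minimized.

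The main obstacle is that the two inequalities above are not equalities, so what the argument rigorously delivers is that $\arg\min_\theta M(\theta)$ maximizes a \emph{lower bound} on $I(x_2:y)$ rather than $I(x_2:y)$ itself; the theorem is therefore best read as proposing $M(\theta)$ as the operative metric, tight in the distance/SNR regime where the dominant pairwise overlaps govern the entropy of $\mathcal{S}_{sum}+z$. I would take care to highlight the square-root (Bhattacharyya) step as the crux, since the ordinary $L^2$ overlap gives the wrong constant and applying Jensen directly to $\sum_b g_b/g_a$ collapses to the useless bound $E_{y\sim g_a}[\log_2\sum_b g_b/g_a]\le\log_2 N^2$ (because $\int g_b\,dy=1$), making the $\theta$-dependence disappear entirely.
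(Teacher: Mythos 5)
Your proposal is correct and arrives at the paper's metric, but by a genuinely different route. The paper's proof is shorter and blunter: after discarding the $\theta$-independent denominator (exactly as you do), it applies Jensen's inequality directly to each term $\lambda(k_1,k_2)=E_z\bigl[\log_2\sum_{i_1,i_2}\exp(-|u+z|^2/\sigma^2)\bigr]$, i.e.\ $E[\log_2(\cdot)]\le\log_2 E[\cdot]$, and then replaces $x_2(\cdot)$ by $e^{i\theta}x_1(\cdot)$ to conclude $I''(x_2:y)\le M(\theta)$; like you, the authors explicitly concede that $\theta^*$ only optimizes an upper bound on the $\theta$-dependent term (equivalently, a lower bound on $I(x_2:y)$), so in both treatments the word ``maximized'' in the theorem statement is really a surrogate claim, and your honest framing of this matches theirs. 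What your route buys is the constant: carrying out the paper's Jensen step exactly gives $E_z[\exp(-|u+z|^2/\sigma^2)]=\tfrac12\exp\bigl(-|u|^2/(2\sigma^2)\bigr)$, i.e.\ exponent $2\sigma^2$, and reaching the stated $4\sigma^2$ of \eqref{thmeq} then requires the further loosening $e^{-t/(2\sigma^2)}\le e^{-t/(4\sigma^2)}$; your $\ell_2\le\ell_1$ square-root step followed by Jensen and the Bhattacharyya integral $\int\sqrt{g_a g_b}\,dy=e^{-|s_a-s_b|^2/(4\sigma^2)}$ produces the $4\sigma^2$ exactly (up to the harmless overall factor $2$, which does not move the argmin), so your derivation actually explains the form of \eqref{thmeq} better than the paper's own. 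One caution: your closing remark that applying Jensen directly collapses to the useless bound $\log_2 N^2$ is true only in your ratio formulation, where the random denominator $g_a(y)$ sits inside the logarithm; the paper applies Jensen after splitting off the denominator as a $\theta$-independent constant, i.e.\ to the unnormalized kernels $\sum_{b}\exp(-|u_b+z|^2/\sigma^2)$, whose expectations are not $1$ --- so the paper's step is not vacuous, it merely lands on the constant $2\sigma^2$ rather than $4\sigma^2$.
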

\begin{proof}
Since $N_{1}$ = $N_{2}$ = $N$ is fixed, $\mbox{arg} \max_{\theta \in (0, 2 \pi)} I(x_{2} : y) = \mbox{arg} \min_{\theta \in (0, 2 \pi)} I^{'}(x_{2} : y)$ where $I^{'}(x_{2} : y)$ is given in \eqref{prfeq1}.
\begin{figure*}
\begin{equation}
\label{prfeq1}
I^{'}(x_{2} : y) = \sum_{k_{1} = 0}^{N_{1} - 1}\sum_{k_{2} = 0}^{N_{2} - 1}E\left[\mbox{log}_{2}\left[ \frac{\sum_{i_{1} = 0}^{N_{1} - 1}\sum_{i_{2} = 0}^{N_{2} - 1} \mbox{exp}\left(- |x_{1}(k_{1}) + x_{2}(k_{2}) - x_{1}(i_{1}) - x_{2}(i_{2}) + z|^{2}/\sigma^{2}\right)}{\sum_{i_{1} = 0}^{N_{1} - 1} \mbox{exp}\left(- |x_{1}(k_{1}) - x_{1}(i_{1})+ z|^{2}/\sigma^{2}\right)}\right] \right].
\end{equation}
\end{figure*}
\noindent Since the denominator term inside the logarithm of $I^{'}(x_{2} : y)$ is independent of $\theta$, $\mbox{arg} \min_{\theta \in (0, 2 \pi)} I^{'}(x_{2} : y) = \mbox{arg} \min_{\theta \in (0, 2 \pi)} I^{''}(x_{2} : y)$ where $I^{''}(x_{2} : y)$ is given in \eqref{prfeq2}.
\begin{figure*}
\begin{equation}
\label{prfeq2}
I^{''}(x_{2} : y) = \sum_{k_{1} = 0}^{N_{1} - 1}\sum_{k_{2} = 0}^{N_{2} - 1}\underbrace{E\left[\mbox{log}_{2}\left[ \sum_{i_{1} = 0}^{N_{1} - 1}\sum_{i_{2} = 0}^{N_{2} - 1} \mbox{exp}\left(- |x_{1}(k_{1}) + x_{2}(k_{2}) - x_{1}(i_{1}) - x_{2}(i_{2}) + z|^{2}/\sigma^{2}\right)\right] \right]}_{\lambda(k_{1}, k_{2})}.
\end{equation}
\hrule
\end{figure*}
Applying Jensen's inequality on the individual terms, $\lambda(k_{1}, k_{2})$ of $I^{''}(x_{2} : y)$ and replacing each term of the form $x_{2}(.)$ by $e^{i \theta}x_{1}(.)$, we have $I^{''}(x_{2} : y) \leq M(\theta)$. Hence, instead of finding $\theta$ which minimizes $I^{''}(x_{2} : y)$, we propose to find $\theta^{*}$ which minimizes $M(\theta)$, an upper bound on $I^{''}(x_{2} : y)$.
\end{proof}

\indent Note that every individual term, $\lambda(k_{1}, k_{2})$ of $I^{''}(x_{2} : y)$ is an expectation of a non linear function of the random variable $z$ and the closed form expressions of $\lambda(k_{1}, k_{2}) ~\forall~ k_{1}, k_{2}$ are not available. Therefore, in the above theorem, we propose to find $\theta^{*}$ which minimizes $M(\theta)$, an upper bound on $I^{''}(x_{2} : y)$ instead of $I^{''}(x_{2} : y)$ itself. Note that the values of $\theta$ obtained by minimizing $I^{''}(x_{2} : y)$ can potentially provide larger capacity regions than those obtained using $\theta^{*}$.

\indent Since $M(\theta)$ depends on the DD of $\mathcal{S}_{sum}$, $\theta^{*}$ depends on the average transmit power per channel use $P_{1}$ (for User-1) and $P_{2}$ (for User-2) of the alphabets $\mathcal{S}_{1}$ and $\mathcal{S}_{2}$ respectively. Note that, though the results of the Theorem \ref{thm} applies only to alphabets such that $\mathcal{S}_{2} = e^{i \theta}\mathcal{S}_{1}$, its extension to the case when $\mathcal{S}_{2} \neq e^{i \theta}\mathcal{S}_{1}$ is straightforward.

\subsection{Optimal rotations for some known alphabets}
\label{sec2_subsec3}
\begin{center}
\begin{table*}
\caption{Two-tuples ($a$, $b$) for $M$-PSK and $M$-QAM alphabets for some $M$ : $a$ - $\theta^{*}$. $b$ - multiplicity of $\theta^{*}$  (from SNR = -2 db to SNR = 16 db) }
\begin{center}
\begin{tabular}{|c|c|c|c|c|c|c|c|c|c|c|}
\hline SNR in db & BPSK & QPSK & 8-QAM & 8-PSK & 16-PSK\\
\hline -2 & (90, 1) & (45.0, 1) & (90, 1) & (22.5, 1) & (01.43, 1)\\
\hline 0 & (90, 1) & (45.0, 1) & (90, 1) & (22.5, 1) & (09.12, 2)\\
\hline 2 & (90, 1) & (45.0, 1) & (90, 1) & (22.5, 1) & (21.37, 1)\\
\hline 4 & (90, 1) & (45.0, 1) & (90, 1) & (22.5, 1) & (11.25, 1)\\
\hline 6 & (90, 1) & (45.0, 1) & (90, 1) & (22.5, 1) & (11.25, 1)\\
\hline 8 & (90, 1) & (35.1, 2) & (110.12, 1) & (22.5, 1) & (11.25, 1)\\
\hline 10 & (90, 118) & (58.1, 1) & (61.62, 1) & (18.5, 1) & (11.25, 1)\\
\hline 12 & (90, 775) & (30.8, 1) & (119.25, 1) & (16.0, 1) & (09.50, 1)\\
\hline 14 & (90, 1269) & (30.5, 1) & (118.75, 1) & (15.3, 1) & (10.56, 1)\\
\hline 16 & (90, 1609) & (30.3, 2) & (118.0, 1) & (15.1, 1) & (08.31, 2)\\
\hline 
\end{tabular} 
\end{center}
\label{rotation_table1}
\end{table*}
\end{center}

In this subsection, for a given alphabet $\mathcal{S}_{1}$ and for a given value of $P_{1}$ = $P_{2}$, through simulations, using the metric proposed in Theorem 1, we find angle(s) of rotation, $\theta^{*}$ (in degrees) that results in a DD of the sum alphabet which maximizes $I(x_{2} : y)$. For the simulation results, additive noise, $z$ is assumed to have unit variance per dimension. i.e. $\sigma^{2} = 2$. The values of $\theta^{*}$ are calculated by varying the relative angle of rotation from 0 to 180 in steps of 0.0625 degrees. In Table \ref{rotation_table1}, for various values of $P_{1}/\sigma^{2}$ = SNR, values of $\theta^{*}$ are presented for some well known alphabets such as $M$-QAM, $M$-PSK for $M$ = 4, 8 and 16. Against every signal set, a two-tuple $(a, b)$ is presented where $a$ denotes $\theta^{*}$ and $b$ represents the multiplicity of $\theta^{*}$ since for some SNR values, there could be more than one value of $\theta^{*}$ that minimizes $M(\theta)$ (Example : QPSK at SNR = 8 db, 16-PSK at SNR = 16 db). In general, if $\theta^{*}$ is calculated by varying the angle of rotation with different intervals, then the optimal $\theta^{*}$ and the multiplicity of the optimal $\theta^{*}$ will also change. When there are multiple values of $\theta^{*}$ for a signal set, only one of them is provided in the table. Among the several angles, the one presented in the table reduces the complexity at the transmitters compared to the rest of the angles (Example : for BPSK, 90 degrees is chosen over other angles of rotation at SNR = 10 db).  However, when there is not much difference in the complexity among the several values of $\theta^{*}$, we present the one with the least value (Example : for QPSK at SNR = 8 db, 16 db).

\subsubsection{Capacity regions of a GMAC with ${\cal S}_1=$BPSK}
\label{sec2_subsec3_subsubsec1}
\indent In Fig. \ref{bpsk_rr}, capacity regions using BPSK alphabet pair with optimal rotation and without rotation are given at SNR  = -2 db and 2 db. Capacity regions of a GMAC with Gaussian alphabets are also given in Fig. \ref{bpsk_rr} at -2 db and 2 db. The plot shows that, for a given SNR, capacity region of the BPSK alphabet pair is contained inside the capacity region of the Gaussian code alphabet. Note that, with rotation, both users can transmit at rates equal to SISO AWGN channel capacity with BPSK alphabet simultaneously. This is because $\theta^{*} = 90$ degrees (at all SNR values) which makes $\mathcal{S}_{1}$ and $\mathcal{S}_{2}$ orthogonal. Hence both users can achieve the rates close to $I(x_{1} : y ~|~ x_{2})$ and $I(x_{2} : y ~|~ x_{1})$ respectively at all SNR values. From Table \ref{rotation_table1}, note that there are several angles apart from 90 degrees which minimizes $M(\theta)$ even though they do not provide orthogonality to the users. The reason being; for BPSK constellation, the SNR values of 10 db and higher are enough to make the additive noise at the destination is negligible and hence a non zero angle of rotation (not necessarily 90) is sufficient for both users to communicate 1 bit each. In general, multiple optimal angles exist for any alphabet at values of SNR beyond which the capacity region saturates.
\begin{figure}
\centering
\includegraphics[width=3in]{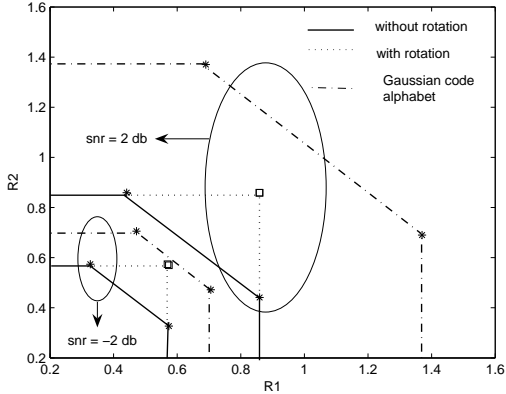}
\caption{Capacity regions of BPSK alphabet pair with optimal rotation and without rotation at SNR = -2 and 2 db} 
\label{bpsk_rr}
\end{figure}
\subsubsection{Capacity regions of a GMAC with ${\cal S}_1=$QPSK}
\label{sec2_subsec3_subsubsec2}
\indent Capacity regions for QPSK alphabet pair is shown with optimal rotation and without rotation at different SNR values in Fig. \ref{qpsk_rr_1}. It is to be observed that rotation provides enlarged capacity region from the SNR value of 2 db onwards. However, at SNR  = 0 db, capacity regions with optimal rotation and without rotation coincides. The percentage increase in $I(x_{2} : y)$ ranges from 4.3 percent at 2 db to 100 percent asymptotically. At SNR  = 6 db, capacity region of a GMAC with Gaussian alphabets is also provided and it can be observed that the capacity region with Gaussian alphabets contains the capacity region of a GMAC with QPSK alphabet.
\begin{figure}
\centering
\includegraphics[width=3in]{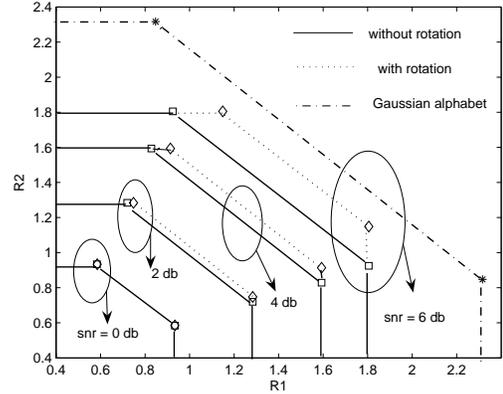}
\caption{Capacity regions of QPSK alphabet pair with optimal rotation and without rotation at SNR = 0, 2, 4 and 6 db} 
\label{qpsk_rr_1}
\end{figure}
\subsection{Two-User GMAC with random phase-offsets}
\label{sec2_subsec5}
In this subsection, capacity regions of a GMAC computed using the channel model presented in \eqref{gmac_model} are compared with those of a GMAC when random phase offsets are introduced in the channel. The GMAC channel model with random offsets has been considered in \cite{FTL}, wherein the constellation constrained capacity of the resulting sum alphabet has been computed in an AWGN channel. For such a setup, it is clear that the problem of designing UD alphabet pairs is completely avoided. However, there will be a loss in the CC sum capacity since the relative angle between the alphabets is a random variable which can also take values other than $\theta^{*}$. Since $I(x_{2} : y)$ is the only term which is variant to rotations, we have plotted $I(x_{2} : y)$ at different SNR values with and without random offsets for BPSK and QPSK alphabet pairs in Fig. \ref{bpsk_rand_offsets} and Fig. \ref{qpsk_rand_offsets} respectively. For the case with no random offsets, values of $\theta^{*}$ presented in Table \ref{rotation_table1} are used to maximise $I(x_{2} : y)$.
\begin{figure}
\centering
\includegraphics[width=3in]{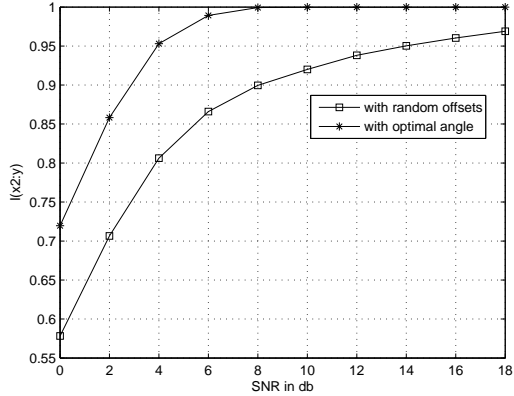}
\caption{$I(x_{2} : y)$ for BPSK alphabet pair with (i) random offsets and (ii) without random offsets} 
\label{bpsk_rand_offsets}
\end{figure}
\begin{figure}
\centering
\includegraphics[width=3in]{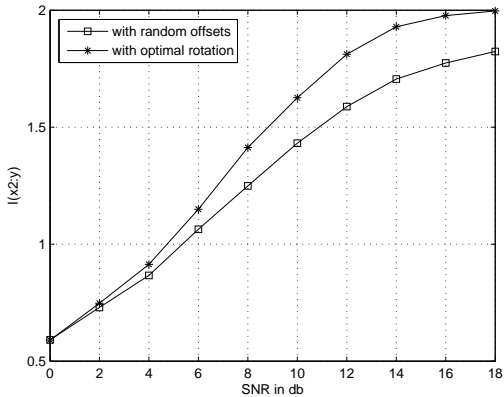}
\caption{$I(x_{2} : y)$ for QPSK alphabet pair with (i) random offsets and (ii) without random offsets} 
\label{qpsk_rand_offsets}
\end{figure}
\section{Trellis Coded Modulation (TCM) for a two-user GMAC}
\label{sec3}
In this section, we design code pairs based on TCM for a two-user GMAC to achieve sum rates, $R_{1}+ R_{2}$ close to the CC sum capacity of the channel (given in \eqref{capacity_region}). Towards that direction, the following proposition indicates the need for coding by both the users.
\begin{proposition}
\label{prop1}
For a two-user GMAC, if one of the users is equipped with a TCM based trellis code, then the other user also needs to employ a TCM based trellis code to achieve sum rates close to the sum capacity. 
\end{proposition}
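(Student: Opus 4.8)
The plan is to argue by contraposition: assume, without loss of generality, that User-1 is equipped with a TCM based trellis code $T_{1}$ while User-2 transmits \emph{uncoded} symbols drawn from $\mathcal{S}_{2}$, and then show that the resulting sum rate is bounded away from the sum capacity $I(x_{1} : y ~|~ x_{2}) + I(x_{2} : y)$ of \eqref{capacity_region}. The first step is to recall that every point on the dominant face of the capacity region is approached only when each user's rate meets its respective (conditional) mutual-information constraint with near-equality; in particular, in any successive-decoding order that attains the sum capacity, User-2 must reliably convey information over its \emph{effective} single-user channel whose output is $y$ and whose additive disturbance is $x_{1} + z$ (or $z$, once $x_{1}$ has been cancelled). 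The constellation-constrained capacity of this effective channel is exactly the corresponding mutual-information term, which is strictly smaller than $\mbox{log}_{2}(N_{2})$ at every finite SNR.

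Next I would make precise why uncoded transmission cannot approach this limit, and I would do so in two equivalent ways. Information-theoretically, an uncoded User-2 emits one of $N_{2}$ symbols per channel use with no redundancy, so its symbol-error probability over the effective channel is bounded away from zero at every finite SNR and cannot be driven to zero by increasing the number of states of $T_{1}$. A cleaner, distance-based version of the same statement is the following: with $T_{1}$ fixed and User-2 uncoded, the signal seen at the destination is the set of composite sequences $\{x_{1}[n] + x_{2}[n]\}$ in which $\{x_{1}[n]\}$ is a path in $T_{1}$ while $\{x_{2}[n]\}$ ranges arbitrarily over $\mathcal{S}_{2}$. Two such composite sequences that share the same $T_{1}$ path but differ in a single User-2 symbol are separated by a squared Euclidean distance of at most the minimum squared distance of $\mathcal{S}_{2}$, a quantity that is independent of the complexity of $T_{1}$. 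Hence the guaranteed distance $d^{2}_{g, min}$ of the composite code, and therefore the attainable coding gain and reliability, is capped by $\mathcal{S}_{2}$ alone no matter how powerful $T_{1}$ is.

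Combining these two steps, the rate at which User-2 can communicate reliably is bounded strictly below $I(x_{2} : y)$ (equivalently below $I(x_{2} : y ~|~ x_{1})$ in the cancellation order), so $R_{1} + R_{2}$ remains bounded away from the sum capacity however close $R_{1}$ is driven to $I(x_{1} : y ~|~ x_{2})$ by $T_{1}$; the contrapositive is exactly the claim, since the systematic finite-alphabet mechanism that both enlarges the composite minimum distance and approaches the constellation-constrained capacity is a TCM based trellis code. I expect the main obstacle to be the bookkeeping over the \emph{entire} dominant face rather than a single corner: one must check that \emph{every} rate split achieving the sum capacity forces User-2's reliable rate up to a mutual-information value that uncoded signalling cannot attain, and, on the coding-gain side, that the single-symbol difference event genuinely dominates so that the cap by $\mathcal{S}_{2}$ is tight and cannot be evaded by a cleverer joint decoder. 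Turning the phrase ``close to the sum capacity'' into a quantitative statement, by tying a target error probability to a fixed back-off from capacity, is where the argument will need the most care.
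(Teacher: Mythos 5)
Your distance-based argument is precisely the paper's own proof: with User-2 uncoded its trellis is trivial, so the sum trellis $T_{sum}$ has parallel paths (your composite sequences sharing a $T_{1}$ path but differing in one User-2 symbol), and the guaranteed minimum squared Euclidean distance of the composite code is therefore capped by the minimum squared distance of $\mathcal{S}_{2}$ independently of how many states $T_{1}$ has --- which is exactly the parallel-paths cap the paper invokes, at the same qualitative level of rigor. The extra information-theoretic framing and the worry about quantifying ``close to the sum capacity'' over the whole dominant face go beyond anything the paper attempts, so they are not needed to match its proof.
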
 
\begin{proof}
If User-1 employs TCM and User-2 performs uncoded transmission, then for any trellis, $T_{1}$ chosen by User-1, since User-2 has a trivial trellis, the sum trellis, $T_{sum}$ (See Definition \ref{def_sum_trellis}) will have parallel paths. For a trellis with parallel paths, it is well known that the minimum accumulated squared Euclidean distance of the trellis is equal to the minimum squared Euclidean distance between the points labelled along the parallel paths. Hence, in order to achieve larger values of minimum accumulated Euclidean distance in the sum trellis, both the users need to employ trellis codes with larger number of states so that sum rates close to the CC sum capacity can be achieved
\end{proof}

\indent For each $i = 1, 2$, let User-$i$ be equipped with a convolutional encoder $C_{i}$ with $m_{i}$ input bits and $m_{i} + 1$ output bits. Throughout the section, we consider convolutional codes which add only 1-bit redundancy. Let the $m_{i} + 1$ output bits of $C_{i}$ take values from a 2-dimensional signal set $\mathcal{S}_{i}$ such that $|\mathcal{S}_{i}| = 2^{m_{i} + 1}$. Henceforth, the codes $\mathcal{C}_{1}$ (set of codewords generated from $C_{1}$) and $\mathcal{C}_{2}$ (set of codewords generated from $C_{2}$) are represented by trellises $T_{1}$ and $T_{2}$ respectively. The sum trellis, $T_{sum}$ for the trellis pair $\left(T_{1}, T_{2}\right)$ is introduced in the following definition:  
\begin{definition}
\label{def_sum_trellis}
Let $T_{1}$ and $T_{2}$ represent two trellises with $n+1$ stages having the state complexity profiles $\left\lbrace q_{1,0}, q_{1,1}, \cdots q_{1,n} \right\rbrace$ and $\left\lbrace q_{2,0}, q_{2, 1}, \cdots q_{2, n} \right\rbrace$ respectively. Let $E^{a}_{1,i}$ and $E^{b}_{2,i}$ respectively denote the edge sets originating from the state $(a)$ of $T_{1}$ and the state $(b)$ of $T_{2}$ in the $i$-th stage where $1 \leq a \leq q_{1,i}$ and $1 \leq b \leq q_{2,i}$. Let the edge sets $E^{a}_{1,i}$ and $E^{b}_{2,i}$ be labelled with the symbols of the sets $\mathcal{X}^{a}_{i}$ and $\mathcal{Y}^{b}_{i}$ respectively. For the above trellis pair, the sum trellis, $T_{sum}$ is a $n + 1$ stage trellis such that  
\begin{itemize}
\item The state complexity profile is, 
\begin{equation*}
\left\lbrace q_{1,0}q_{2,0}, ~q_{1,1}q_{2,1}, \cdots ~q_{1,n}q_{2,n} \right\rbrace
\end{equation*}
where a particular state in the $i$-th stage is denoted by $\left(a, b\right)$ such that $1 \leq a \leq q_{1,i}$ and $1 \leq b \leq q_{2,i}$.
\item The edge set originating from the state $\left(a, b\right)$ in the $i$-th stage is given by $E^{(a,b)}_{i} = E^{a}_{1,i} \times E^{b}_{1,i}$. In particular, if $2^{m_{1}}$ and $2^{m_{2}}$ edges originate from state $(a)$ and state $(b)$ of $T_{1}$ and $T_{2}$ in the $i$-th stage respectively, then $2^{m_{1} + m_{2}}$ edges originate from the state $\left(a, b\right)$ in the $i$-th stage.
\item The edges of the set $E^{(a,b)}_{i}$ are labelled with the symbols of the set $\mathcal{X}^{a}_{i} + \mathcal{Y}^{b}_{i}$.
\end{itemize}
\end{definition}
\begin{example}
For the trellis pair (shown in Fig. \ref{trp}) labelled with elements of $\mathcal{S}_{1}$ and $\mathcal{S}_{2}$ (shown in Fig. \ref{uncoded_alphabets}), the sum trellis $T_{sum}$ is as shown in Fig. \ref{str} which is labelled with the elements of $\mathcal{S}_{sum}$ (shown in Fig. \ref{expanded_alphabet}).
\end{example}
\begin{figure}
\centering
\includegraphics[width=2.5in]{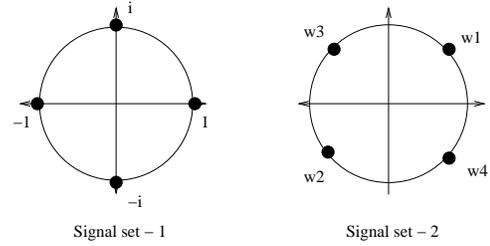}
\caption{Code alphabets used by User-1 and User-2} 
\label{uncoded_alphabets}
\end{figure}
\begin{figure}
\centering
\includegraphics[width=2.5in]{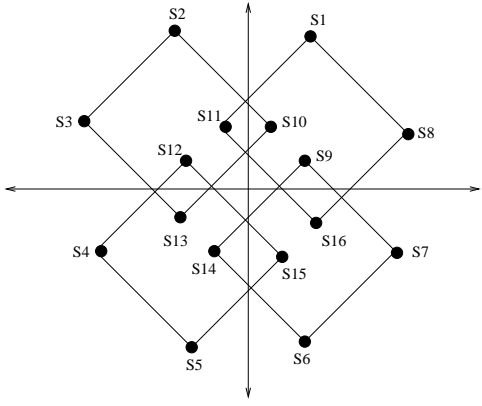}
\caption{Sum alphabet, $\mathcal{S}_{sum}$ for the signal sets presented in Fig. \ref{uncoded_alphabets}.}  
\label{expanded_alphabet}
\end{figure}
\begin{figure}
\centering
\includegraphics[width=2.5in]{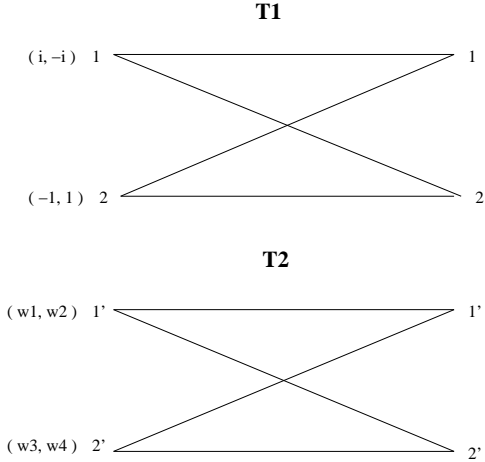}
\caption{Two state trellises of User-1 ($\mbox{T}_{1}$) and User -2 ($\mbox{T}_{2}$).} 
\label{trp}
\end{figure}
\begin{figure}
\centering
\includegraphics[width=3in]{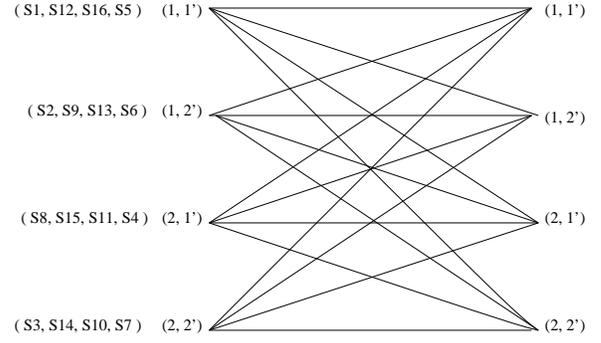}
\caption{Sum trellis, $\mbox{T}_{sum}$ of trellises $\mbox{T}_{1}$ and $\mbox{T}_{2}$ presented in Fig. \ref{trp}.} 
\label{str}
\end{figure}

\indent We assume that the destination performs joint decoding of the symbols of User-1 and User-2 by decoding for a sequence over $\mathcal{S}_{sum}$ on the sum trellis, $T_{sum}$. For the trellis pair $(T_{1}, T_{2})$ and the alphabet pair $(\mathcal{S}_{1}, \mathcal{S}_{2})$, the destination views an equivalent SISO AWGN channel with a virtual source equipped with the trellis, $T_{sum}$ labelled with the elements of $\mathcal{S}_{sum}$. Recall that for a SISO AWGN channel, if the source is equipped with a trellis, $T$ and an alphabet $\mathcal{S}$, the following design rules have been suggested in \cite{Ub} to construct \textit{good} trellis codes.
\begin{itemize}
\item All the symbols of $\mathcal{S}$ should occur with equal frequency and with some amount of regularity.
\item Transitions originating from the same state (or joining the same state) must be labelled with subsets of $\mathcal{S}$ whose minimum Euclidean distance is maximized. 
\end{itemize}

\indent Due to the existence of an equivalent AWGN channel in the GMAC set-up, the sum trellis, $T_{sum}$ has to be labelled with the elements of $\mathcal{S}_{sum}$ satisfying the above design rules. However, from the design point of view, such a labelling rule can be obtained on $T_{sum}$ only through the pairs $(T_{1}, T_{2})$ and $(\mathcal{S}_{1}, \mathcal{S}_{2})$. Hence, in this section, we propose labelling rules on $T_{1}$ and $T_{2}$ using $\mathcal{S}_{1}$ and $\mathcal{S}_{2}$ respectively such that $T_{sum}$ is labelled with the elements of $\mathcal{S}_{sum}$ as per Ungerboeck rules. The problem statement has been elaborately explained below.\\
\indent Since the number of input bits to $C_{i}$ is $m_{i}$, there are $2^{m_{i}}$ edges diverging from (or converging to; henceforth, we only refer to diverging edges) each state of $T_{i}$. Also, as there is only one bit redundancy to be added by the code and $|S_{i}| = 2^{m_{i} + 1}$, the edges diverging from each state have to be labelled with the elements of a subset of $\mathcal{S}_{i}$ of size $2^{m_{i}}$. Therefore, for each $i$, $\mathcal{S}_{i}$ has to be partitioned in to two sets $\mathcal{S}^{1}_{i}$ and $\mathcal{S}^{2}_{i}$ and the diverging edges from each state of $T_{i}$ have to be labelled with the elements of either $\mathcal{S}^{1}_{i}$ or $\mathcal{S}^{2}_{i}$. From the definition of a sum trellis, there are $2^{m_{1} + m_{2}}$ edges diverging from each state of $T_{sum}$ which gets labelled with the elements of one of the following sets,
\begin{equation*}
\mathcal{A} = \left\lbrace \mathcal{S}^{i}_{1} + \mathcal{S}^{j}_{2} ~ \forall ~ i, j = 1, 2 \right\rbrace. 
\end{equation*}
As per the Ungerboeck design rules, the transitions originating from the same state of $T_{sum}$ must be assigned symbols that are separated by largest minimum distance. Hence, the problem addressed in this section is to find an optimal partitioning of $\mathcal{S}_{i}$ in to two sets $\mathcal{S}^{1}_{i}$ and  $\mathcal{S}^{2}_{i}$ of equal cardinality such that the minimum squared Euclidean distance of each one of the sets in $\mathcal{A}$ is maximized. However, since $d_{min}$ of the sets in $\mathcal{A}$ can potentially be different, we try to find an optimal partitioning such that the minimum of the $d_{min}$ values of the sets in $\mathcal{A}$ is maximized.\\
\indent In particular, we try to propose solution to the above problem when $\mathcal{S}_{1}$ and $\mathcal{S}_{2}$ are symmetric $M$-PSK signal sets such that $\mathcal{S}_{2} = e^{i\theta}\mathcal{S}_{1}$ for any $\theta$ satisfying $0 < \theta < \frac{2\pi}{M}$ where $M = 2^{r}$ for $r \geq 1$. Note that $\mathcal{S}_{2} = e^{i\theta}\mathcal{S}_{1}$ implies $m_{1} = m_{2}$. As a first step towards solving the above problem, in the following subsection, we study the structure of the sum alphabet, $\mathcal{S}_{sum}$ of two $M$-PSK signal sets which are of the form $\mathcal{S}_{2} = e^{i\theta}\mathcal{S}_{1}$ for any $\theta$ satisfying $0 < \theta < \frac{2\pi}{M}$.
\begin{figure}
\centering
\includegraphics[width=3in]{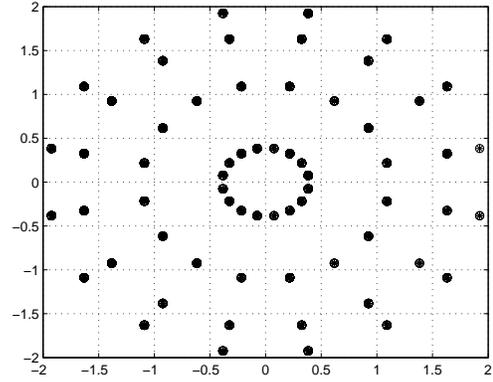}
\caption{The structure of $\mathcal{S}_{sum}$ when $\mathcal{S}_{1}$ = 8-PSK and $\theta = \frac{\pi}{8}$} 
\label{sum_alphabet_pi_M}
\end{figure}
\begin{figure}
\centering
\includegraphics[width=3in]{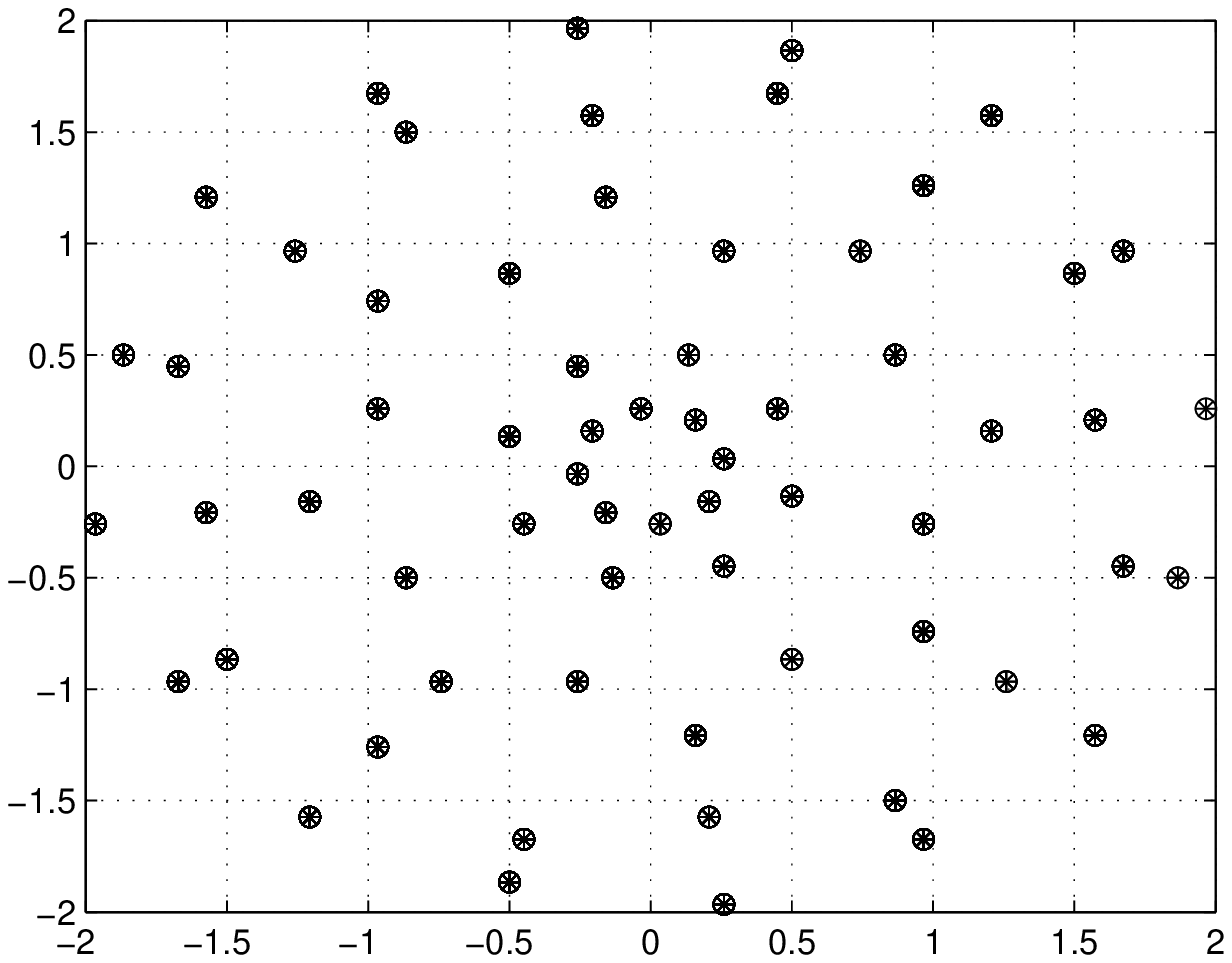}
\caption{The structure of $\mathcal{S}_{sum}$ when $\mathcal{S}_{1}$ = 8-PSK and $\theta = \frac{\pi}{12}$} 
\label{sum_alphabet_not_pi_M}
\end{figure}
\subsection{Structure of $\mathcal{S}_{sum}$ when $\mathcal{S}_{1}$ is a $M$-PSK signal set}
\label{sec3_subsec1}
Let $\mathcal{S}_{1}$ and $\mathcal{S}_{2}$ represent two symmetric $M$-PSK signal sets such that $\mathcal{S}_{2} = e^{i\theta}\mathcal{S}_{1}$ where $0 < \theta < \frac{2\pi}{M}$. Let $x(n)$ and $x'(n)$ denote the points $e^{\frac{i2\pi n}{M}}$ and $e^{\frac{i2\pi n}{M}}e^{i\theta}$ of $\mathcal{S}_{1}$ and $\mathcal{S}_{2}$ respectively for $ 0 \leq n \leq M - 1$. The sum alphabet, $\mathcal{S}_{sum}$ of $\mathcal{S}_{1}$ and $\mathcal{S}_{2}$ is given by,
\begin{equation*}
\mathcal{S}_{sum} = \mathcal{S}_{1} + \mathcal{S}_{2} = \left\lbrace x(n) + x'(n') ~|~ \forall ~ 0 \leq n, n' \leq M - 1 \right\rbrace.
\end{equation*}
Alternatively, $\mathcal{S}_{sum}$ can be written as given in \eqref{sum_alph_alt} (at the top of the next page)
\begin{figure*}
\begin{equation}
\label{sum_alph_alt}
\mathcal{S}_{sum} =  \left\lbrace x(n) + x'(n + m), x(n) + x'(n - m - 1) ~|~ \forall ~ 0 \leq n \leq M - 1 \mbox{ and } 0 \leq m \leq M/2 - 1 \right\rbrace
\end{equation}
\hrule
\end{figure*}
where $x(n) + x'(n + m) = e^{\frac{i2\pi n}{M}} + e^{i\{\frac{2\pi (n + m)}{M} + \theta\}}$ and $x(n) + x'(n - m - 1) = e^{\frac{i2\pi n}{M}} + e^{i\{\frac{2\pi (n - m)}{M} - \theta\}}$ such that $x'(-p) = x'(M-p)$ for any $0 \leq p \leq M-1$. The phase components of the points $x(n) + x'(n + m)$ and $x(n) + x'(n - m - 1)$ are given by $\frac{2\pi n}{M} + \frac{\pi m}{M} +  \frac{\theta}{2}$ and  $\frac{2\pi n}{M} - \frac{\pi (m + 1)}{M} +  \frac{\theta}{2}$ respectively. For a fixed $m$, the set of points of the form $x(n) + x'(n + m)$ lie on a circle of radius $2\mbox{cos}(\frac{\pi m}{M} + \frac{\theta}{2})$ and let that circle be denoted by $O^{m}$. Similarly, for a fixed $m$, the collection of points of the form $x(n) + x'(n - m - 1)$ lie on a circle of radius $2\mbox{cos}(\frac{\pi (m + 1)}{M} - \frac{\theta}{2})$ and the circle is denoted by $I^{m}$. Therefore, $\mathcal{S}_{sum}$ takes the structure of $M$ concentric PSK signal sets (as shown in Fig. \ref{sum_alphabet_not_pi_M}). The structure of $\mathcal{S}_{sum}$ for a 8-PSK signal set is shown in Fig. \ref{sum_alphabet_pi_M} and Fig. \ref{sum_alphabet_not_pi_M} for $\theta = \frac{\pi}{8}$ and $\theta = \frac{\pi}{12}$ respectively. The set containing the radii of the $M$ circles is given by

{\footnotesize
\begin{equation*}
\mathcal{R} = \left\lbrace 2\mbox{cos}(\frac{\pi m}{M} + \frac{\theta}{2}), ~2\mbox{cos}(\frac{\pi (m + 1)}{M} - \frac{\theta}{2}) ~|~ \forall ~ 0 \leq m \leq \frac{M}{2} - 1 \right\rbrace. 
\end{equation*}
}

\noindent Henceforth, throughout the section, $r(O^{m})$ and $r(I^{m})$ denote the radius of the circle $O^{m}$ and $I^{m}$ respectively. Note that when $\theta =  \pi/M$, $r(O^{m}) = r(I^{m})$ and hence $\mathcal{S}_{sum}$ has the structure of $\frac{M}{2}$ concentric PSK signal sets (as shown in Fig. \ref{sum_alphabet_pi_M}). For $\theta \in \left(0, \frac{\pi}{M} \right)$, it can be verified that the structure of $\mathcal{S}_{1} + e^{i\theta}\mathcal{S}_{1}$ is identical to that of $\mathcal{S}_{1} + e^{i(\theta + \frac{\pi}{M})}\mathcal{S}_{1}$ and henceforth, we consider values of $\theta$ such that $0 < \theta < \frac{\pi}{M}$. Note that the elements of $\mathcal{R}$ satisfies the following relation,

{\footnotesize
\begin{equation*}
r(I^{M/2 - 1}) \leq r(O^{M/2 - 1}) \leq r(I^{M/2 - 2}) \leq r(O^{M/2 - 2}) \leq ~ \cdots ~ \leq r(O^{0}).
\end{equation*}
}

\noindent For the elements of $\mathcal{R}$, the following three propositions can be proved using standard trigonometric identities :
\begin{proposition}
\label{prop2}
The sequence $\left\lbrace r(O^{k}) - r(I^{k}) \right\rbrace$ from $k = 0$ to $M/2 -1$ is an increasing sequence.
\end{proposition}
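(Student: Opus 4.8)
The plan is to reduce the claim to the monotonicity of a single sine function. First I would substitute the explicit expressions for the two radii, namely $r(O^{k}) = 2\cos\!\left(\frac{\pi k}{M} + \frac{\theta}{2}\right)$ and $r(I^{k}) = 2\cos\!\left(\frac{\pi (k+1)}{M} - \frac{\theta}{2}\right)$, so that the $k$-th term of the sequence becomes $d_{k} := r(O^{k}) - r(I^{k}) = 2\cos\!\left(\frac{\pi k}{M} + \frac{\theta}{2}\right) - 2\cos\!\left(\frac{\pi (k+1)}{M} - \frac{\theta}{2}\right)$. The whole proof then hinges on rewriting this difference of cosines as a product.

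Next I would apply the difference-to-product identity $\cos A - \cos B = -2\sin\!\left(\frac{A+B}{2}\right)\sin\!\left(\frac{A-B}{2}\right)$ with $A = \frac{\pi k}{M} + \frac{\theta}{2}$ and $B = \frac{\pi (k+1)}{M} - \frac{\theta}{2}$. A short computation gives $\frac{A+B}{2} = \frac{\pi (2k+1)}{2M}$ and $\frac{A-B}{2} = \frac{\theta}{2} - \frac{\pi}{2M}$, so that
\begin{equation*}
d_{k} = -4\,\sin\!\left(\frac{\pi (2k+1)}{2M}\right)\sin\!\left(\frac{\theta}{2} - \frac{\pi}{2M}\right).
\end{equation*}
The crucial observation is that the second sine factor does not depend on $k$. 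I would then use the standing hypothesis $0 < \theta < \frac{\pi}{M}$ to conclude that $\frac{\theta}{2} - \frac{\pi}{2M} \in \left(-\frac{\pi}{2M}, 0\right)$, so $\sin\!\left(\frac{\theta}{2} - \frac{\pi}{2M}\right) < 0$ and hence the constant $C := -4\sin\!\left(\frac{\theta}{2} - \frac{\pi}{2M}\right)$ is strictly positive. Thus $d_{k} = C\,\sin\!\left(\frac{\pi (2k+1)}{2M}\right)$ with $C > 0$, and monotonicity of $\{d_{k}\}$ is now equivalent to monotonicity of the single sine term.

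Finally I would check the range of the argument $\phi_{k} := \frac{\pi (2k+1)}{2M}$ as $k$ runs from $0$ to $\frac{M}{2}-1$: it increases from $\phi_{0} = \frac{\pi}{2M}$ to $\phi_{M/2-1} = \frac{\pi (M-1)}{2M} = \frac{\pi}{2} - \frac{\pi}{2M}$, so every $\phi_{k}$ lies strictly inside $\left(0, \frac{\pi}{2}\right)$. Since $\phi_{k}$ is strictly increasing in $k$ and $\sin$ is strictly increasing on $\left(0, \frac{\pi}{2}\right)$, and since $C > 0$, the sequence $d_{k} = C\sin(\phi_{k})$ is strictly increasing, which is exactly the claim. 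I expect the only delicate point to be the bookkeeping of signs and the verification that the argument never leaves the first quadrant; once the difference-to-product step isolates the $k$-independent negative factor, the rest is immediate from the monotonicity of sine on $\left(0,\frac{\pi}{2}\right)$. The companion Propositions for $r(O^{k})$ and $r(I^{k})$ individually should follow by the same device applied to each radius sequence separately.
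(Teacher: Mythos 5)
Your proof is correct: the difference-to-product identity isolates a $k$-independent positive constant $C = -4\sin\bigl(\tfrac{\theta}{2}-\tfrac{\pi}{2M}\bigr)$ (positive precisely because of the standing hypothesis $0<\theta<\tfrac{\pi}{M}$), and the remaining factor $\sin\bigl(\tfrac{\pi(2k+1)}{2M}\bigr)$ is strictly increasing since its argument stays in $\bigl(0,\tfrac{\pi}{2}\bigr)$. This is essentially the paper's own route — the paper gives no explicit proof, stating only that the proposition follows from standard trigonometric identities, and your argument is a correct and complete instantiation of exactly that.
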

\begin{proposition}
\label{prop3}
The sequence $\left\lbrace r(O^{k}) - r(I^{k + 1}) \right\rbrace$ from $k = 0$ to $M/2 - 2$ is an increasing sequence.
\end{proposition}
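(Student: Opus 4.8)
The plan is to reduce the claim to the monotonicity of a single sine function via the sum-to-product identity. First I would substitute the explicit radius expressions established in the paragraph preceding the proposition, namely $r(O^{k}) = 2\cos\left(\frac{\pi k}{M} + \frac{\theta}{2}\right)$ and $r(I^{k+1}) = 2\cos\left(\frac{\pi(k+2)}{M} - \frac{\theta}{2}\right)$, so that
\begin{equation*}
r(O^{k}) - r(I^{k+1}) = 2\cos\left(\frac{\pi k}{M} + \frac{\theta}{2}\right) - 2\cos\left(\frac{\pi(k+2)}{M} - \frac{\theta}{2}\right).
\end{equation*}

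Next I would apply $\cos A - \cos B = -2\sin\left(\frac{A+B}{2}\right)\sin\left(\frac{A-B}{2}\right)$ with $A = \frac{\pi k}{M} + \frac{\theta}{2}$ and $B = \frac{\pi(k+2)}{M} - \frac{\theta}{2}$. A short computation gives $\frac{A+B}{2} = \frac{\pi(k+1)}{M}$ and $\frac{A-B}{2} = \frac{\theta}{2} - \frac{\pi}{M}$, whence
\begin{equation*}
r(O^{k}) - r(I^{k+1}) = 4\sin\left(\frac{\pi(k+1)}{M}\right)\sin\left(\frac{\pi}{M} - \frac{\theta}{2}\right).
\end{equation*}
The crucial observation is that the second factor is independent of $k$, so the entire sequence is a fixed multiple of $\sin\left(\frac{\pi(k+1)}{M}\right)$.

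It then remains to verify two sign/monotonicity facts. Because attention has been restricted to $0 < \theta < \frac{\pi}{M}$, the angle $\frac{\pi}{M} - \frac{\theta}{2}$ lies strictly between $\frac{\pi}{2M}$ and $\frac{\pi}{M}$, hence in $(0,\pi)$ since $\frac{\pi}{M} \le \frac{\pi}{2}$, so the constant factor $4\sin\left(\frac{\pi}{M} - \frac{\theta}{2}\right)$ is strictly positive. Finally, as $k$ runs from $0$ to $\frac{M}{2} - 2$ the argument $\frac{\pi(k+1)}{M}$ increases strictly from $\frac{\pi}{M}$ to $\frac{\pi}{2} - \frac{\pi}{M}$, and therefore never leaves the interval $\left(0, \frac{\pi}{2}\right)$ on which $\sin$ is strictly increasing. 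Since scaling a strictly increasing sequence by a positive constant preserves strict increase, the claim follows.

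I do not expect any single step to be a genuine obstacle; the argument is a sum-to-product manipulation followed by a monotonicity check. The only place demanding care is the simultaneous bookkeeping of the two angle ranges: one must use the upper bound $\theta < \frac{\pi}{M}$ to keep the $k$-independent factor positive, and the upper index $k \le \frac{M}{2} - 2$ to keep $\frac{\pi(k+1)}{M}$ strictly below $\frac{\pi}{2}$, which is precisely where monotonicity of $\sin$ is available. The companion statement in Proposition \ref{prop2} would follow by the identical method with $r(I^{k})$ replacing $r(I^{k+1})$.
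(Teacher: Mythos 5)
Your proof is correct: the sum-to-product identity gives $r(O^{k}) - r(I^{k+1}) = 4\sin\left(\frac{\pi(k+1)}{M}\right)\sin\left(\frac{\pi}{M} - \frac{\theta}{2}\right)$ exactly as you compute, the standing restriction $0 < \theta < \frac{\pi}{M}$ makes the $k$-independent factor positive, and $\frac{\pi(k+1)}{M}$ indeed stays inside $\left(0, \frac{\pi}{2}\right)$ for $0 \le k \le \frac{M}{2}-2$, where $\sin$ is increasing. The paper gives no explicit proof of this proposition (it only asserts that Propositions \ref{prop2}--\ref{prop4} ``can be proved using standard trigonometric identities''), and your argument is precisely the standard-identity verification the authors allude to, so it matches the paper's intended approach while supplying the omitted details.
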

\begin{proposition}
\label{prop4}
The sequence $\left\lbrace r(I^{k}) - r(O^{k + 1}) \right\rbrace$ from $k = 0$ to $M/2 - 2$ is an increasing sequence.
\end{proposition}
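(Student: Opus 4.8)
The plan is to collapse the difference $r(I^{k}) - r(O^{k+1})$ into a single product of sines and then read off monotonicity directly. First I would substitute the stated radii $r(I^{k}) = 2\cos\left(\frac{\pi(k+1)}{M} - \frac{\theta}{2}\right)$ and $r(O^{k+1}) = 2\cos\left(\frac{\pi(k+1)}{M} + \frac{\theta}{2}\right)$, observing that the two cosines share the common ``center'' angle $A := \frac{\pi(k+1)}{M}$ and differ only by the symmetric offset $\pm\frac{\theta}{2}$. This symmetry is exactly what makes the difference amenable to a product formula.

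The key step is to apply the difference-to-product identity $\cos(A - B) - \cos(A + B) = 2\sin A \sin B$ with $B = \frac{\theta}{2}$, which yields
\begin{equation*}
r(I^{k}) - r(O^{k+1}) = 4\sin\left(\frac{\pi(k+1)}{M}\right)\sin\left(\frac{\theta}{2}\right).
\end{equation*}
Since $0 < \theta < \frac{\pi}{M}$, the factor $\sin\left(\frac{\theta}{2}\right)$ is a fixed positive constant independent of $k$, so the monotonicity of the sequence is governed entirely by the factor $\sin\left(\frac{\pi(k+1)}{M}\right)$.

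It then remains only to check that this factor increases over the stated index range. As $k$ runs from $0$ to $M/2 - 2$, the argument $\frac{\pi(k+1)}{M}$ increases from $\frac{\pi}{M}$ up to $\frac{\pi}{2} - \frac{\pi}{M}$, and hence stays strictly inside the interval $\left(0, \frac{\pi}{2}\right)$ on which $\sin$ is strictly increasing. Therefore $\sin\left(\frac{\pi(k+1)}{M}\right)$ strictly increases with $k$, and multiplying by the positive constant $4\sin\left(\frac{\theta}{2}\right)$ preserves this, giving the claim.

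There is no serious obstacle here; the only point requiring a moment's care is verifying that the center angle never reaches $\frac{\pi}{2}$ over the range of $k$ considered, which is precisely why the index is capped at $M/2 - 2$, so that I may invoke monotonicity of sine on a single branch rather than having to track the turning point at $\frac{\pi}{2}$. The same difference-to-product technique, applied with the appropriate center angles and offsets, handles Propositions \ref{prop2} and \ref{prop3} in an entirely analogous fashion.
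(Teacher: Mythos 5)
Your proof is correct and is exactly the ``standard trigonometric identities'' argument the paper invokes: it states Propositions \ref{prop2}--\ref{prop4} without detailed proof, noting only that they follow from such identities, and your difference-to-product computation $r(I^{k}) - r(O^{k+1}) = 4\sin\left(\frac{\pi(k+1)}{M}\right)\sin\left(\frac{\theta}{2}\right)$ together with monotonicity of sine on $\left(0, \frac{\pi}{2}\right)$ is precisely that intended argument, carried out in full.
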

\begin{proposition}
\label{prop5}
Using the phase information of each point in $\mathcal{S}_{sum}$, the following observations can be made:
\begin{enumerate}

\item The angular separation between the two points $x(n) + x'(n + m)$ and $x(n') + x'(n' + m)$ on $O^{m}$ is $\frac{2 \pi (n-n')}{M}$ for all $m = 0 \mbox{ to } M/2 - 1$. Similarly, the angular separation between the two points $x(n) + x'(n - m - 1)$ and $x(n') + x'(n' - m - 1)$ on $I^{m}$ is $\frac{2 \pi (n-n')}{M}$ for all $m = 0 \mbox{ to } M/2 - 1$.
\item The angular separation between the point, $x(n) + x'(n + m)$ on $O^{m}$ and the point $x(n') + x'(n' - m - 1)$ on $I^{m}$ is $\frac{2 \pi (n-n')}{M} + \frac{\pi (2m + 1)}{M}$ for all $m = 0 \mbox{ to } M/2 - 1$.
\item The angular separation between the point $x(n) + x'(n + m)$ on $O^{m}$ and the point $x(n') + x'(n' - (m-1) - 1)$ on $I^{m-1}$ is $\frac{2 \pi (n-n')}{M} + \frac{\pi (2m)}{M}$ for all $m = 1 \mbox{ to } M/2 - 1$.
\item The angular separation between the point $x(n) + x'(n - m - 1)$ on $I^{m}$ and the point $x(n') + x'(n' + (m-1))$ on $O^{m-1}$ is $\frac{2 \pi (n-n')}{M} - \frac{\pi (2m)}{M}$ for all $m = 1 \mbox{ to } M/2 - 1$.\\
\end{enumerate}
\end{proposition}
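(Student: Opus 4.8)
The plan is to read off each of the four angular separations as a signed difference of the \emph{phase components} that were already recorded in Section~\ref{sec3_subsec1}. Write $\psi_O(n,m) = \frac{2\pi n}{M} + \frac{\pi m}{M} + \frac{\theta}{2}$ for the phase of the point $x(n)+x'(n+m)$ lying on $O^m$, and $\psi_I(n,m) = \frac{2\pi n}{M} - \frac{\pi(m+1)}{M} + \frac{\theta}{2}$ for the phase of the point $x(n)+x'(n-m-1)$ lying on $I^m$. With this shorthand, every statement of the proposition is simply the quantity $\psi(\text{first point}) - \psi(\text{second point})$. These expressions are genuine phases, with no extra additive $\pi$, because for $0 < \theta < \frac{\pi}{M}$ and $0 \le m \le \frac{M}{2}-1$ every radius in $\mathcal{R}$ is nonnegative, so the half-angle representation $e^{i\alpha}+e^{i\beta} = 2\cos\!\big(\frac{\alpha-\beta}{2}\big)e^{i(\alpha+\beta)/2}$ yields the midpoint angle directly.

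First I would dispose of part~1: for two points on the same circle $O^m$ the radial index $m$ and the offset $\frac{\theta}{2}$ are identical, so $\psi_O(n,m)-\psi_O(n',m) = \frac{2\pi(n-n')}{M}$, and the computation on $I^m$ is verbatim the same. For parts~2--4 I would subtract an $O$-phase from an $I$-phase (or vice versa) after making the correct index substitution. Part~2 uses $\psi_O(n,m)-\psi_I(n',m) = \frac{2\pi(n-n')}{M} + \frac{\pi m}{M} + \frac{\pi(m+1)}{M} = \frac{2\pi(n-n')}{M}+\frac{\pi(2m+1)}{M}$. Part~3 pairs $O^m$ with $I^{m-1}$, so one replaces $m$ by $m-1$ in $\psi_I$ to obtain $\psi_I(n',m-1) = \frac{2\pi n'}{M}-\frac{\pi m}{M}+\frac{\theta}{2}$, whence $\psi_O(n,m)-\psi_I(n',m-1) = \frac{2\pi(n-n')}{M}+\frac{\pi(2m)}{M}$. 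Part~4 pairs $I^m$ with $O^{m-1}$ and gives $\psi_I(n,m)-\psi_O(n',m-1) = \frac{2\pi(n-n')}{M}-\frac{\pi(2m)}{M}$. In every case the common $\frac{\theta}{2}$ cancels, because both summands forming a point of $\mathcal{S}_{sum}$ are rotated by the same half-angle.

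Since all four identities are literal subtractions, there is no genuine analytic obstacle; the proposition is pure bookkeeping, which is exactly why only ``standard trigonometric identities'' are invoked. The only place that demands care is the index shifts in parts~3 and~4: one must remember that passing from $I^m$ to $I^{m-1}$ turns the term $-\frac{\pi(m+1)}{M}$ into $-\frac{\pi m}{M}$, and that passing from $O^m$ to $O^{m-1}$ turns $+\frac{\pi m}{M}$ into $+\frac{\pi(m-1)}{M}$. One must also respect the sign convention for the separation so that the expressions in parts~2 and~3 come out with a positive radial offset while part~4 carries the negative sign as stated.
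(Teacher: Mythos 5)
Your proof is correct and takes essentially the same route as the paper: the paper gives no argument beyond pointing to the phase components $\frac{2\pi n}{M}+\frac{\pi m}{M}+\frac{\theta}{2}$ and $\frac{2\pi n}{M}-\frac{\pi (m+1)}{M}+\frac{\theta}{2}$ already derived in Section \ref{sec3_subsec1}, and each of the four claims is exactly the signed difference of these phases under the appropriate index substitution, which is what you compute. Your explicit check that every radius is positive for $0<\theta<\frac{\pi}{M}$ (so the identity $e^{i\alpha}+e^{i\beta}=2\cos\bigl(\frac{\alpha-\beta}{2}\bigr)e^{i(\alpha+\beta)/2}$ yields the true phase with no spurious shift by $\pi$) is a small refinement the paper leaves implicit.
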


\indent In the next subsection, first, we partition each $\mathcal{S}_{i}$ in to two groups using Ungerboeck rules and then, exploiting the structure of $\mathcal{S}_{sum}$, we compute the minimum Euclidean distance, $d_{min}$ of each one of the sets in $\mathcal{A}$.
\subsection{Structure of each one of the sets in $\mathcal{A}$ induced by the Ungerboeck partitioning on $\mathcal{S}_{1}$ and $\mathcal{S}_{2}$}
\label{sec3_subsec2}
For each $i = 1, 2$, let $\mathcal{S}_{i}$ be partitioned in to two sets of equal size using Ungerboeck rules which results in two sets denoted by $\mathcal{S}^{e}_{i}$ and $\mathcal{S}^{o}_{i}$ such that $d_{min}$ of $\mathcal{S}^{e}_{i}$ and $\mathcal{S}^{o}_{i}$ is maximized. Since the number of sets resulting from the partition is two, the minimum angular separation, $\phi_{min}$ between the points in each set is $\frac{4\pi}{M}$. The two sets of $\mathcal{S}_{i}$ are of the form,
\begin{equation*}
\mathcal{S}^{e}_i = \left\lbrace x(n) ~|~ n = 2m ~ \mbox{ for }  0 \leq m \leq M/2 - 1\right\rbrace \mbox{ and }
\end{equation*}
\begin{equation*}
\mathcal{S}^{o}_i = \left\lbrace x(n) ~|~ n = 2m + 1 ~ \mbox{ for } 0 \leq m \leq M/2 - 1\right\rbrace.
\end{equation*}
 It is clear that the sets $\mathcal{S}^{e}_1 + \mathcal{S}^{o}_2$, $\mathcal{S}^{e}_1 + \mathcal{S}^{e}_2$, $\mathcal{S}^{o}_1 + \mathcal{S}^{o}_2$ and $\mathcal{S}^{o}_1 + \mathcal{S}^{e}_2$ $\in \mathcal{A}$ form a partition of $\mathcal{S}_{sum}$. In the rest of this subsection, we obtain the $d_{min}$ values of the above sets. Throughout this section, the set $\mathcal{S}^{\alpha}_1 + \mathcal{S}^{\beta}_2$ and its minimum distance are denoted by $S^{\alpha\beta}_{sum}$ and $d^{\alpha\beta}_{min}$ respectively $\forall \alpha, \beta = e$ and $o$. Without loss of generality, only the structure of $S^{eo}_{sum}$ and $S^{ee}_{sum}$ are studied since the structure of $S^{oe}_{sum}$ and $S^{oo}_{sum}$ are identical to that of $S^{eo}_{sum}$ and $S^{ee}_{sum}$ respectively.\\
\subsubsection{Calculation of $d_{min}$ of $\mathcal{S}^{e}_1 + \mathcal{S}^{o}_2$}
\begin{figure*}
\begin{equation}
\label{subset_1}
\mathcal{S}^{e}_1 + \mathcal{S}^{o}_2 = \left\lbrace x(n) + x'(n') ~ | ~ n = 2m \mbox{ and } n' = 2m' + 1 ~\forall ~ m, m' = 0 \mbox{ to } M/2 -1 \right\rbrace. 
\end{equation}
\hrule
\end{figure*}
The elements of $\mathcal{S}^{e}_1 + \mathcal{S}^{o}_2$ (as given in \eqref{subset_1}) are of the form $x(n) + x'(n + m)$ and $x(n) + x'(n - m - 1)$ where $n$ takes even numbers while $n + m$ and $n - m - 1$ take odd numbers. When $m$ is odd, note that $n + m$ is odd and $n - m - 1$ is even and hence $\mathcal{S}^{e}_1 + \mathcal{S}^{o}_2$ will have no points on $I^{m}$ and $M/2$ points on $O^{m}$. Similarly, when $m$ is even, $\mathcal{S}^{e}_1 + \mathcal{S}^{o}_2$ will have no points on $O^{m}$ and $M/2$ points on $I^{m}$. Therefore, $\mathcal{S}^{e}_1 + \mathcal{S}^{o}_2$ will have points on the following set of circles $\left\lbrace O^{M/2 - 1}, I^{M/2 - 2}, O^{M/2-3}, \cdots , O^{1}, I^{0} \right\rbrace$.

\indent Since $n$ takes only even values, using observation 1) of Proposition \ref{prop5}, $\phi_{min}$ between the points of $\mathcal{S}^{e}_1 + \mathcal{S}^{o}_2$ on any circle is $\frac{4\pi}{M}$. Hence the points of $\mathcal{S}^{e}_1 + \mathcal{S}^{o}_2$ are maximally separated on every circle. Also, $\phi_{min}$ between the points placed on consecutive circles has to be calculated. From observation 3) of Proposition \ref{prop5}, $\phi_{min}$ between the points in $O^{q}$ and $I^{q-1}$ is $\frac{2\pi}{M}$ for all $q = 1$ to $M/2 - 1$. Similarly $\phi_{min}$ between the points in $I^{q}$ and $O^{q-1}$ is $0$ for all $q = 2$ to $M/2 - 2$. It can be verified that the structure of $\mathcal{S}^{o}_1 + \mathcal{S}^{e}_2$ is identical to that of $\mathcal{S}^{e}_1 + \mathcal{S}^{o}_2$.
\begin{proposition}
\label{prop6}
$r(I^{q-1})$ and $r(O^{q})$ satisfy the following inequality for all $q = 1$ to $M/2 - 1$,
\begin{equation*}
d(r(I^{q-1}),  r(O^{q})e^{i\frac{2\pi}{M}}) \geq d(r(O^{q}), r(O^{q})e^{i\frac{2\pi}{M}})
\end{equation*}
\end{proposition}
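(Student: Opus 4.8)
The plan is to read the statement geometrically. Each radius is being identified with the point of that radius lying on the positive real axis, so the left-hand side $d(r(I^{q-1}), r(O^q)e^{i\frac{2\pi}{M}})$ is the distance from the point $R_1 := r(I^{q-1})$ on the real axis to the point $R_2 e^{i\frac{2\pi}{M}}$, where $R_2 := r(O^q)$, while the right-hand side $d(r(O^q), r(O^q)e^{i\frac{2\pi}{M}})$ is the distance between the two points $R_2$ and $R_2 e^{i\frac{2\pi}{M}}$ on the circle $O^q$ of radius $R_2$. Both pairs subtend the same central angle $\frac{2\pi}{M}$, so the claim compares a chord of the circle $O^q$ against a segment that crosses the same angular gap but reaches out to the larger radius $R_1$.

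First I would record the one structural fact needed, namely $R_1 \geq R_2$, i.e. $r(I^{q-1}) \geq r(O^q)$. This is immediate from the chain of inequalities on $\mathcal{R}$ established just above the proposition, or directly from $r(I^{q-1}) = 2\cos(\frac{\pi q}{M} - \frac{\theta}{2})$ and $r(O^q) = 2\cos(\frac{\pi q}{M} + \frac{\theta}{2})$ together with $0 < \theta < \frac{\pi}{M}$, using that for $1 \leq q \leq M/2 - 1$ both arguments lie in $(0, \frac{\pi}{2})$, where cosine is positive and strictly decreasing.

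Next I would apply the law of cosines to each distance. Writing $\phi = \frac{2\pi}{M}$, I get $d(R_2, R_2 e^{i\phi})^2 = 2R_2^2(1 - \cos\phi)$ and $d(R_1, R_2 e^{i\phi})^2 = R_1^2 + R_2^2 - 2R_1 R_2 \cos\phi$. Subtracting the former from the latter and factoring, the desired inequality $d(R_1, R_2 e^{i\phi})^2 \geq d(R_2, R_2 e^{i\phi})^2$ reduces to the elementary condition $(R_1 - R_2)(R_1 + R_2 - 2R_2\cos\phi) \geq 0$.

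Finally I would check that both factors are nonnegative. The first factor satisfies $R_1 - R_2 \geq 0$ by the structural fact above, and for the second factor $R_1 + R_2 - 2R_2\cos\phi \geq R_1 + R_2 - 2R_2 = R_1 - R_2 \geq 0$, since $\cos\phi \leq 1$. Hence the product is nonnegative and the inequality follows. There is no genuine obstacle here: the only mildly delicate points are correctly interpreting the notation $d(\cdot,\cdot)$ applied to radii and establishing $R_1 \geq R_2$; once the law of cosines is in place the inequality collapses to the single factorization above.
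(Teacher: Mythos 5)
Your proof is correct, and it takes a genuinely different route from the paper's. The paper argues synthetically: since $r(O^{q}) \leq r(I^{q-1})$, the four points $r(O^{q})$, $r(O^{q})e^{i\frac{2\pi}{M}}$, $r(I^{q-1})$, $r(I^{q-1})e^{i\frac{2\pi}{M}}$ form an isosceles trapezoid whose parallel sides are the two chords; the quantity $d(r(I^{q-1}), r(O^{q})e^{i\frac{2\pi}{M}})$ is a diagonal of this trapezoid, and because the angle at the vertex $r(O^{q})$ between the shorter parallel side and the leg is obtuse, the diagonal (the side opposite that angle in the relevant triangle) dominates the chord $d(r(O^{q}), r(O^{q})e^{i\frac{2\pi}{M}})$. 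You instead compute both squared distances by the law of cosines with common central angle $\phi = \frac{2\pi}{M}$ and reduce the claim to the factorization $(R_{1}-R_{2})(R_{1}+R_{2}-2R_{2}\cos\phi) \geq 0$, with both factors nonnegative once $R_{1} = r(I^{q-1}) \geq r(O^{q}) = R_{2}$ is established. Both proofs hinge on the same structural fact $r(O^{q}) \leq r(I^{q-1})$, which you justify correctly (it sits inside the ordering of $\mathcal{R}$ stated in the paper, or follows from the explicit cosine formulas). What your approach buys is completeness and verifiability: the paper's key step — that the angle in question is obtuse — is asserted without proof (it is true, being a base angle at the shorter parallel side of an isosceles trapezoid, but it is not argued), whereas your argument is a closed chain of elementary identities with no geometric claim left unverified. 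What the paper's approach buys is brevity and geometric insight into why the inequality holds, which is reused in spirit in the proofs of the later lemmas.
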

\begin{proof}
See Appendix \ref{proof_prop_6}.
\end{proof}
\begin{proposition}
\label{prop7}
For $M \geq 8$, $r(O^{q})$ satisfy the following inequality for all $q = 1$ to $M/2 - 3$,
\begin{equation*}
d(r(O^{q}), r(O^{q})e^{i\frac{2\pi}{M}}) \geq 2r(O^{M/2-1})\mbox{sin}(\frac{2\pi}{M}).
\end{equation*}
\end{proposition}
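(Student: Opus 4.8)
The plan is to turn both sides of the claimed inequality into chord lengths and reduce it to a single comparison of radii. The left-hand side is the length of a chord subtending the angle $\frac{2\pi}{M}$ on the circle $O^{q}$, so $d(r(O^{q}), r(O^{q})e^{i\frac{2\pi}{M}}) = 2\, r(O^{q}) \sin(\frac{\pi}{M})$; applying the double-angle identity $\sin(\frac{2\pi}{M}) = 2\sin(\frac{\pi}{M})\cos(\frac{\pi}{M})$ to the right-hand side and cancelling the common positive factor $2\sin(\frac{\pi}{M})$, the claim becomes
\[
r(O^{q}) \geq 2\, r(O^{M/2-1}) \cos\left(\frac{\pi}{M}\right)
\]
for every $q$ with $1 \leq q \leq M/2-3$.

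First I would note that $r(O^{q}) = 2\cos(\frac{\pi q}{M} + \frac{\theta}{2})$ is strictly decreasing in $q$ over this range: since $0 < \theta < \frac{\pi}{M}$, the argument $\frac{\pi q}{M} + \frac{\theta}{2}$ always lies in $(0, \frac{\pi}{2})$ for $q \leq M/2-1$, where cosine is decreasing. As the right-hand side is independent of $q$, it is enough to establish the inequality at the largest index $q = M/2-3$, which is the worst case.

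For that case I would use the co-function forms $r(O^{M/2-3}) = 2\sin(\frac{3\pi}{M} - \frac{\theta}{2})$ and $r(O^{M/2-1}) = 2\sin(\frac{\pi}{M} - \frac{\theta}{2})$, then expand the right-hand side with the product-to-sum identity $2\sin A \cos B = \sin(A+B) + \sin(A-B)$. The inequality then reduces to
\[
\sin\left(\frac{3\pi}{M} - \frac{\theta}{2}\right) - \sin\left(\frac{2\pi}{M} - \frac{\theta}{2}\right) + \sin\left(\frac{\theta}{2}\right) \geq 0,
\]
which I would close by observing that for $M \geq 8$ both angles $\frac{3\pi}{M} - \frac{\theta}{2}$ and $\frac{2\pi}{M} - \frac{\theta}{2}$ lie in $(0, \frac{\pi}{2})$, where sine is increasing, so the difference of the first two terms is nonnegative, while $\sin(\frac{\theta}{2}) > 0$.

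The main obstacle is not depth but bookkeeping: I must verify carefully that the arguments of every sine and cosine remain inside the intervals of monotonicity, since this is precisely where the hypotheses $M \geq 8$ and $q \leq M/2-3$ are used --- both to justify the reduction to the worst case and to sign the final expression. The remaining manipulations are the same kind of elementary trigonometric identities already invoked for Propositions \ref{prop2}--\ref{prop4}.
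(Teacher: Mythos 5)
Your proof is correct, and it takes a recognizably different route from the paper's. The paper works directly with the ratio of the two sides, writing
$\frac{2r(O^{M/2-1})\sin(2\pi/M)}{d(r(O^{q}),\, r(O^{q})e^{i2\pi/M})} = \frac{\sin(\frac{\pi}{M}-\frac{\theta}{2})\sin(\frac{2\pi}{M})}{\cos(\frac{\theta}{2}+\frac{\pi q}{M})\sin(\frac{\pi}{M})}$,
and then relaxes the $\theta$-dependence in two \emph{opposite} directions: the numerator is bounded via $\sin(\frac{\pi}{M}-\frac{\theta}{2}) \leq \sin(\frac{\pi}{M})$ (i.e.\ $\theta \to 0$), while the denominator is bounded via $\cos(\frac{\theta}{2}+\frac{\pi q}{M}) \geq \sin(\frac{5\pi}{2M})$ (i.e.\ $\theta \to \frac{\pi}{M}$ and $q \to M/2-3$ simultaneously), leaving the $\theta$-free comparison $\sin(\frac{2\pi}{M}) \leq \sin(\frac{5\pi}{2M})$, valid for $M \geq 8$. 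You instead cancel the common chord factor $2\sin(\frac{\pi}{M})$, use monotonicity of $r(O^{q})$ to reduce exactly to the extremal index $q = M/2-3$, and then resolve the $\theta$-dependence exactly (keeping the same $\theta$ on both sides) via product-to-sum and a sign argument. Both proofs use the hypotheses $M \geq 8$ and $q \leq M/2-3$ in the same essential place, namely to keep all angles inside $(0,\frac{\pi}{2})$ where the needed monotonicity holds. What your version buys is tightness and transparency: it never decouples $\theta$, it exhibits the worst case explicitly, and it shows the inequality is in fact strict; it also avoids the delicate constant-tracking of the ratio method, where the paper's printed bound indeed contains a slip (it states $\frac{\theta}{2}+\frac{\pi q}{M} \leq \frac{\pi}{2}-\frac{5\pi}{M}$ and $\sin(\frac{5\pi}{M})$ where $\frac{\pi}{2}-\frac{5\pi}{2M}$ and $\sin(\frac{5\pi}{2M})$ are meant, as its final display confirms). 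What the paper's version buys is brevity: one ratio bound covers all $q$ at a stroke without a separate reduction step.
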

\begin{proof}
See Appendix \ref{proof_prop_7}
\end{proof}
\begin{lemma}
For $0 < \theta < \frac{\pi}{M}$, the minimum distance of the sets $\mathcal{S}^{eo}_{sum}$ and $\mathcal{S}^{oe}_{sum}$ are given by
{\small
\begin{equation}
\label{min_dist_set_1}
d^{oe}_{min} = d^{eo}_{min} = \mbox{min}\left(d(r(I^{q-1}),  r(O^{q})e^{i\frac{2\pi}{M}}), 4\mbox{sin}\left( \frac{\pi}{M} - \frac{\theta}{2}\right) \right).
\end{equation}
}
\end{lemma}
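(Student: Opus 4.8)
The plan is to read $d^{eo}_{min}$ directly off the concentric-circle description of $\mathcal{S}^{e}_1 + \mathcal{S}^{o}_2$ in \eqref{subset_1}: I would list every type of candidate closest pair and discard all of them except the two recorded in \eqref{min_dist_set_1}. Recall that this set occupies the circles $O^{M/2-1}, I^{M/2-2}, O^{M/2-3}, \dots, O^1, I^0$, carrying $M/2$ points on each occupied circle at the common angular spacing $\frac{4\pi}{M}$, and that the radii obey $r(I^{M/2-1}) \le r(O^{M/2-1}) \le r(I^{M/2-2}) \le \cdots \le r(O^0)$ together with the monotonicities of Propositions \ref{prop2}--\ref{prop4}. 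Since the $M/2$ points on a circle of radius $r$ are equispaced at $\frac{4\pi}{M}$, the least same-circle chord is $2r\sin\left(\frac{2\pi}{M}\right)$; this decreases with $r$, hence is smallest on the innermost occupied circle $O^{M/2-1}$, and I denote this value by $\delta$ (the second term of \eqref{min_dist_set_1}). The remaining task is to show that no pair of points on two \emph{different} circles can beat $\delta$, except possibly the single pair whose distance is the first term of \eqref{min_dist_set_1}.

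For pairs on different circles I would first use the elementary bound that two points at radii $r_1 < r_2$ are at distance at least the radial gap $r_2 - r_1$, with equality precisely when they are radially aligned. Combined with the radius chain and Propositions \ref{prop2}--\ref{prop4}, the radial gap across any two \emph{non-consecutive} occupied circles is a sum of consecutive gaps, and a short sum-to-product estimate shows it is already at least $\delta$; thus non-consecutive circles never realize the minimum. This reduces the problem to consecutive occupied circles, which, reading the radius chain, alternate between the type $(O^q, I^{q-1})$ (inner circle $O^q$) and the type $(I^q, O^{q-1})$ (inner circle $I^q$).

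The core step treats these two types through the angular separations of Proposition \ref{prop5}. For type $(I^q, O^{q-1})$ the minimum angular separation is $0$, so the closest such pair is radially aligned and its distance is exactly $r(O^{q-1}) - r(I^q)$; a sum-to-product computation evaluates this to $4\sin\left(\frac{\pi q}{M}\right)\sin\left(\frac{\pi}{M} - \frac{\theta}{2}\right)$, which is increasing in $q$ and therefore smallest at the smallest admissible $q$, where it equals $2r(O^{M/2-1})\sin\left(\frac{2\pi}{M}\right) = \delta$; hence every pair of this type is at least $\delta$. For type $(O^q, I^{q-1})$ the minimum angular separation is $\frac{2\pi}{M}$, so the distance is $d\left(r(I^{q-1}), r(O^q)e^{i\frac{2\pi}{M}}\right)$; Proposition \ref{prop6} lower-bounds it by the chord $d\left(r(O^q), r(O^q)e^{i\frac{2\pi}{M}}\right)$, and Proposition \ref{prop7} lower-bounds that chord by $\delta$ for every $q = 1$ to $M/2 - 3$ with $M \ge 8$. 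Consequently the only consecutive-pair distance not yet shown to be at least $\delta$ is the innermost one, $q = M/2 - 1$, namely $d\left(r(I^{M/2-2}), r(O^{M/2-1})e^{i\frac{2\pi}{M}}\right)$, which is the first term of \eqref{min_dist_set_1}. Taking the minimum of the two survivors yields $d^{eo}_{min}$, and $d^{oe}_{min} = d^{eo}_{min}$ because $\mathcal{S}^{o}_1 + \mathcal{S}^{e}_2$ has a geometry identical to that of $\mathcal{S}^{e}_1 + \mathcal{S}^{o}_2$, as noted after \eqref{subset_1}.

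I expect the main obstacle to be the boundary bookkeeping rather than any individual inequality. Proposition \ref{prop7} is valid only for $q \le M/2-3$ and $M \ge 8$, so the pairs involving the two innermost circles must be handled by hand, and one must confirm that precisely one inter-circle distance escapes the $\delta$ bound. The small cases also need separate care: for $M = 4$ each occupied circle carries only two antipodal points, so the innermost chord degenerates to the diameter $4\sin\left(\frac{\pi}{M} - \frac{\theta}{2}\right)$ and Proposition \ref{prop7} is unavailable, so one must verify directly that \eqref{min_dist_set_1} still records the correct minimum.
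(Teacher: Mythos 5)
Your proposal is correct and follows essentially the same route as the paper's proof: the innermost same-circle chord $2r(O^{M/2-1})\sin\left(\frac{2\pi}{M}\right)$ as one contender, the zero-angular-separation pairs $(I^{q},O^{q-1})$ handled via the radial-gap identity (the paper's base case $r(O^{1})-r(I^{2})=d_{1}$ plus Proposition \ref{prop3}), the $\frac{2\pi}{M}$-separated pairs $(O^{q},I^{q-1})$ handled via Propositions \ref{prop6} and \ref{prop7} for $q \le M/2-3$, and the single surviving inter-circle distance at $q = M/2-1$ as the other contender. Your explicit treatment of non-consecutive circles and of the $M=4$ boundary case (where Proposition \ref{prop7} does not apply) only makes rigorous two points the paper leaves implicit.
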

\begin{proof}
See Appendix \ref{proof_lemma_1}.\\
\end{proof}
\subsubsection{Calculation of $d_{min}$ of $\mathcal{S}^{e}_1 + \mathcal{S}^{e}_2$}
The set $\mathcal{S}^{e}_1 + \mathcal{S}^{e}_2$ is as given in \eqref{subset_2}.
\begin{figure*}
\begin{equation}
\label{subset_2}
\mathcal{S}^{e}_1 + \mathcal{S}^{e}_2 = \left\lbrace x(n) + x'(n') ~ | ~ n = 2m \mbox{ and } n' = 2m' ~\forall ~ m, m' = 0 \mbox{ to } M/2 -1 \right\rbrace. 
\end{equation}
\hrule
\end{figure*}
The elements of $\mathcal{S}^{e}_1 + \mathcal{S}^{e}_2$ are of the form $x(n) + x'(n + m)$ and $x(n) + x'(n - m - 1)$ where $n$, $n + m$ and $n - m -1$ take even values. When $m$ is odd, note that $n + m$ is odd and $n - m - 1$ is even. Hence, the set $\mathcal{S}^{e}_1 + \mathcal{S}^{e}_2$ will have no points on $O^{m}$ and $M/2$ points on $I^{m}$. Similarly, when $m$ is even, $\mathcal{S}^{e}_1 + \mathcal{S}^{e}_2$ has no points on $I^{m}$ and $M/2$ points on $O^{m}$.\\
Therefore, $\mathcal{S}^{e}_1 + \mathcal{S}^{e}_2$ will have points on the following set of circles $\left\lbrace I^{M/2 - 1}, O^{M/2 - 2}, I^{M/2-3}, \cdots , I^{1}, O^{0} \right\rbrace$. Since $n$ takes only even values, using observation 1) of Proposition \ref{prop5}, $\phi_{min}$ between the points of $\mathcal{S}^{e}_1 + \mathcal{S}^{e}_2$ for a given $m$ is $\frac{4\pi}{M}$. Hence the points of $\mathcal{S}^{e}_1 + \mathcal{S}^{e}_2$ are maximally separated on every circle. From observation 4) of Proposition \ref{prop5}, $\phi_{min}$ between the points in $I^{q}$ and $O^{q-1}$ is $\frac{2\pi}{M}$ for all $q = 1$ to $M/2 - 1$. Similarly $\phi_{min}$ between the points in $O^{q}$ and $I^{q-1}$ is $0$ for all $q = 2$ to $M/2 - 2$. It can be verified that the structure of $\mathcal{S}^{o}_1 + \mathcal{S}^{o}_2$ is identical to that of $\mathcal{S}^{e}_1 + \mathcal{S}^{e}_2$.
\begin{proposition}
\label{prop8}
For $M \geq 8$, $r(I^{q})$ satisfies the following inequality for $q = 1$ to $M/2 - 3$,
\begin{equation*}
d(r(I^{q}), r(I^{q})e^{i\frac{2\pi}{M}}) \geq 2r(I^{M/2-1})\mbox{sin}(\frac{2\pi}{M}).
\end{equation*}
\end{proposition}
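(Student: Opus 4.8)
The plan is to recognize the left-hand side as a chord length and then to reduce the claim to a single trigonometric inequality evaluated at the worst value of $q$. Since the two points $r(I^{q})$ and $r(I^{q})e^{i\frac{2\pi}{M}}$ lie on the circle $I^{q}$ of radius $r(I^{q})$ and are separated by the angle $\frac{2\pi}{M}$, the chord length is $d(r(I^{q}), r(I^{q})e^{i\frac{2\pi}{M}}) = 2\,r(I^{q})\sin\!\left(\frac{\pi}{M}\right)$. Using the double-angle identity $\sin\!\left(\frac{2\pi}{M}\right) = 2\sin\!\left(\frac{\pi}{M}\right)\cos\!\left(\frac{\pi}{M}\right)$ on the right-hand side and cancelling the common positive factor $2\sin\!\left(\frac{\pi}{M}\right)$, the assertion becomes equivalent to
\[
r(I^{q}) \;\geq\; 2\,r(I^{M/2-1})\cos\!\left(\frac{\pi}{M}\right), \qquad q = 1, \ldots, M/2-3.
\]

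First I would substitute the explicit radii $r(I^{q}) = 2\cos\!\left(\frac{\pi(q+1)}{M} - \frac{\theta}{2}\right)$ and $r(I^{M/2-1}) = 2\cos\!\left(\frac{\pi}{2} - \frac{\theta}{2}\right) = 2\sin\!\left(\frac{\theta}{2}\right)$ from Section \ref{sec3_subsec1}. Because the argument $\frac{\pi(q+1)}{M} - \frac{\theta}{2}$ increases with $q$ while remaining inside $\left(0, \frac{\pi}{2}\right)$ for every $q$ in the range $1 \le q \le M/2-3$ when $M \ge 8$ and $0 < \theta < \frac{\pi}{M}$, the radius $r(I^{q})$ is decreasing in $q$; since the right-hand side does not depend on $q$, it suffices to verify the inequality at the largest index $q = M/2-3$. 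There $r(I^{M/2-3}) = 2\cos\!\left(\frac{\pi}{2} - \frac{2\pi}{M} - \frac{\theta}{2}\right) = 2\sin\!\left(\frac{2\pi}{M} + \frac{\theta}{2}\right)$, so the claim reduces to
\[
\sin\!\left(\frac{2\pi}{M} + \frac{\theta}{2}\right) \;\geq\; 2\sin\!\left(\frac{\theta}{2}\right)\cos\!\left(\frac{\pi}{M}\right).
\]

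To finish, I would rewrite the right-hand side with the product-to-sum identity as $2\sin\!\left(\frac{\theta}{2}\right)\cos\!\left(\frac{\pi}{M}\right) = \sin\!\left(\frac{\theta}{2} + \frac{\pi}{M}\right) + \sin\!\left(\frac{\theta}{2} - \frac{\pi}{M}\right)$. Since $0 < \theta < \frac{\pi}{M}$ forces $\frac{\theta}{2} - \frac{\pi}{M} < 0$, the second summand is negative, so the right-hand side is strictly less than $\sin\!\left(\frac{\pi}{M} + \frac{\theta}{2}\right)$. Finally, both $\frac{2\pi}{M} + \frac{\theta}{2}$ and $\frac{\pi}{M} + \frac{\theta}{2}$ lie in $\left(0, \frac{\pi}{2}\right)$ for $M \ge 8$, where $\sin$ is increasing, and the former exceeds the latter; hence $\sin\!\left(\frac{2\pi}{M} + \frac{\theta}{2}\right) > \sin\!\left(\frac{\pi}{M} + \frac{\theta}{2}\right)$, which dominates the right-hand side and closes the chain. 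This argument mirrors the proof of Proposition \ref{prop7} with the circles $O^{q}$ replaced by $I^{q}$. The only delicate point, and the one I would verify most carefully, is the angle bookkeeping that keeps every argument inside $\left(0, \frac{\pi}{2}\right)$ so that monotonicity of $\sin$ and $\cos$ may be invoked; this is precisely where the hypothesis $M \ge 8$ (which also guarantees the index range is nonempty) enters.
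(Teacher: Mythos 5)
Your proof is correct and takes essentially the same route as the paper: both express the chord as $2r(I^{q})\sin(\frac{\pi}{M})$ with $r(I^{q}) = 2\cos\left(\frac{\pi(q+1)}{M} - \frac{\theta}{2}\right)$, reduce to the worst index $q = M/2-3$ via monotonicity of the cosine on $\left(0,\frac{\pi}{2}\right)$, and close with elementary trigonometric bounds. The only cosmetic difference is in the final step: the paper bounds $\cos\left(\frac{\pi}{2} - \frac{2\pi}{M} - \frac{\theta}{2}\right) \geq \sin\left(\frac{2\pi}{M}\right)$ and then uses $\sin\left(\frac{\theta}{2}\right) \leq \sin\left(\frac{\pi}{M}\right)$, whereas you cancel the common factor $\sin\left(\frac{\pi}{M}\right)$ via the double-angle formula and finish with a product-to-sum identity.
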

\begin{proof}
We know that $2r(I^{M/2-1})\mbox{sin}(\frac{2\pi}{M}) = 4\mbox{sin}(\frac{\theta}{2})\mbox{sin}(\frac{2\pi}{M})$ and $d(r(I^{q}), r(I^{q})e^{i\frac{2\pi}{M}}) = 4\mbox{cos}(\frac{\pi(q + 1)}{M} - \frac{\theta}{2})\mbox{sin}(\frac{\pi}{M})$. Since $q \leq M/2 - 3$, $\frac{\pi(q + 1)}{M} - \frac{\theta}{2} \leq \frac{\pi}{2} - \frac{2\pi}{M} - \frac{\theta}{2} \leq \frac{\pi}{2}$. Hence, for all values of $\theta$ and $1 \leq q \leq M/2 - 3$, $\mbox{cos}(\frac{\pi}{2} - \frac{2\pi}{M} - \frac{\theta}{2}) \geq \mbox{cos}(\frac{\pi}{2} - \frac{2\pi}{M}) = \mbox{sin}(\frac{2\pi}{M})$. Therefore,
\begin{equation*}
\frac{2r(I^{M/2-1})\mbox{sin}(\frac{2\pi}{M})}{d(r(I^{q}), r(I^{q})e^{i\frac{2\pi}{M}})} \leq \frac{\mbox{sin}(\frac{\theta}{2})}{\mbox{sin}(\frac{\pi}{M})} \leq 1.
\end{equation*} 
This completes the proof.
\end{proof}
\begin{proposition}
\label{prop9}
$r(O^{q-1})$ and $r(I^{q})$ satisfies the following inequality for $q = M/2 - 1$,
\begin{equation*}
d(r(O^{q-1}), r(I^{q})) \geq 2r(I^{M/2-1})\mbox{sin}(\frac{2\pi}{M}).
\end{equation*}
\end{proposition}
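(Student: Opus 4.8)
The plan is to first pin down the scalar quantity on the left and then reduce the claimed bound to one elementary trigonometric inequality. Since $r(O^{q-1})$ and $r(I^{q})$ are radii, i.e. nonnegative reals regarded as points on the positive real axis, and since the ordering of $\mathcal{R}$ gives $r(O^{M/2-2}) \geq r(I^{M/2-1})$, for $q = M/2-1$ the left-hand side is simply the radial gap $d(r(O^{q-1}), r(I^{q})) = r(O^{M/2-2}) - r(I^{M/2-1})$. (This radial gap is in fact a lower bound on the distance between any point of $O^{M/2-2}$ and any point of $I^{M/2-1}$, which is why it is the relevant quantity for the subsequent $d^{ee}_{min}$ computation.) Using $\cos\left(\frac{\pi}{2} - x\right) = \sin x$, I would record the two radii in the convenient form $r(O^{M/2-2}) = 2\sin\left(\frac{2\pi}{M} - \frac{\theta}{2}\right)$ and $r(I^{M/2-1}) = 2\sin\left(\frac{\theta}{2}\right)$.

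Next I would simplify both sides. Applying the sum-to-product identity to the left-hand side gives
\begin{equation*}
r(O^{M/2-2}) - r(I^{M/2-1}) = 4\cos\left(\frac{\pi}{M}\right)\sin\left(\frac{\pi}{M} - \frac{\theta}{2}\right),
\end{equation*}
while on the right-hand side the double-angle identity $\sin\left(\frac{2\pi}{M}\right) = 2\sin\left(\frac{\pi}{M}\right)\cos\left(\frac{\pi}{M}\right)$ turns $2r(I^{M/2-1})\sin\left(\frac{2\pi}{M}\right)$ into $8\sin\left(\frac{\theta}{2}\right)\sin\left(\frac{\pi}{M}\right)\cos\left(\frac{\pi}{M}\right)$. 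Since $\cos\left(\frac{\pi}{M}\right) > 0$, cancelling the common positive factor $4\cos\left(\frac{\pi}{M}\right)$ reduces the proposition to the single inequality
\begin{equation*}
\sin\left(\frac{\pi}{M} - \frac{\theta}{2}\right) \geq 2\sin\left(\frac{\theta}{2}\right)\sin\left(\frac{\pi}{M}\right).
\end{equation*}

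Finally I would establish this reduced inequality by monotonicity rather than calculus. For $0 < \theta < \frac{\pi}{M}$ the angle $\frac{\pi}{M} - \frac{\theta}{2}$ lies in $\left(\frac{\pi}{2M}, \frac{\pi}{M}\right) \subset \left(0, \frac{\pi}{2}\right)$, where $\sin$ is increasing, so the left-hand side exceeds $\sin\left(\frac{\pi}{2M}\right)$; likewise $\frac{\theta}{2} < \frac{\pi}{2M}$ gives $\sin\left(\frac{\theta}{2}\right) < \sin\left(\frac{\pi}{2M}\right)$, so the right-hand side is strictly less than $2\sin\left(\frac{\pi}{2M}\right)\sin\left(\frac{\pi}{M}\right)$. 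Chaining these, it suffices that $2\sin\left(\frac{\pi}{2M}\right)\sin\left(\frac{\pi}{M}\right) \leq \sin\left(\frac{\pi}{2M}\right)$, i.e. $\sin\left(\frac{\pi}{M}\right) \leq \frac{1}{2}$. This is exactly where the regime enters: the bound holds precisely when $M \geq 6$, hence for all the signal sets of interest ($M = 2^{r}$ with $M \geq 8$), matching the hypothesis of Propositions \ref{prop7} and \ref{prop8}; the residual small cases $M = 2, 4$ would be checked by direct substitution. I expect this threshold step to be the only delicate point, since the inequality genuinely degrades as $M$ shrinks and $\theta \to \frac{\pi}{M}$.
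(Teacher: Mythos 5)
Your proof is correct, and it supplies something the paper does not: the paper's entire ``proof'' of this proposition is the single sentence that the inequality is straightforward to establish using standard trigonometric identities, with no computation shown. Your argument is a complete execution of exactly that computation --- writing $r(O^{M/2-2}) = 2\sin\bigl(\frac{2\pi}{M}-\frac{\theta}{2}\bigr)$ and $r(I^{M/2-1}) = 2\sin\bigl(\frac{\theta}{2}\bigr)$, identifying the left-hand side as the radial gap, collapsing it by sum-to-product into $4\cos\bigl(\frac{\pi}{M}\bigr)\sin\bigl(\frac{\pi}{M}-\frac{\theta}{2}\bigr)$, cancelling the factor $4\cos\bigl(\frac{\pi}{M}\bigr)$ against the double-angle form of the right-hand side, and closing by monotonicity of $\sin$ on $\bigl(0,\frac{\pi}{2}\bigr)$ --- and every step checks out. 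The most valuable part of your write-up is the threshold observation, which the paper's one-liner conceals: the proposition as literally stated (it carries no hypothesis on $M$) is in fact \emph{false} for $M=4$. There it reduces to $\cos(\theta/2) \geq 3\sin(\theta/2)$, i.e.\ $\tan(\theta/2)\leq \frac{1}{3}$, which fails for $\theta$ near the top of the admissible interval $\bigl(0,\frac{\pi}{4}\bigr)$; your criterion $\sin\bigl(\frac{\pi}{M}\bigr)\leq\frac{1}{2}$, equivalently $M\geq 6$, marks the exact boundary. So the hypothesis $M\geq 8$ appearing in Propositions \ref{prop7} and \ref{prop8} must implicitly be imported into this proposition as well (and hence into Lemma 2, whose proof invokes it); your proof makes that dependence explicit where the paper never does. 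One small correction to your closing remark: for $M=2$ the statement is vacuous rather than checkable, since $O^{q-1}$ with $q-1=-1$ does not exist, so ``check $M=2,4$ by direct substitution'' should really read ``$M=2$ is vacuous and $M=4$ is a genuine counterexample.''
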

\begin{proof}
The inequality is straightforward to prove using standard trigonometric identities.
\end{proof}
\begin{lemma}
For $0 < \theta < \frac{\pi}{M}$, the minimum distance of the sets $\mathcal{S}^{ee}_{sum}$ and $\mathcal{S}^{oo}_{sum}$ is
\begin{equation}
\label{min_dist_set_2}
d^{ee}_{min} = d^{oo}_{min} =  4\mbox{sin}\left( \frac{\theta}{2}\right)\mbox{sin}\left(\frac{2\pi}{M}\right) .
\end{equation}
\end{lemma}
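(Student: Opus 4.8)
The plan is to show that the minimum distance of $\mathcal{S}^{ee}_{sum}$ is attained between two adjacent points on the innermost occupied circle, and that no other pair of points can be closer. First I would record that the innermost circle is $I^{M/2-1}$: its radius $r(I^{M/2-1}) = 2\cos\!\left(\frac{\pi (M/2)}{M} - \frac{\theta}{2}\right) = 2\sin\!\left(\frac{\theta}{2}\right)$ is the least element of $\mathcal{R}$, and $I^{M/2-1}$ carries points of $\mathcal{S}^{ee}_{sum}$. By observation 1) of Proposition~\ref{prop5} consecutive points on this circle are separated by the angle $\frac{4\pi}{M}$, so their Euclidean distance is $2\,r(I^{M/2-1})\sin\!\left(\frac{2\pi}{M}\right) = 4\sin\!\left(\frac{\theta}{2}\right)\sin\!\left(\frac{2\pi}{M}\right)$. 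This exhibits a pair realising the claimed value, and it remains to prove that every pair of distinct points of $\mathcal{S}^{ee}_{sum}$ is at least this far apart.

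Next I would dispose of the same-circle pairs. Two points on a common occupied circle $C$ are separated by at least $\frac{4\pi}{M}$, so their distance is $2\,r(C)\sin\!\left(\frac{2\pi}{M}\right) \geq 2\,r(I^{M/2-1})\sin\!\left(\frac{2\pi}{M}\right)$, because $I^{M/2-1}$ has the smallest radius of all the circles. For pairs on two distinct occupied circles I would argue that it suffices to treat consecutive occupied circles from the list $\left\{I^{M/2-1}, O^{M/2-2}, I^{M/2-3}, \dots, O^{0}\right\}$: the distance between points on two concentric circles is non-decreasing in both the radial gap and the angular separation, and by the radius ordering together with Propositions~\ref{prop2}--\ref{prop4} any non-consecutive pair is dominated by a consecutive one and hence has a strictly larger distance.

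For the consecutive pairs two cases arise, distinguished by the minimum angular separations computed from Proposition~\ref{prop5}. For a pair $(O^{q}, I^{q-1})$ with $q$ even the minimum angular separation is $0$, so its shortest distance is the radial gap $r(I^{q-1}) - r(O^{q})$; by Proposition~\ref{prop4} these gaps form an increasing sequence whose least value $r(I^{1}) - r(O^{2}) = 4\sin\!\left(\frac{2\pi}{M}\right)\sin\!\left(\frac{\theta}{2}\right)$ already equals the target. For a pair $(I^{q}, O^{q-1})$ with $q$ odd the minimum angular separation is $\frac{2\pi}{M}$; writing $r = r(I^{q})$ for the inner radius and $R = r(O^{q-1}) \geq r$ for the outer one, the identity $d^{2} = r^{2} + R^{2} - 2rR\cos\!\left(\frac{2\pi}{M}\right) \geq 2r^{2}\!\left(1 - \cos\!\left(\frac{2\pi}{M}\right)\right)$ shows the inter-circle distance dominates $d\!\left(r(I^{q}), r(I^{q})e^{i\frac{2\pi}{M}}\right)$, which Proposition~\ref{prop8} bounds below by $2\,r(I^{M/2-1})\sin\!\left(\frac{2\pi}{M}\right)$ for $q = 1$ to $M/2-3$.

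The single consecutive pair left untreated is $(I^{M/2-1}, O^{M/2-2})$, and this is the step I expect to be the main obstacle. Here the inner circle is $I^{M/2-1}$ itself, so the reduction of the previous paragraph only yields $2\,r(I^{M/2-1})\sin\!\left(\frac{\pi}{M}\right)$, which is strictly below the target and therefore useless. Instead I would lower bound this pair by its radial gap $d\!\left(r(O^{M/2-2}), r(I^{M/2-1})\right)$ --- always a valid lower bound for the distance between points on two concentric circles regardless of their angular separation --- and invoke Proposition~\ref{prop9}, stated precisely for $q = M/2-1$, which gives $d\!\left(r(O^{M/2-2}), r(I^{M/2-1})\right) \geq 2\,r(I^{M/2-1})\sin\!\left(\frac{2\pi}{M}\right)$. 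Combining the same-circle bound with the three consecutive-circle bounds shows no pair is closer than $4\sin\!\left(\frac{\theta}{2}\right)\sin\!\left(\frac{2\pi}{M}\right)$; together with the pair exhibited on $I^{M/2-1}$ this gives $d^{ee}_{min} = 4\sin\!\left(\frac{\theta}{2}\right)\sin\!\left(\frac{2\pi}{M}\right)$, and $d^{oo}_{min}$ equals the same value since $\mathcal{S}^{oo}_{sum}$ has the identical structure to $\mathcal{S}^{ee}_{sum}$.
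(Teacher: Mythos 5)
Your proposal is correct and follows essentially the same route as the paper's proof: the same candidate distance from adjacent points on $I^{M/2-1}$, the same split of consecutive-circle pairs by minimum angular separation ($0$ versus $\frac{2\pi}{M}$) handled via Propositions \ref{prop4}, \ref{prop8} and \ref{prop9}, including the same radial-gap treatment of the exceptional pair $(I^{M/2-1}, O^{M/2-2})$ that the paper covers in \eqref{eq2_sec3}. The only cosmetic difference is that you replace the paper's isosceles-trapezoid argument (the analogue of Proposition \ref{prop6}) with a direct law-of-cosines inequality, which proves the same intermediate bound.
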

\begin{proof}
See Appendix \ref{proof_lemma_2}.
\end{proof}
\subsection{Optimality of Ungerboeck partitioning}
\label{sec3_subsec3}
In the preceding subsection, $d_{min}$ values of each one of the sets of $\mathcal{A}$ induced by Ungerboeck partition on $\mathcal{S}_{1}$ and $\mathcal{S}_{2}$ have been computed in \eqref{min_dist_set_1} and \eqref{min_dist_set_2}. Note that when $\theta = \frac{\pi}{M}$, $d^{ee}_{min} = d^{eo}_{min}$. In this subsection, using these values, we show that for $\theta = \frac{\pi}{M}$, a non-Ungerboeck partition on $\mathcal{S}_{1}$ and $\mathcal{S}_{2}$ results in a set $\mathcal{A}$ such that the $d_{min}$ of at least one of the sets in $\mathcal{A}$ is lesser than $d^{ee}_{min}$.
\begin{theorem}
\label{ungerboeck_theorem}
For $\theta = \frac{\pi}{M}$, Ungerboeck partitioning on $\mathcal{S}_{1}$ and $\mathcal{S}_{2}$ in to two sets is optimal in maximizing the minimum of the $d_{min}$ values of the sets in $\mathcal{A}$.
\end{theorem}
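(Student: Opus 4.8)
The plan is to exploit the special geometry at $\theta = \frac{\pi}{M}$, where (as established in Section \ref{sec3_subsec1}) each pair of circles $O^{m}, I^{m}$ coincides into a single circle $C_{m}$ of radius $2\cos\left(\frac{\pi(2m+1)}{2M}\right)$ carrying $2M$ equispaced points, so that $\mathcal{S}_{sum}$ becomes $\frac{M}{2}$ concentric $2M$-point PSK signal sets. The key observation is that the value attained by Ungerboeck partitioning, namely $V := d^{ee}_{min} = 4\sin\left(\frac{\theta}{2}\right)\sin\left(\frac{2\pi}{M}\right)$ from \eqref{min_dist_set_2}, is exactly the chord of the \emph{innermost} circle $C_{M/2-1}$ (whose radius is $2\sin\frac{\theta}{2}$) subtending an angle of $\frac{4\pi}{M}$, i.e. spanning four of its elementary $\frac{\pi}{M}$-steps. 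Since $d^{ee}_{min} = d^{eo}_{min}$ when $\theta = \frac{\pi}{M}$, the Ungerboeck partition makes all four sets of $\mathcal{A}$ have minimum distance exactly $V$, so $\min_{\mathcal{A}} d_{min} = V$ for it. It therefore suffices to prove that \emph{every} equal-size partition of $(\mathcal{S}_1, \mathcal{S}_2)$ whose four sum-sets all satisfy $d_{min} \geq V$ is necessarily the Ungerboeck partition.

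First I would reduce the entire problem to the innermost circle $C := C_{M/2-1}$. Any equal partition of $\mathcal{S}_1$ and $\mathcal{S}_2$ induces a colouring of the $2M$ points of $C$ by the four labels of $\mathcal{A}$. As these points are equidistant from the origin, the distance between two of them is a strictly increasing function of their angular gap over the relevant range $\big(0, \frac{4\pi}{M}\big]$ (note $\frac{4\pi}{M} \leq \pi$ for $M \geq 4$); hence two equally-coloured points separated by fewer than four $\frac{\pi}{M}$-steps would be closer than $V$. Thus, if all four sets satisfy $d_{min} \geq V$, each colour class on $C$ has all cyclic gaps $\geq 4$, and so contains at most $\frac{2M}{4} = \frac{M}{2}$ points. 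Since the four classes partition the $2M$ points, each must contain exactly $\frac{M}{2}$ points with all gaps equal to $4$; equivalently, each colour class is precisely one residue class modulo $4$ of the $2M$ positions.

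It then remains to decode this residue structure into a constraint on the partitions of $\mathcal{S}_1$ and $\mathcal{S}_2$. Using the phases from Section \ref{sec3_subsec1} together with Proposition \ref{prop5}, the $O^{M/2-1}$-points $x(j) + x'(j + \frac{M}{2} - 1)$ and the $I^{M/2-1}$-points interleave, the former occupying the even positions $2j$ on $C$ and the latter the odd positions. Consequently the two residue classes $0$ and $2 \pmod 4$ consist exactly of the $O$-points with $j$ even and with $j$ odd respectively. Monochromaticity of each of these two classes forces the $\mathcal{S}_1$-membership of $x(j)$ to depend only on the parity of $j$; since each part has size $\frac{M}{2}$ and can be neither all nor none of the indices, the partition of $\mathcal{S}_1$ must be into even- and odd-indexed points, i.e. the Ungerboeck partition, and the companion constraint on the $x_2$-index forces the same for $\mathcal{S}_2$. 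Hence any non-Ungerboeck partition violates $d_{min} \geq V$ on at least one set of $\mathcal{A}$, which proves that Ungerboeck partitioning maximizes $\min_{\mathcal{A}} d_{min}$.

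I expect the main obstacle to be the rigidity argument of the second and third steps: one must verify carefully that the packing bound on $C$ is tight, so that it forces \emph{exact} residue classes rather than merely bounding cardinalities, and that the interleaving of the $O$- and $I$-points translates monochromaticity on $C$ precisely into the even/odd structure on the index sets, independently of the parity of $\frac{M}{2}$. All the supporting distance and angular-separation facts are already available from Proposition \ref{prop5} and \eqref{min_dist_set_1}--\eqref{min_dist_set_2}, so beyond this combinatorial rigidity the remaining verifications are routine trigonometry.
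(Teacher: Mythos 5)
Your proposal is correct, and its combinatorial core is genuinely different from the paper's. Both arguments share the same setup: reduce to the innermost circle (where $O^{m}=I^{m}$ at $\theta=\pi/M$, so it carries $2M$ equispaced points), observe that the chord subtending $4\pi/M$ there equals $d^{ee}_{min}$, and kill the case where some set of $\mathcal{A}$ has more than $M/2$ points on that circle by a cardinality pigeonhole. From there, the paper argues in the contrapositive and locally: a non-Ungerboeck partition places two adjacent points $x(a),x(a+1)$ in the same part; summing them with a common point $x'(a+M/2)$ of the other user yields two points of one colour class on the innermost circle at angular separation $\pi(M\pm 1)/M$, and a two-arc pigeonhole on the remaining $M/2-2$ points of that class then forces some angular gap below $4\pi/M$. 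You argue directly and globally: compliance $d_{min}\geq V$ for all four sets forces every colour class on the circle to have all cyclic gaps at least four $\frac{\pi}{M}$-steps, the exact count $2M=4\cdot\frac{M}{2}$ forces all gaps to equal four (so the classes are precisely the residue classes mod $4$ of the positions), and the interleaving of $O$- and $I$-points decodes this into the parity (Ungerboeck) structure of the index sets of both users simultaneously. Your route buys more than the paper's: it establishes uniqueness of the maximizing partition and replaces the paper's somewhat loosely stated case analysis ``(i)/(ii)'' with a tight rigidity argument, at the cost of being longer. Two points you should make explicit in a full write-up: first, unique decodability at $\theta=\pi/M$ (all $M^{2}$ sum points are distinct because the $O$- and $I$-points interleave), which is needed both for the four sets of $\mathcal{A}$ to induce a well-defined colouring of the circle and for monochromaticity to split into separate membership constraints on $\mathcal{S}_{1}$ and on $\mathcal{S}_{2}$; second, the decoding step that $\left\lbrace j+M/2-1 : j \mbox{ even} \right\rbrace$ is a full parity class of indices regardless of the parity of $M/2$, which is exactly the verification you flagged and is indeed routine.
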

\begin{proof}
Let $\mathcal{S}^{1}_{i}$ and $\mathcal{S}^{2}_{i}$ be the two sets resulting from a partition of $\mathcal{S}_{i}$ for $i = 1, 2$. If either $\mathcal{S}_{1}$ or $\mathcal{S}_{2}$ is not Ungerboeck partitioned, then it is shown that, $d_{min}$ of at least one of the sets in the set $\mathcal{A} = \left\lbrace \mathcal{S}^{i}_{1} + \mathcal{S}^{j}_{2} ~ |~ \forall ~ i, j = 1, 2 \right\rbrace$ is lesser than $4\mbox{sin}(\frac{\pi}{2M})\mbox{sin}(\frac{2\pi}{M})$ (we have substituted $\theta = \frac{\pi}{M}$ in \eqref{min_dist_set_2}). In other words, $\phi_{min}$ between the points of at least one of sets in $\mathcal{A}$ is smaller than $\frac{4\pi}{M}$ on $O^{M/2 -1}$ (since $\theta = \frac{\pi}{M}$, note that $O^{m}$ = $I^{m}$ for all $m$). It is assumed that there are exactly $M/2$ points on $O^{M/2 - 1}$ in each set of $\mathcal{A}$. Otherwise, at least one set contains more than $M/2$ points and hence $\phi_{min}$ between the points in that set on $O^{M/2 - 1}$ can be at most $\frac{3\pi}{M}$. Therefore, the sub-optimality of the partition is proved.  Without loss of generality, assume that $x(a), x(a+ 1) \in \mathcal{S}^{1}_{1}$ for some $a$ such that $0 \leq a \leq M-2$. Irrespective of the partition of $\mathcal{S}_{2}$, either $x'(a+ \frac{M}{2}) \in \mathcal{S}^{1}_{2}$ or $x'(a+ \frac{M}{2}) \in \mathcal{S}^{2}_{2}$. We assume that $x'(a+ \frac{M}{2}) \in \mathcal{S}^{1}_{2}$. Hence, $x'(a+ \frac{M}{2})$ + $x(a)$, $x'(a+ \frac{M}{2})$ + $x(a + 1) \in \mathcal{S}^{1}_{1}$ + $\mathcal{S}^{1}_{2}$ which lie on $O^{M/2 - 1}$. Note that, $x'(a+ \frac{M}{2})$ + $x(a)$ and $x'(a+ \frac{M}{2})$ + $x(a + 1)$ have an angular separation of $\frac{(N + 1)\pi}{N}$. With this, the rest of the $M/2 - 2$ points get placed on $O^{M/2 - 1}$ in one of the following two ways : (i) at least $M/4 - 1$ points appear between $0$ and -$\frac{\pi(M - 1)}{M}$ radians due to which at most $M/4 - 1$ points lie between $0$ to $\frac{\pi(M + 1)}{M}$ radians (ii) at least $M/4$ points appear between $0$ and $\frac{\pi(M + 1)}{M}$ radians due to which at most $M/4 - 2$ points lie between $0$ and -$\frac{\pi(M - 1)}{M}$ radians. For the case in (i), at least $M/4$ points have to appear from $0$ and -$\frac{\pi(M - 1)}{M}$ radians, due to which $\phi_{min}$ between them can at most be $\frac{4\pi (M - 1)}{M^2}$ which is less than $\frac{4\pi}{M}$. Similarly, for the case in (ii), it can be shown that $\phi_{min}$ between the set of points which lie from $0$ to $\frac{\pi(M + 1)}{M}$ radians is lesser than $\frac{4\pi}{M}$. This completes the proof.
\end{proof}

\indent For $M$-PSK signal sets, when $\theta \neq \frac{\pi}{M}$, the optimal partitioning on $\mathcal{S}_{1}$ and $\mathcal{S}_{2}$ is not known. However, we present an example wherein for a particular value of $\theta$, a non-Ungerboeck partition on $\mathcal{S}_{1}$ and $\mathcal{S}_{2}$ results in a set $\mathcal{A}$ such that the $d_{min}$ of all the sets in $\mathcal{A}$ is larger than min ($d^{ee}_{min}$, $d^{eo}_{min}$).

\begin{example}
\label{counter_example_labelling}
$\mathcal{S}_{1}$ is a uniform 8-PSK signal set with $\theta = \frac{\pi}{25}$. The partition of $\mathcal{S}_{1}$ and $\mathcal{S}_{2}$ are given by,
\begin{equation*}
\mathcal{S}^{1}_{1} = \left\lbrace x(1), x(2), x(4), x(6) \right\rbrace,    
\end{equation*}
\begin{equation*}
\mathcal{S}^{1}_{2} = \left\lbrace x'(1), x'(4), x'(5), x'(8) \right\rbrace,
\end{equation*}
\begin{equation*}
\mathcal{S}^{2}_{1} = \left\lbrace x(3), x(5), x(7), x(8) \right\rbrace  \mbox{ and } 
\end{equation*}
\begin{equation*}
\mathcal{S}^{2}_{2} = \left\lbrace x'(2), x'(3), x'(6), x'(7) \right\rbrace.\\
\end{equation*}

\end{example}

\indent Nevertheless, in the following theorem, we show that, for some class of partitions, (for any $\theta \in \left(0, \frac{\pi}{M} \right)$), minimum Euclidean distance of at least one of the sets in $\mathcal{A}$ is lesser than min ($d^{ee}_{min}$, $d^{eo}_{min}$).
\begin{theorem}
For $\theta \in \left(0, \frac{\pi}{M} \right)$, if the partition of $\mathcal{S}_{1}$ and $\mathcal{S}_{2}$ are such that $x(a), x(a + 1)$ and $x(a +2) \in \mathcal{S}^{j}_{1}$ for some $j = 1, 2$ and $0 \leq a \leq M-3$, then the $d_{min}$ of at least one of the sets in $\mathcal{A}$ is lesser than min ($d^{ee}_{min}$, $d^{eo}_{min}$).
\end{theorem}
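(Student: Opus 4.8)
The plan is to exhibit two distinct points lying in a common set $\mathcal{S}^{j}_{1} + \mathcal{S}^{k}_{2} \in \mathcal{A}$ whose Euclidean distance is strictly smaller than $\min(d^{ee}_{min}, d^{eo}_{min})$. By \eqref{min_dist_set_1} and \eqref{min_dist_set_2}, the governing small distances of $\mathcal{S}_{sum}$ are carried by the two innermost circles, namely $I^{M/2-1}$ with $r(I^{M/2-1}) = 2\sin(\theta/2)$ and $O^{M/2-1}$ with $r(O^{M/2-1}) = 2\sin(\pi/M - \theta/2)$; in particular $d^{ee}_{min} = 4\sin(\theta/2)\sin(2\pi/M)$ is realized by two points of $I^{M/2-1}$ at angular separation $4\pi/M$. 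Hence it suffices to force some set of $\mathcal{A}$ to contain two points of one of these circles that are only $2\pi/M$ apart, since such a pair is strictly closer than $d^{ee}_{min}$.

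First I would record how the three given symbols sit on $I^{M/2-1}$. By the parametrization of \eqref{sum_alph_alt}, the index-$n$ point of $I^{M/2-1}$ is $x(n) + x'(n - M/2)$, and by observation 1) of Proposition \ref{prop5} two such points whose indices differ by one are separated by exactly $2\pi/M$. Since $x(a), x(a+1), x(a+2) \in \mathcal{S}^{j}_{1}$, the three $I^{M/2-1}$-points indexed by $a, a+1, a+2$ have their $\mathcal{S}_1$-component in $\mathcal{S}^{j}_{1}$ and their $\mathcal{S}_2$-components equal to the three consecutive symbols $x'(a - M/2), x'(a+1 - M/2), x'(a+2 - M/2)$. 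Applying the pigeonhole principle to these three symbols and the two blocks $\mathcal{S}^{1}_{2}, \mathcal{S}^{2}_{2}$, two of them lie in a common block $\mathcal{S}^{k}_{2}$. If the coinciding pair is adjacent (indices differing by one), the corresponding sum-points lie in $\mathcal{S}^{j}_{1} + \mathcal{S}^{k}_{2}$ on $I^{M/2-1}$ at separation $2\pi/M$, so their distance is $2\,r(I^{M/2-1})\sin(\pi/M) = 4\sin(\theta/2)\sin(\pi/M)$. This is $< d^{ee}_{min}$ because $\sin(\pi/M) < \sin(2\pi/M)$, and it is $< d^{eo}_{min}$ as well: by Proposition \ref{prop6} and the radius ordering, $d(r(I^{q-1}), r(O^{q})e^{i2\pi/M}) \ge 2\,r(O^{M/2-1})\sin(\pi/M) = 4\sin(\pi/M - \theta/2)\sin(\pi/M)$, which exceeds $4\sin(\theta/2)\sin(\pi/M)$ since $\theta/2 < \pi/M - \theta/2$, while a short computation with $0 < \theta < \pi/M$ gives $4\sin(\theta/2)\sin(\pi/M) < 4\sin(\pi/M - \theta/2)$, the other term in \eqref{min_dist_set_1}. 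Thus this case already violates $\min(d^{ee}_{min}, d^{eo}_{min})$.

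The main obstacle is the remaining case, in which the coinciding $\mathcal{S}_2$-pair is the non-adjacent one, i.e. the labels of $x'(a - M/2), x'(a+1 - M/2), x'(a+2 - M/2)$ alternate. Then the three points on $I^{M/2-1}$ only yield a common-set pair at separation $4\pi/M$, which merely re-attains $d^{ee}_{min}$, so I would pass to the companion circle $O^{M/2-1}$. Its index-$n$ point is $x(n) + x'(n + M/2 - 1)$, so the same symbols $x(a), x(a+1), x(a+2)$ now pair with $x'(a + M/2 - 1), x'(a + M/2), x'(a + M/2 + 1)$; together with the triple above these involve the four consecutive $\mathcal{S}_2$-symbols $x'(b-1), \ldots, x'(b+2)$ with $b = a + M/2$. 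Using the alternation forced in the previous step to locate an adjacent same-block pair among these four, one obtains two points of $\mathcal{S}^{j}_{1} + \mathcal{S}^{k}_{2}$ on $O^{M/2-1}$ at separation $2\pi/M$, of distance $4\sin(\pi/M - \theta/2)\sin(\pi/M)$.

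The crux of the whole proof is then the trigonometric comparison of this last quantity with $d^{ee}_{min} = 8\sin(\theta/2)\cos(\pi/M)\sin(\pi/M)$: the desired strict inequality $4\sin(\pi/M - \theta/2)\sin(\pi/M) < d^{ee}_{min}$ reduces to $\tan(\pi/M) < 3\tan(\theta/2)$, and pushing this through is where the argument is most delicate, since the bound tightens toward equality exactly when $\mathcal{S}_2$ is Ungerboeck-labelled. It is precisely to keep this count on the two small circles under control — forcing either an adjacent same-block pair on $I^{M/2-1}$, or, failing that, the four-symbol window that feeds $O^{M/2-1}$ — that the hypothesis demands three consecutive symbols in a single $\mathcal{S}^{j}_{1}$ rather than merely two.
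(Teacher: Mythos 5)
Your first case is correct and is exactly the paper's argument: an adjacent same-block pair among $x'(a-M/2),\,x'(a+1-M/2),\,x'(a+2-M/2)$ gives two points of one set of $\mathcal{A}$ on $I^{M/2-1}$ at angular separation $\frac{2\pi}{M}$, hence at distance $4\sin(\frac{\theta}{2})\sin(\frac{\pi}{M}) < d^{ee}_{min}$. The gap is in your alternating case, and it is twofold. First, the pigeonhole step fails: if $x'(b),x'(b+1),x'(b+2)$ (with $b=a+M/2$) carry labels $A,B,A$, the fourth symbol $x'(b-1)$ may carry label $B$, so the window is labelled $B,A,B,A$ and contains \emph{no} adjacent same-block pair; your passage to $O^{M/2-1}$ then again yields only pairs at separation $\frac{4\pi}{M}$. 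Second, even in the favourable sub-case where the window does contain an adjacent same-block pair, the distance you obtain on $O^{M/2-1}$, namely $4\sin(\frac{\pi}{M}-\frac{\theta}{2})\sin(\frac{\pi}{M})$, is smaller than $d^{ee}_{min}=8\sin(\frac{\theta}{2})\cos(\frac{\pi}{M})\sin(\frac{\pi}{M})$ precisely when $\tan(\frac{\pi}{M})<3\tan(\frac{\theta}{2})$, as you note --- and this inequality is \emph{false} for all sufficiently small $\theta$ (roughly $\theta<\frac{2\pi}{3M}$), which is a nonempty part of the range $\theta\in\left(0,\frac{\pi}{M}\right)$ covered by the theorem. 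So the step you defer as ``delicate'' is not delicate but unprovable.

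Indeed, no repair of the alternating case can give the strict inequality, because the strict statement fails there: take $\mathcal{S}_{2}$ Ungerboeck-partitioned (labels alternate around the entire circle, so the $B,A,B,A$ window is forced) and $\mathcal{S}_{1}$ containing three consecutive symbols in one block, e.g. $\mathcal{S}^{1}_{1}=\{x(0),x(1),x(2),x(4)\}$ for $M=8$. The only intra-set distances that vanish as $\theta\to 0$ are (i) pairs on $I^{M/2-1}$ and (ii) ``swap'' pairs $\{x(n)+x'(m),\,x(m)+x'(n)\}$, both of which have distance $4\sin(\frac{\theta}{2})\sin(\frac{k\pi}{M})$ with $k$ the index difference; with $\mathcal{S}_{2}$ alternating, any such pair inside a single set of $\mathcal{A}$ must have $k\geq 2$, hence distance $\geq 4\sin(\frac{\theta}{2})\sin(\frac{2\pi}{M})=d^{ee}_{min}$, while all remaining distances stay bounded away from zero. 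Thus for small $\theta$ every set in $\mathcal{A}$ has $d_{min}\geq \min(d^{ee}_{min},d^{eo}_{min})$ and only equality is achievable. For comparison, the paper's own proof handles your first case identically, but in the alternating case it merely exhibits the pair $x(a)+x'(a-M/2)$, $x(a+2)+x'(a+2-M/2)$ on $I^{M/2-1}$ at separation $\frac{4\pi}{M}$, whose distance \emph{equals} $d^{ee}_{min}$; so the paper too establishes only the non-strict conclusion (no improvement over Ungerboeck partitioning) in that case, and your instinct that something more was needed there was right --- it is the theorem's strict inequality, not your argument alone, that breaks down.
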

\begin{proof}
Assume $x(a),  x(a + 1)$ and $x(a +2) \in \mathcal{S}^{1}_{1}$. Among the three points, $x'(a - M/2), x'(a + 1- M/2)$ and $x'(a +2 - M/2),$ two of them must belong to either $\mathcal{S}^{1}_{2}$ or $\mathcal{S}^{2}_{2}$. Without loss of generality, assume two of them belong to $\mathcal{S}^{1}_{2}$. If $x'(a - M/2), x'(a + 1- M/2) \in \mathcal{S}^{1}_{2}$, then $x(a) + x'(a - M/2)$, $x(a + 1) + x'(a + 1 - M/2) \in \mathcal{S}^{1}_{1} + \mathcal{S}^{1}_{2}$ such that the two points lie on $I^{M/2 - 1}$ with an angular separation of $\frac{2\pi}{M}$. Hence the distance between the two points is lesser than $d^{ee}_{min}$. Same result can be proved if $x'(a + 1 - M/2), x'(a + 2 - M/2) \in \mathcal{S}^{1}_{2}$. Finally, if $x'(a - M/2), x'(a + 2 - M/2) \in \mathcal{S}^{1}_{2}$, then $x(a) + x'(a - M/2)$, $x(a + 2) + x'(a + 2 - M/2) \in \mathcal{S}^{1}_{1} + \mathcal{S}^{1}_{2}$ lies on $I^{M/2 - 1}$ with an angular separation of $\frac{4\pi}{M}$.
\end{proof}
\subsection{TCM with $M$-PAM signal sets}
\label{sec3_subsec4}
In the previous subsection, a systematic method of labelling the trellis pair $(T_{1}$, $T_{2})$ has been obtained when $M$-PSK signal sets (with $\theta^{*} = \frac{\pi}{M}$) are employed by both the users. In this subsection, we consider designing TCM schemes with $M$-PAM signal sets for both the users. For such a set-up, using the metric presented in Theorem \ref{thm}, it can be verified that $\theta^{*} = \frac{\pi}{2} ~ \forall M$ and for all values of SNR. Recall that, when $M$-PSK signal sets are employed, $\mathcal{S}_{sum}$ takes the structure of concentric PSK signal sets. However, when $M$-PAM signal sets are used, $\mathcal{S}_{sum}$ is a regular $M^{2}$-QAM (since $\forall \mbox{SNR}, \theta^{*} = \frac{\pi}{2}$). In this set-up, for a chosen trellis pair, the destination sees the corresponding sum trellis, $T_{sum}$ labelled with symbols from a $M^{2}$-QAM signal set. If the destination decodes for every $l$ channel uses and $\textbf{x}_{1} \in \mathcal{S}_{1}^{l}$ ($\mathcal{S}_{1}$ = $M$-PAM signal set) and $\textbf{x}_{2} \in \mathcal{S}_{2}^{l}$ $(\mathcal{S}_{2} = e^{\frac{i\pi}{2}}\mathcal{S}_{1}$) represent the codewords of User-1 and User-2 respectively, the received sequence at the destination is given by $\textbf{y} = \textbf{x}_{sum} + \textbf{n}$ where $\textbf{x}_{sum} = \textbf{x}_{1} + \textbf{x}_{2} \in \mathcal{S}_{sum}^{l}$ (where $\mathcal{S}_{sum} = M^{2}$-QAM) and $\textbf{n} \sim \mathcal{CN} \left(0, \sigma^{2}\textbf{I}_{l}\right)$. The decoding metric is given by,
\begin{equation*}
\hat{\textbf{x}}_{sum} = \mbox{arg} \min_{\textbf{x}_{sum}} ||\textbf{y} - \textbf{x}_{sum}||^{2}.
\end{equation*}
Since $\textbf{y}_{I}$ and $\textbf{y}_{Q}$ respectively are dependent on $\textbf{x}_{1}$ and $\textbf{x}_{2}$ only (where $\textbf{y}_{I}$ and $\textbf{y}_{Q}$ denote the in-phase and quadrature components of $\textbf{y}$), the above decoding metric splits as follows,
\begin{equation*}
\hat{\textbf{x}}_{1} = \mbox{arg} \min_{\textbf{x}_{1} \in \mathcal{C}_{1}} ||\textbf{y}_{I} - \textbf{x}_{1}||^{2} \mbox{ and } \hat{\textbf{x}}_{2} = \mbox{arg} \min_{\textbf{x}_{2} \in \mathcal{C}_{2}} ||i\textbf{y}_{Q} - \textbf{x}_{2}||^{2}.
\end{equation*}
Therefore, the destination can decode for a sequence over $M$-PAM alphabet on the individual trellises $T_{1}$ and $T_{2}$ instead of decoding for a sequence over $M^{2}$ QAM alphabet on $T_{sum}$. Since, decoding for the symbols of one user is independent of the decoding for the symbols of the other, trellises $T_{1}$ and $T_{2}$ has to be labelled based on Ungerboeck rules as done for a SISO-AWGN channel. Hence, all the TCM based trellis codes with $M$-PAM alphabets existing for SISO AWGN are applicable in the two-user GMAC setup. With this, the decoding complexity at the destination is significantly reduced as the state complexity profile of the trellis over which the decoder works is $\left\lbrace q_{i,0}, ~q_{i,1}, \cdots ~q_{i,n} \right\rbrace$ (when decoding for User-i) instead of $\left\lbrace q_{1,0}q_{2,0}, ~q_{1,1}q_{2,1}, \cdots ~q_{1,n}q_{2,n} \right\rbrace$. In general, when a complex signal set is used by either one of the users, the destination has to necessarily decode for a sequence over $\mathcal{S}_{sum}$ on $T_{sum}$ which has high decoding complexity.\\
\indent From the above discussion, it is clear that for a two-user GMAC, one dimensional signal sets can be preferred over complex signal sets for reducing the decoding complexity. However, it is not clear if there is any loss in the CC sum capacity by using single dimensional signal sets. As a first step towards answering the above question, in Fig. \ref{sum_alphabet_capacity_qam_pam}, we have plotted the sum CC capacity (i.e. $R_{1} + R_{2}$) as a function of SNR for two scenarios; (i) when QPSK signal sets are used with angles of rotation as given in Table \ref{rotation_table1} and (ii) when 4-PAM signal sets are used with $\theta^{*} = \frac{\pi}{2}$. For both the scenarios, average energy per symbol per user is made the same. As shown in the plot, there is a marginal difference in the CC sum capacity between the two schemes and in particular, at high SNR the sum capacity of the later scheme is larger than the former. Therefore, using $4$-PAM signal sets provide reduced decoding complexity with \textit{almost} the same CC sum capacity as that of QPSK signal sets. Similar curves have been obtained in Fig. \ref{sum_alphabet_capacity_qam_bpsk} for the following two scenarios (i) when User-1 and User-2 uses QPSK and BPSK signal set respectively (with appropriate angle of rotation) and (ii) when User-1 uses 4-PAM signal set, User-2 uses BPSK with $\theta^{*} = \frac{\pi}{2}$.
\begin{figure}
\centering
\includegraphics[width=3.5in]{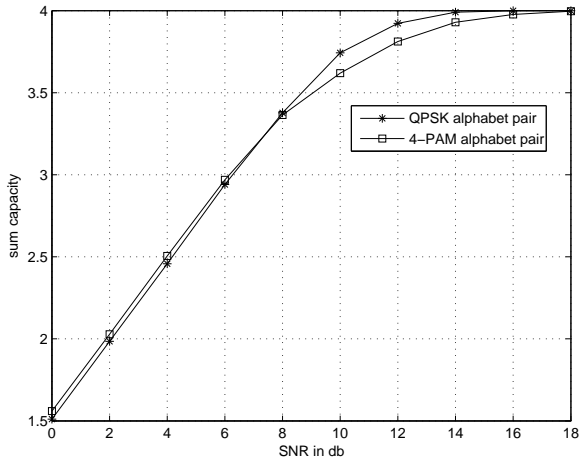}
\caption{Capacity of the sum alphabet of QPSK signal sets and 4-PAM signal sets with optimal rotation.} 
\label{sum_alphabet_capacity_qam_pam}
\end{figure}
\begin{figure}
\centering
\includegraphics[width=3.5in]{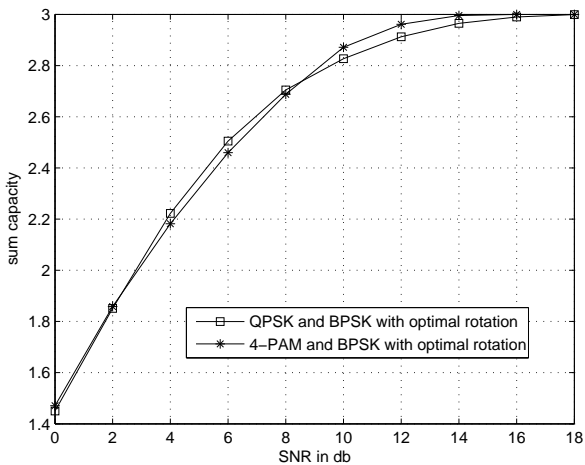}
\caption{Capacity of the sum alphabet of QPSK/BSK signal sets and 4-PAM/BPSK signal sets with optimal rotation.} 
\label{sum_alphabet_capacity_qam_bpsk}
\end{figure}

\indent For arbitrary values of $M$, we conjecture that $M$-PAM signal sets (with a relative rotation of $\frac{\pi}{2}$) provide sum capacities which are marginally close to that of $M$-PSK and $M$-QAM signal set pairs (with appropriate rotation) with the same average energy. Note that the above relation can also be observed in a two-user GMAC with Gaussian code alphabets. If $x_{1}, x_{2} \sim \mathcal{CN} \left(0, \frac{\rho}{2}\right)$, the received symbol at the destination is given by $y = x_{1} + x_{2} + n$ where we assume that $n \sim \mathcal{N} \left(0, \frac{1}{2}\right) $ in each dimension. The sum capacity for the above model is
\begin{equation*}
\mbox{log}_{2}(1 + \frac{\rho}{2}) + \mbox{log}_{2}(1 + \frac{\rho}{2 + \rho}) = \mbox{log}_{2}(1 + \rho).
\end{equation*}
Note that, the capacities for User-1 and User-2 are $\mbox{log}_{2}(1 + \frac{\rho}{2})$ and $\mbox{log}_{2}(1 + \frac{\rho}{2 + \rho})$ respectively.  However, if $x_{1}\sim \mathcal{N} \left(0, \frac{\rho}{2}\right)$ and $x_{2} = ix_{2}'$ such that $x_{2}' \sim \mathcal{CN} \left(0, \frac{\rho}{2}\right)$, the capacity in each dimension is
\begin{equation*} 
\frac{1}{2}\mbox{log}_{2}(1 + \rho)
\end{equation*}
and hence the sum capacity is $\mbox{log}_{2}(1 + \rho)$ which is equal to the sum capacity of complex Gaussian alphabets. For the later scheme, the capacity for User-1 and User-2 is $\frac{1}{2}\mbox{log}_{2}(1 + \rho)$. When the individual capacities for each user are compared between the two schemes, it is clear that, for one of the users, the capacity will be larger in the later scheme and smaller in the former scheme whereas for the other user, capacity will be larger in the former scheme and smaller in the later scheme there by making the sum capacity of both the schemes equal. Hence, the capacity region of real Gaussian alphabets (with $\theta = \frac{\pi}{2}$) lies inside the capacity region of complex Gaussian alphabets with only one point of intersection.\\
\indent In a SISO AWGN channel, it is well known that, for a given SNR, one dimensional signal sets incur some loss in the CC capacity when compared to \textit{well packed} complex signal sets having the same average energy and equal number of points. Note that, the CC capacity of individual signal sets, $\mathcal{S}_{1}$ and $\mathcal{S}_{2}$ are of little importance in the GMAC set-up, since for an input alphabet pair $(\mathcal{S}_{1}, \mathcal{S}_{2})$, the destination sees an equivalent AWGN channel with the corresponding $\mathcal{S}_{sum}$ as its input (neither $\mathcal{S}_{1}$ nor $\mathcal{S}_{2}$).  Hence, in order to maximize the sum capacity, the alphabet pair $(\mathcal{S}_{1}, \mathcal{S}_{2})$ has to be chosen such that CC capacity of $\mathcal{S}_{sum}$ is maximized. Since we have shown that, for a given SNR, the sum capacity of $4$-PAM alphabet pair is marginally close to that of a QPSK alphabet pair, we conjecture that for any $M$, $M$-PAM alphabet pairs (with $\theta^{*} = \frac{\pi}{2}$) do not incur significant loss in the sum capacity when compared to $M$-PSK and $M$-QAM alphabet pairs in a two-user GMAC.
\subsubsection{Examples and Numerical results}
\begin{figure}
\centering
\includegraphics[width=2in]{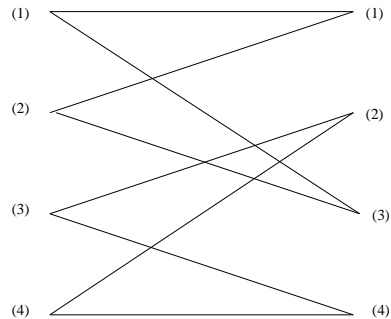}
\caption{Trellis structure employed by both the users.} 
\label{four_state_trellis}
\end{figure}
\begin{figure}
\centering
\includegraphics[width=2.5in]{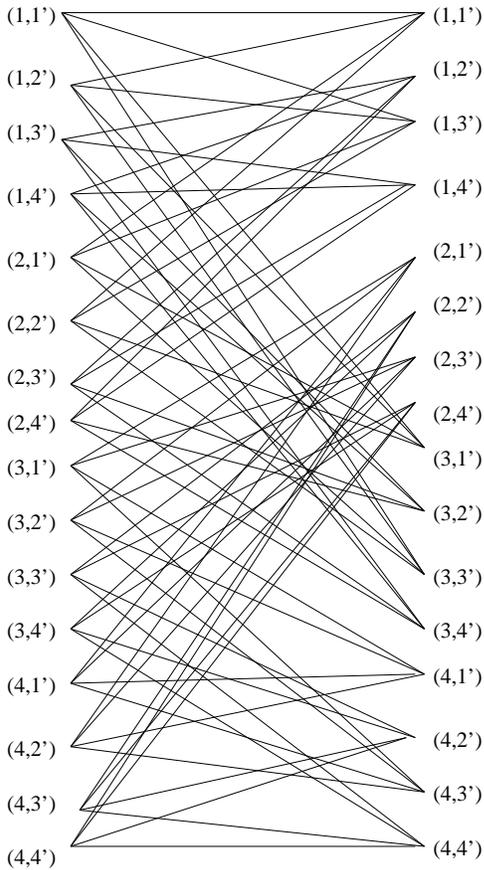}
\caption{$T_{sum}$ for the trellis pair presented in Fig. \ref{four_state_trellis}.} 
\label{four_state_sum_trellis}
\end{figure}
In this subsection, we present numerical results on the minimum accumulated squared Euclidean distance, $d^{2}_{free, min}$ of $T_{sum}$ when the trellis presented in Fig. \ref{four_state_trellis} is employed by both the users using QPSK and $4$-PAM signal sets. For the trellis in Fig. \ref{four_state_trellis}, $T_{sum}$ is as shown in Fig. \ref{four_state_sum_trellis}. We compute $d^{2}_{free, min}$ for the following two scenarios (i) when the individual trellises are labelled with unit energy QPSK signal sets (with an angle of rotation $\frac{\pi}{4}$) using the Ungerboeck rules. (ii) when the trellises are labelled with $4$-PAM signal sets, $\sqrt(\frac{1}{5})\left\lbrace -3, -1, 1, 3 \right\rbrace$ (with $\theta^{*} = 90^{o}$) using Ungerboeck rules. For scenario (i), $d^{2}_{free, min}$ is 5.8578 where as for scenario (ii), $d^{2}_{free, min}$ is 7.20. Hence, the asymptomatic coding gain of $0.89$ db can be obtained by using $4$-PAM signal sets over QPSK signal sets. When the alphabets in the scenarios discussed are used on the trellis pair presented in Fig. \ref{trp}, the corresponding asymptotic coding gain is $0.57$ db.
\section{Space time coding for two-user MIMO-MAC}
\label{sec4}
In this section, we introduce a two-user MIMO (Multiple Input Multiple Output) MAC model and propose construction of two different classes STBC pairs with certain nice properties. In the following subsection, the MIMO-MAC model considered in this paper has been described.
\subsection{Channel model of two-user MIMO-MAC}
\label{sec4_subsec1}
The two-user MIMO-MAC model considered in this paper consists of two sources each equipped with $N_{t}$ antennas and a destination equipped with a single antenna. The channel from the $i$-th antenna of the $j$-th user to the destination is a quasi-static block fading channel denoted by $h_{ji}$  $\forall ~i = 1$ to $N_{t}$ and $j = 1, 2$ where each $h_{ji} \sim \mathcal{CN} \left(0, 1\right)$ with the coherence time interval of at least $l$ channel uses. For each $j$, if $\textbf{x}_{j} \in \mathbb{C}^{1 \times N_{t}}$ is the vector transmitted by User-$j$ such that every symbol of $\textbf{x}_{j}$ has average unit energy, the received symbol at the destination for every channel use is given by  
\begin{equation}
\label{mimo_mac_channel}
y = \sqrt{\frac{\rho}{2N_{t}}}\textbf{x}_{1}\textbf{h}_{1} + \sqrt{\frac{\rho}{2N_{t}}}\textbf{x}_{2}\textbf{h}_{2} + n,
\end{equation}
where $n \sim \mathcal{CN} \left(0, 1\right)$ is the additive noise at the destination, $\textbf{h}_{j}^{T} = \left[h_{j1} ~ h_{j2} \cdots h_{jN_{t}} \right]$ and $\rho$ is the average receive SNR at the destination. Throughout the paper, we assume the perfect knowledge of CSI (Channel State Information) at the destination which is commonly referred as CSIR. The two-user MIMO-MAC model described above is referred as a $(N_{t}, N_{t}, 1)$ MIMO-MAC. It is clear that the sum capacity of a $(N_{t}, N_{t}, 1)$ MIMO-MAC is equal to the capacity of a $2N_{t} \times 1$ MIMO channel (with CSIR) which is given by
\begin{equation}
\label{sum_capacity_fading _mac_channel}
C(N_{t}, N_{t}, 1) = E\left[ \mbox{log}_{2}\left( 1 + \frac{\rho}{2N_{t}}\left( \textbf{h}_{1}\textbf{h}_{1}^{H} + \textbf{h}_{2}\textbf{h}_{2}^{H}\right) \right) \right] 
\end{equation}
where the expectation is over the random variables $|h_{ji}|^{2}~ \forall ~i,j$. We also assume the perfect knowledge of the phase component of $h_{ji}$ at the $j$-th user $\forall i, j$ which we refer as CSIT-P. The $(N_{t}, N_{t}, 1)$ MIMO-MAC with the assumption of CSIT-P is referred as the $(N_{t}, N_{t}, 1, \textit{p})$ MIMO-MAC where $\textit{p}$ highlights the assumption of CSIT-P in the channel model. Note that, we do not assume the complete knowledge of $h_{ji}$ at the transmitters in which case, optimal power allocation techniques can be applied to improve the system performance. Since CSIT-P is known, each transmit antenna can compensate for the rotation introduced by the channel and hence the channel equation \eqref{mimo_mac_channel} can be written as
\begin{equation}
\label{mimo_mac_channel_csitp}
y = \sqrt{\frac{\rho}{2N_{t}}}\textbf{x}_{1}\tilde{\textbf{h}}_{1} + \sqrt{\frac{\rho}{2N_{t}}}\textbf{x}_{2}\tilde{\textbf{h}}_{2} + n,
\end{equation} 
where $\tilde{\textbf{h}}_{j}^{T} = \left[|h_{j1}| ~ |h_{j2}| \cdots |h_{jN_{t}}| \right]$. Suppose, $C(N_{t}, N_{t}, 1, \textit{p})$ denotes the sum capacity of a $(N_{t}, N_{t}, 1, \textit{p})$ MIMO-MAC with CSIR, it is straightforward to verify that $C(N_{t}, N_{t}, 1, \textit{p})$ = $C(N_{t}, N_{t}, 1)$. In the rest of this paper, a $(N_{t}, N_{t},1, \textit{p})$ MIMO-MAC is denoted as a $N_{t}$-MIMO-MAC.\\
\indent The sum capacity of a $N_{t}$-MIMO-MAC (which is given by $C(N_{t}, N_{t}, 1, p)$) is computed by assuming that independent vectors are transmitted every time instant from both the users. However, when a Space Time Block Code (STBC) pair $(\mathcal{C}_{1}, \mathcal{C}_{2})$ is employed, the vectors transmitted at every time instant will not be independent. Let the dimensions of the STBC used by both the users be $l \times N_{t}$ (where $l$ denotes the number of complex channel uses). Throughout the paper, we assume that STBCs for both the users have the same dimensions. If the $l \times N_{t}$ matrices transmitted by User-1 and User-2 are $\textbf{X}$ and $\textbf{Y}$ respectively, then the received vector, $\textbf{y} \in \mathbb{C}^{l}$ is given by
\begin{equation}
\label{MIMO_chanel_with_STBC}
\textbf{y} = \sqrt{\frac{\rho}{2N_{t}}}\textbf{X}\tilde{\textbf{h}}_{1} + \sqrt{\frac{\rho}{2N_{t}}}\textbf{Y}\tilde{\textbf{h}}_{2} + \textbf{n},
\end{equation}
where $\textbf{n}$ denoted the complex $l \times 1$ additive noise vector. If the STBCs used are of rate $R$ complex symbols per channel use, then there are $lR$ independent complex variables for each user describing the corresponding matrix. Let the vector containing $lR$ variables of $\textbf{X}$ and $\textbf{Y}$ be denoted by $\textbf{x} \in \mathbb{C}^{lR \times 1}$ and $\textbf{y} \in \mathbb{C}^{lR \times 1}$ respectively. Totally, there are $2lR$ independent variables denoted by $\textbf{z} \in \mathbb{C}^{2lR \times 1}$ where $\textbf{z} = \left[ \textbf{x}^{T} ~ \textbf{y}^{T}\right]^{T}$. If $\textbf{X}$ and $\textbf{Y}$ are from linear designs, we can write \eqref{MIMO_chanel_with_STBC} as given below \cite{HaH}
\begin{equation}
\textbf{y} = \sqrt{\frac{\rho}{2N_{t}}}\tilde{\textbf{H}}\textbf{z} +  \textbf{n},
\end{equation} 
where $\tilde{\textbf{H}} \in \mathbb{C}^{l \times lR}$. The capacity of this new channel, $\tilde{\textbf{H}}$ is the capacity of a collocated MIMO channel with $lR$ transmit antennas and $l$ receive antennas given by
\begin{equation*}
E~\left[\mbox{log}_{2}\left(\mbox{det}\left( \textbf{I}_{l} + \frac{\rho}{2N_{t}}\hat{\textbf{H}}\hat{\textbf{H}}^{H}\right)\right)\right].  
\end{equation*}
Therefore, after introducing the STBC pair $\left(\mathcal{C}_{1}, \mathcal{C}_{2}\right)$, the maximum mutual information between the vector $\textbf{z}$ and $\textbf{y}$, $I(\textbf{z} : \textbf{y}~|~\tilde{\textbf{H}})$ is given by
\begin{equation*}
C_{\mbox{\begin{small}STBC\end{small}}}(N_{t}, N_{t}, 1, \textit{p}) = \frac{1}{l}E~\left[\mbox{log}_{2}\left(\mbox{det}\left( \textbf{I}_{l} + \frac{\rho}{2N_{t}}\hat{\textbf{H}}\hat{\textbf{H}}^{H}\right)\right)\right]  
\end{equation*}
where the factor $\frac{1}{l}$ takes care of the rate loss due to coding across time. It is clear that the above value cannot be more than $C(N_{t}, N_{t}, 1, p)$. On the similar lines of the definition of information lossless STBCs for collocated MIMO channels \cite{SRS}, information lossless STBC pairs are defined below for a $N_{t}$-MIMO-MAC.
\begin{definition}
If the maximum mutual information, $I(\textbf{z} : \textbf{y}~|~\tilde{\textbf{H}})$ when an STBC pair $(\mathcal{C}_{1}, \mathcal{C}_{2})$ is used for a $N_{t}$-MIMO-MAC, is equal to the capacity of a $2N_{t} \times 1$ MIMO channel, then the pair $(\mathcal{C}_{1}, \mathcal{C}_{2})$ is called an information lossless STBC pair.
\end{definition}

\indent In the rest of the section, we propose two classes of STBC pairs from Real Orthogonal Designs (RODs) for a $N_{t}$-MIMO-MAC. For deriving certain properties of the codes that we are going to propose, the following definition and theorem are important.
\begin{definition}
Let the channel equation of a MISO (Multiple Input Single Output) system with $N_{t}$ transmit antennas be represented by $y = \textbf{x}\textbf{h} + n$ where $y$ is the received symbol at the destination, $n$ is the additive noise, $\textbf{h}$ is the $N_{t}$ length channel vector and $\textbf{x}$ is the input vector to the channel of length $N_{t}$. Such a MISO channel is referred as a single-dimensional MISO channel whenever $\textbf{x}, \textbf{h} \in \mathbb{R}^{N_{t}}$.
\end{definition}
\begin{theorem}
\label{rod_IL}
STBCs from the rate-1 ROD (which also includes rate-1 rectangular ROD) for $N_{t}$ antennas are information lossless for a single-dimensional $N_{t} \times 1$ MIMO channel for all values of $N_{t}$ .
\end{theorem}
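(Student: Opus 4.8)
The plan is to collapse the coded transmission into an equivalent real linear channel and then exploit the defining orthogonality of the ROD to show that this equivalent channel has mutually orthogonal columns of equal norm, from which information losslessness follows at once by comparing the relevant mutual informations.

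First I would write the rate-$1$ ROD as $\mathbf{X} = \sum_{i=1}^{l} x_i \mathbf{A}_i$, where $x_1, \ldots, x_l \in \mathbb{R}$ are the information symbols (for a rate-$1$ design the number of real symbols equals the number of channel uses $l$, whether the design is square or rectangular) and the $\mathbf{A}_i$ are the real weight matrices of the design. Passing $\mathbf{X}$ through the single-dimensional MISO channel and stacking the received samples over the $l$ channel uses gives the equivalent model $\mathbf{y} = \sqrt{\tfrac{\rho}{N_t}}\,\mathcal{H}\mathbf{x} + \mathbf{n}$, where $\mathbf{x} = [x_1, \ldots, x_l]^T$, the noise is real Gaussian, and the $i$-th column of $\mathcal{H}$ is $\mathbf{A}_i\mathbf{h}$.

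The key step is to establish that $\mathcal{H}^T\mathcal{H} = \|\mathbf{h}\|^2\,\mathbf{I}_l$. Expanding the defining property $\mathbf{X}^T\mathbf{X} = \bigl(\sum_i x_i^2\bigr)\mathbf{I}_{N_t}$ in terms of the $\mathbf{A}_i$ forces $\mathbf{A}_i^T\mathbf{A}_i = \mathbf{I}_{N_t}$ and $\mathbf{A}_i^T\mathbf{A}_j + \mathbf{A}_j^T\mathbf{A}_i = \mathbf{0}$ for $i \neq j$. The $(i,j)$ entry of $\mathcal{H}^T\mathcal{H}$ is $\mathbf{h}^T\mathbf{A}_i^T\mathbf{A}_j\mathbf{h}$; the diagonal entries equal $\mathbf{h}^T\mathbf{h} = \|\mathbf{h}\|^2$, while every off-diagonal entry vanishes because $\mathbf{A}_i^T\mathbf{A}_j$ is antisymmetric and $\mathbf{h}$ is real, so the quadratic form $\mathbf{h}^T\mathbf{A}_i^T\mathbf{A}_j\mathbf{h}$ is zero. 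This is exactly where the \emph{single-dimensionality} of the channel is indispensable: for a complex channel the quadratic form $\mathbf{h}^H\mathbf{A}_i^T\mathbf{A}_j\mathbf{h}$ of a real antisymmetric matrix need not vanish, and the decoupling would collapse. The identical orthogonality relations hold for the rectangular rate-$1$ RODs, so the argument covers all $N_t$.

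Finally I would compare mutual informations. With i.i.d. Gaussian symbols $x_i$, the equivalent channel delivers $\tfrac{1}{2l}\log_2\det\bigl(\mathbf{I}_l + \tfrac{\rho}{N_t}\mathcal{H}^T\mathcal{H}\bigr)$ per channel use, and substituting $\mathcal{H}^T\mathcal{H} = \|\mathbf{h}\|^2\mathbf{I}_l$ reduces this to $\tfrac{1}{2}\log_2\bigl(1 + \tfrac{\rho}{N_t}\|\mathbf{h}\|^2\bigr)$, which is precisely the capacity of the single-dimensional $N_t \times 1$ channel under i.i.d. Gaussian input with equal power across the $N_t$ antennas. Because the rate-$1$ property incurs no rate loss (the prefactor $k/l = 1$), the two quantities coincide and informational losslessness is proved. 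I expect the main obstacle to be the careful justification of the off-diagonal vanishing together with the bookkeeping of the power normalization, so that the STBC mutual information matches the channel capacity \emph{exactly} and not merely up to a constant factor.
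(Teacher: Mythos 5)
Your proof is correct and takes essentially the same route as the paper's (Appendix E): both collapse the coded transmission into an equivalent $l \times l$ real linear channel, use the orthogonality of the rate-1 ROD to show that the Gram matrix of this equivalent channel equals $\|\mathbf{h}\|^{2}\,\mathbf{I}_{l}$, and conclude that the per-channel-use mutual information is $\tfrac{1}{2}\log_{2}\bigl(1+\tfrac{\rho}{N_{t}}\|\mathbf{h}\|^{2}\bigr)$, i.e.\ the channel capacity. The only difference is notational: you parametrize by symbol-indexed dispersion matrices ($\mathbf{X}=\sum_{i}x_{i}\mathbf{A}_{i}$, equivalent channel columns $\mathbf{A}_{i}\mathbf{h}$) and kill the off-diagonal Gram entries via the vanishing of quadratic forms of antisymmetric matrices, whereas the paper uses the antenna-indexed column-vector representation ($\hat{\mathbf{H}}=\sum_{i}h_{i}\mathbf{A}_{i}$) and the anticommutation relations $\mathbf{A}_{i}\mathbf{A}_{j}^{T}+\mathbf{A}_{j}\mathbf{A}_{i}^{T}=\mathbf{0}$ --- dual descriptions of the very same equivalent channel matrix.
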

\begin{proof}
See Appendix \ref{proof_theorem_rod_IL}.
\end{proof}

\indent Throughout the section, we assume that the destination performs joint decoding of the symbols of User-1 and User-2 by decoding for a $l \times 2N_{t}$ space-time codeword, $\textbf{Z} = \left[ \textbf{X} ~\textbf{Y}\right]$ in a virtual $2N_{t} \times 1$ MIMO channel (where $\left[ \textbf{X} ~\textbf{Y}\right]$ denotes juxtaposing of the matrices $\textbf{X}$ and $\textbf{Y}$). Therefore, applying the full diversity design criterion derived for space-time codes in point to point coherent MIMO channels \cite{TJC} on the set of codewords of the form $\textbf{Z}$, the diversity order of the code pair $(\mathcal{C}_{1}, \mathcal{C}_{2})$ in a $N_{t}$-MIMO-MAC is $N_{t}$ provided each space-time block code $\mathcal{C}_{i}$ is individually fully diverse for a point to point coherent MIMO channel.
\subsection{STBC pairs from Separable Orthogonal Designs (SODs) for a $N_{t}$-MIMO-MAC}
\label{sec4_subsec2}
In this section, we construct  STBC pairs $(\mathcal{C}_{1}, \mathcal{C}_{2})$ for a $N_{t}$-MIMO-MAC such that the ML-decoding complexity at the destination is reduced (where $\mathcal{C}_{1}$ is used by User-1 and $\mathcal{C}_{2}$ is used by User-2). The STBC pair $(\mathcal{C}_{1}, \mathcal{C}_{2})$ is specified by presenting a complex design pair $(\textbf{X}, \textbf{Y})$ and a signal set pair $(\mathcal{S}_{1}, \mathcal{S}_{2})$ such that $\mathcal{C}_{1}$ and $\mathcal{C}_{2}$ are generated by making the variables of $\textbf{X}$ and $\textbf{Y}$ take values from $\mathcal{S}_{1}$ and $\mathcal{S}_{2}$ respectively. In particular, we construct complex design pairs $(\textbf{X}, \textbf{Y})$ using the well known class of RODs. The proposed class of complex designs are introduced in the following definition.
\begin{definition}
\label{def_sod}
Let the $l \times N_{t}$ matrix $\textbf{X}$ represent a ROD in $k$ real variables for $N_{t}$ antennas. If every real variable of $\textbf{X}$ is viewed as a complex variable, then $\textbf{X}$ becomes a design in $k$ complex variables which we refer as a Separable Orthogonal Design (SOD). 
\end{definition}

\indent If a design $\textbf{X}$ represents a SOD, then from Definition \ref{def_sod}, it is clear that $\textbf{X}_{I}$ and  $\textbf{X}_{Q}$ are identical RODs. Also, since rate-1 RODs exist for $\forall N_{t}$, rate-1 SODs (in complex symbols per channel use) also exist for $\forall N_{t}$ \cite{Xl} (except for $N_{t}$ = 2, 4 and 8, note that all other SODs are rectangular designs). Throughout the paper, we only consider the class of rate-1 SODs. In the following example, we present a SOD pair for $4$-MIMO-MAC in $4$ complex variables per user.

\begin{example}
\label{example_saod}
For $N_{t} = 4, k = 4,$
\begin{equation*}
\textbf{X} =\left[\begin{array}{rrrr}
x_{1} & x_{2} & x_{3} & x_{4}\\
-x_{2} & x_{1} & -x_{4} & x_{3}\\
-x_{3} & x_{4} & x_{1} & -x_{2}\\
-x_{4} & -x_{3} & x_{2} & -x_{1}\\
\end{array}\right] \mbox{ and } 
\end{equation*}
\begin{equation*}
\textbf{Y} =\left[\begin{array}{rrrr}
y_{1} & y_{2} & y_{3} & y_{4}\\
-y_{2} & y_{1} & -y_{4} & y_{3}\\
-y_{3} & y_{4} & y_{1} & -y_{2}\\
-y_{4} & -y_{3} & y_{2} & -y_{1}\\
\end{array}\right].
\end{equation*}\\
\end{example}

\indent Towards generating the STBC pair $(\mathcal{C}_{1}, \mathcal{C}_{2})$, we restrict the complex variables of a SOD to take values from the class of regular-QAM signal sets only. The variables are precluded to take values from signal sets where the in-phase and quadrature components are entangled, for example, $M$-PSK signal sets. The advantage of choosing a regular-QAM signal set for $\mathcal{S}_{1}$ and $\mathcal{S}_{2}$ is described in the next subsection. When the SOD pair $(\textbf{X}, \textbf{Y})$ is used, the received vector at the destination is of the form
\begin{equation*}
\textbf{y} = \sqrt{\frac{\rho}{2N_{t}}}\textbf{X}\tilde{\textbf{h}}_{1} + \sqrt{\frac{\rho}{2N_{t}}}\textbf{Y}\tilde{\textbf{h}}_{2} + \textbf{n}.
\end{equation*}
Since the variables of the two designs take values from regular QAM signal sets and the channels $\tilde{\textbf{h}}_{j}$'s are real, the $N_{t}$-MIMO-MAC with STBC pairs from SOD pair $(\textbf{X}, \textbf{Y})$ splits in to two parallel single-dimensional $N_{t}$-MIMO-MACs with STBC pairs from ROD pairs $(\textbf{X}_{I}, \textbf{Y}_{I})$ and $(\textbf{X}_{Q}, \textbf{Y}_{Q})$ respectively. For each $\begin{small}\heartsuit\end{small} = I, Q$, the channel equation is given by
\begin{equation*}
\textbf{y}_{\heartsuit} = \sqrt{\frac{\rho}{2N_{t}}}\textbf{X}_{\heartsuit}\tilde{\textbf{h}}_{1} + \sqrt{\frac{\rho}{2N_{t}}}\textbf{Y}_{\heartsuit}\tilde{\textbf{h}}_{2} + \textbf{n}_{\heartsuit},
\end{equation*}
where $\textbf{n}_{\heartsuit} \sim \mathcal{N} \left(0, \frac{1}{2}\textbf{I}_{T}\right)$. Henceforth, we consider only one of the single-dimensional channels for all the analysis purposes. The following theorem shows that STBC pairs from SODs are information lossless for a $N_{t}$-MIMO-MAC $\forall N_{t}$.
\begin{theorem}
For a $N_{t}$-MIMO-MAC, STBC pairs from the rate-1 SOD pair are information lossless.
\end{theorem}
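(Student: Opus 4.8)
The plan is to reduce the claim to the single-user, single-dimensional information losslessness already secured in Theorem~\ref{rod_IL}, exploiting the two structural features of an SOD pair fed with regular-QAM inputs. First I would invoke the decomposition obtained immediately above the statement: since the variables of $\mathbf{X}$ and $\mathbf{Y}$ take values from regular-QAM signal sets (with independent in-phase and quadrature parts) and the phase-compensated channels $\tilde{\mathbf{h}}_1,\tilde{\mathbf{h}}_2$ are real, the $N_t$-MIMO-MAC separates into two statistically independent single-dimensional $N_t$-MIMO-MACs driven by the ROD pairs $(\mathbf{X}_I,\mathbf{Y}_I)$ and $(\mathbf{X}_Q,\mathbf{Y}_Q)$, each corrupted by noise of variance $\tfrac12$ per real dimension. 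Because the two branches are independent, $I(\mathbf{z}:\mathbf{y}\mid\tilde{\mathbf{H}})$ is the sum of the two per-branch mutual informations, and it suffices to evaluate one branch.

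For a fixed branch $\heartsuit\in\{I,Q\}$ I would linearise the received vector as $\mathbf{y}_\heartsuit=\sqrt{\tfrac{\rho}{2N_t}}\,\hat{\mathbf{H}}\mathbf{z}_\heartsuit+\mathbf{n}_\heartsuit$, where $\hat{\mathbf{H}}=[\mathbf{H}_1~\mathbf{H}_2]$, the block $\mathbf{H}_j$ being the equivalent channel matrix defined through $\mathbf{X}_\heartsuit\tilde{\mathbf{h}}_1=\mathbf{H}_1\mathbf{x}_\heartsuit$ and $\mathbf{Y}_\heartsuit\tilde{\mathbf{h}}_2=\mathbf{H}_2\mathbf{y}_\heartsuit$. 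The key step is the column-orthogonality property of RODs that underlies Theorem~\ref{rod_IL}, namely $\mathbf{H}_j^T\mathbf{H}_j=\|\tilde{\mathbf{h}}_j\|^2\mathbf{I}_l$. Since a rate-$1$ ROD carries as many real variables as time slots, each $\mathbf{H}_j$ is a square $l\times l$ matrix, so that $\mathbf{H}_j\mathbf{H}_j^T=\|\tilde{\mathbf{h}}_j\|^2\mathbf{I}_l$ as well. Hence
\begin{equation*}
\hat{\mathbf{H}}\hat{\mathbf{H}}^T=\mathbf{H}_1\mathbf{H}_1^T+\mathbf{H}_2\mathbf{H}_2^T=\left(\|\tilde{\mathbf{h}}_1\|^2+\|\tilde{\mathbf{h}}_2\|^2\right)\mathbf{I}_l=\left(\mathbf{h}_1\mathbf{h}_1^H+\mathbf{h}_2\mathbf{h}_2^H\right)\mathbf{I}_l.
\end{equation*}

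With $\hat{\mathbf{H}}\hat{\mathbf{H}}^T$ proportional to $\mathbf{I}_l$ and the unit-energy symbols splitting equally between the two real dimensions, the per-branch mutual information is $\tfrac{1}{2l}\log_2\det\!\big(\mathbf{I}_l+\tfrac{\rho}{2N_t}(\mathbf{h}_1\mathbf{h}_1^H+\mathbf{h}_2\mathbf{h}_2^H)\mathbf{I}_l\big)=\tfrac12\log_2\!\big(1+\tfrac{\rho}{2N_t}(\mathbf{h}_1\mathbf{h}_1^H+\mathbf{h}_2\mathbf{h}_2^H)\big)$, the factor $l$ from the determinant cancelling the $1/l$ in the prefactor and leaving the one-half that reflects the real dimensionality. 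Adding the identical $I$ and $Q$ contributions and taking the expectation over the fading erases that one-half and reproduces exactly the $2N_t\times1$ MIMO sum capacity $C(N_t,N_t,1)$. Thus $C_{\mathrm{STBC}}(N_t,N_t,1,p)=C(N_t,N_t,1)$ for all $N_t$, which is precisely the information-lossless condition.

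The subtlety I expect to matter is the use of $\hat{\mathbf{H}}\hat{\mathbf{H}}^T$ in place of $\hat{\mathbf{H}}^T\hat{\mathbf{H}}$. The juxtaposed design $[\mathbf{X}~\mathbf{Y}]$ is \emph{not} a rate-$1$ ROD for $2N_t$ antennas: its cross Gram $\mathbf{H}_1^T\mathbf{H}_2$ is generally nonzero, so the combined code is not single-symbol decodable, and Theorem~\ref{rod_IL} cannot be applied verbatim to the merged $2N_t$-antenna code. What saves the argument is that the capacity formula depends only on the $l\times l$ matrix $\hat{\mathbf{H}}\hat{\mathbf{H}}^T$, in which those cross terms never appear---only the diagonal blocks $\mathbf{H}_j\mathbf{H}_j^T$ survive, each a scaled identity by the squareness of $\mathbf{H}_j$. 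I would therefore anchor the entire computation on $\hat{\mathbf{H}}\hat{\mathbf{H}}^T$ and justify $\mathbf{H}_j\mathbf{H}_j^T=\|\tilde{\mathbf{h}}_j\|^2\mathbf{I}_l$ via squareness, since it is this feature, rather than genuine orthogonality of the combined design, that yields information losslessness for every $N_t$.
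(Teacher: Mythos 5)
Your proposal is correct and takes essentially the same route as the paper's own proof: decompose into the in-phase and quadrature single-dimensional MACs, linearize the branch channel as $\mathbf{y}_\heartsuit = \sqrt{\tfrac{\rho}{2N_t}}\,\hat{\mathbf{H}}\mathbf{z}_\heartsuit + \mathbf{n}_\heartsuit$ with $\hat{\mathbf{H}} = [\mathbf{H}_1~\mathbf{H}_2]$, establish $\hat{\mathbf{H}}\hat{\mathbf{H}}^T = \left(\mathbf{h}_1\mathbf{h}_1^H + \mathbf{h}_2\mathbf{h}_2^H\right)\mathbf{I}_l$ (so that the cross Gram $\mathbf{H}_1^T\mathbf{H}_2$ never enters), and add the two identical branch capacities to recover $C(N_t,N_t,1)$. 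The only cosmetic difference is the justification of the block identity $\mathbf{H}_j\mathbf{H}_j^T = \|\tilde{\mathbf{h}}_j\|^2\mathbf{I}_l$: the paper derives it directly from the unitarity and anticommutation relations $\mathbf{A}_i\mathbf{A}_j^T + \mathbf{A}_j\mathbf{A}_i^T = \mathbf{0}$ of the column-representation matrices, whereas you obtain it from the standard column Gram property of RODs together with the squareness of the rate-1 equivalent channel matrix; both arguments are valid.
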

\begin{proof}
Let $\textbf{X}$ and $\textbf{Y}$ represent two $l \times N_{t}$ rate-1 SODs for $N_{t}$ antennas in the variables $x_{1}, x_{2} \cdots x_{l}$ and $y_{1}, y_{2} \cdots y_{l}$ respectively.  Using the above design pair, the channel equation along the in-phase component is
\begin{equation*}
\textbf{y}_{I} = \sqrt{\frac{\rho}{2N_{t}}}\textbf{X}_{I}\tilde{\textbf{h}}_{1} + \sqrt{\frac{\rho}{2N_{t}}}\textbf{Y}_{I}\tilde{\textbf{h}}_{2} + \textbf{n}_{I},
\end{equation*}
where $\textbf{X}_{I}$ and $\textbf{Y}_{I}$ are identical RODs in the variables $x_{1I}, x_{2I} \cdots x_{lI}$ and $y_{1I}, y_{2I} \cdots y_{lI}$ respectively. Note that $\textbf{X}$ and $\textbf{Y}$ have the following column vector representations,

{\small
\begin{equation*}
\textbf{X}_{I} = \left[\textbf{A}_{1}\textbf{x}_{I} ~\textbf{A}_{2}\textbf{x}_{I} ~\cdots ~\textbf{A}_{N_{t}}\textbf{x}_{I}\right]; ~\textbf{Y}_{I} = \left[\textbf{B}_{1}\textbf{y}_{I} ~\textbf{B}_{2}\textbf{y}_{I} ~\cdots ~\textbf{B}_{N_{t}}\textbf{y}_{I}\right]
\end{equation*}
}

\noindent where $\left\lbrace \textbf{A}_{i} ~|~ i = 1 \mbox{ to } N_{t}\right\rbrace$ and $\left\lbrace \textbf{B}_{i} ~|~ i = 1 \mbox{ to } N_{t}\right\rbrace$ are the sets of column vector representation matrices of $\textbf{X}$ and $\textbf{Y}$ respectively and $\textbf{x}_{I}^{T} = \left[x_{1I} ~x_{2I} ~\cdots ~x_{lI} \right]$, $\textbf{y}_{I}^{T} = \left[y_{1I} ~y_{2I} ~\cdots ~y_{lI} \right]$. The channel equation along the in-phase component can also be written as,
\begin{equation}
\label{eq_I_channel}
\textbf{y}_{I} = \sqrt{\frac{\rho}{2N_{t}}}\hat{\textbf{H}}\textbf{z}_{I} + \textbf{n}_{I} \mbox{ where }
\end{equation}
where the $l \times 2N_{t}$ matrix, $\hat{\textbf{H}} = \left[\hat{\textbf{H}_{1}} ~ \hat{\textbf{H}_{2}}\right]$ such that $\hat{\textbf{H}_{1}} =  \sum_{i = 1}^{N_t} |h_{1i}|\textbf{A}_{i}$, $\hat{\textbf{H}_{2}} =  \sum_{i = 1}^{N_t} |h_{2i}|\textbf{B}_{i}$ and $\textbf{z}_{I} = \left[ \textbf{x}_{I}^{T} ~ \textbf{y}_{I}^{T}\right]^{T}$. The capacity of the channel in \eqref{eq_I_channel} is
\begin{equation*}
\frac{1}{2}E~\left[\mbox{log}_{2}\left(\mbox{det}\left( \textbf{I}_{l} + \frac{\rho}{2N_{t}}\hat{\textbf{H}}\hat{\textbf{H}}^{H}\right)\right)\right].
\end{equation*}
Since $\textbf{A}_{i}$'s and $\textbf{B}_{i}$'s are unitary and $\textbf{A}_{i}\textbf{A}_{j}^{T} + \textbf{A}_{j}\textbf{A}_{i}^{T} = \textbf{0}_{T \times T}$, $\textbf{B}_{i}\textbf{B}_{j}^{T} + \textbf{B}_{j}\textbf{B}_{i}^{T} = \textbf{0}_{T \times T}$ $\forall i, j$ such that $i \neq j$, we have $\hat{\textbf{H}}\hat{\textbf{H}}^{H} = \left(\textbf{h}_{1}\textbf{h}_{1}^{H} + \textbf{h}_{2}\textbf{h}_{2}^{H} \right)\textbf{I}_{l}$ and hence the capacity of a single-dimensional $N_{t}$-MIMO-MAC along the in-phase component with the SOD pair, ($\textbf{X}, \textbf{Y}$) is
\begin{equation*}
\frac{1}{2}E~\left[\mbox{log}_{2}\left(1 + \frac{\rho}{2N_{t}}\left(\textbf{h}_{1}\textbf{h}_{1}^{H} + \textbf{h}_{2}\textbf{h}_{2}^{H} \right)\right)\right].  
\end{equation*}
Similarly, the capacity of a single-dimensional $N_{t}$-MIMO-MAC along the quadrature component with the SOD pair, ($\textbf{X}, \textbf{Y}$) is 
\begin{equation*}
\frac{1}{2}E~\left[\mbox{log}_{2}\left(1 + \frac{\rho}{2N_{t}}\left(\textbf{h}_{1}\textbf{h}_{1}^{H} + \textbf{h}_{2}\textbf{h}_{2}^{H} \right)\right)\right].  
\end{equation*}
Therefore, the sum capacity is
\begin{equation*}
E~\left[\mbox{log}_{2}\left(1 + \frac{\rho}{2N_{t}}\left(\textbf{h}_{1}\textbf{h}_{1}^{H} + \textbf{h}_{2}\textbf{h}_{2}^{H} \right)\right)\right]
\end{equation*}
which is equal to $C(N_{t}, N_{t}, 1, p)$. Hence the SOD pair, $(\textbf{X}, \textbf{Y})$ is information lossless for a $N_{t}$-MIMO-MAC.
\end{proof}

\indent In the following subsection, we discuss the low ML decoding property of SODs.
\subsubsection{Low ML decoding complexity of SODs}
\label{sec4_subsec2_subsubsec2}
In this subsection, we show that STBC pairs from SODs are two-group decodable in a $N_{t}$-MIMO-MAC (in particular, we consider rate-1 SODs). For more details on STBCs with multi-group decodability for a collocated MIMO channel, we refer the reader to \cite{KaR2}. Since the designs $\textbf{X}$ and $\textbf{Y}$ are constructed using rate-1 RODs (wherein the number of real variables is equal to the number of channel uses), the destination has to decode a total of $4l$ real variables ($2l$ for each user) for every codeword use. Since a $N_{t}$-MIMO-MAC with STBC pairs from the SOD pair $(\textbf{X}, \textbf{Y})$ breaks down in to two parallel single-dimensional $N_{t}$-MIMO-MACs with STBCs from ROD pairs $(\textbf{X}_{I}, \textbf{Y}_{I})$ and $(\textbf{X}_{Q}, \textbf{Y}_{Q})$ respectively, for each $\heartsuit = I, Q$ the ML-decoding metric is given by
\begin{equation}
\hat{\textbf{X}}_{\heartsuit}, \hat{\textbf{Y}}_{\heartsuit} = \mbox{arg} \min_{\mathcal{C}_{1\heartsuit}, \mathcal{C}_{2\heartsuit}}||\textbf{y} - \sqrt{\frac{\alpha}{2N_{t}}}\textbf{X}_{\heartsuit}\tilde{\textbf{h}}_{1} + \sqrt{\frac{\alpha}{2N_{t}}}\textbf{Y}_{\heartsuit}\tilde{\textbf{h}}_{2}||^{2}.
\end{equation}
Therefore, along each dimension, the destination has to jointly decode only $2l$ real variables ($l$ variables of each user) for every codeword use which constitutes $l$ channel uses. For this set-up, the destination can use a sphere decoder in $\mathbb{R}^{l}$ to decode $l$ of the $2l$ real variables where as the remaining $l$ variables can be decoded using brute force search. Note that when either (i) CSIT-P is not available or (ii) if the users employ signal sets wherein the in-phase and the quadrature components of the complex variables are entangled, the destination has to jointly decode for $4l$ real variables ($2l$ variables of each user) in $\mathbb{R}^{2l}$ and hence the decoding complexity is increased. Note that since CSIT-P is available, the complex signal set used by one of the users can be relatively rotated with respect to the other to improve the performance. However, such rotations will only entangle the in-phase and quadrature components of the symbols there by increasing the decoding complexity as mentioned above. 
\subsection{STBC pairs from Real Orthogonal Designs for a $N_{t}$-MIMO-MAC}
\label{sec4_subsec3}
When STBC pairs from SODs are employed for a $N_{t}$-MIMO-MAC, it is clear that the signal transmitted by User-1 is an interference for User-2 and vice-verse. In this subsection, we propose a new class of STBC pairs from RODs wherein each user is interference free from the other. In the proposed scheme, User-1 employs a rate-$1$ ROD, $\textbf{X}$ for $N_{t}$ antennas and User-2 employs an identical ROD, $\textbf{Y}$. The variables of $\textbf{X}$ take values from a $M$-PAM signal set where as the variables of $\textbf{Y}$ take values from a signal set which is 90 degrees rotated version of signal set used for $\textbf{X}$. In general, both users can use PAM signal sets with different number of points. Since rate-$1$ RODs exist $\forall N_{t}$, the proposed scheme is also applicable for a $N_{t}$-MIMO-MAC $\forall N_{t}$.
\begin{example}
For a $4$-MIMO-MAC, the designs, $\textbf{X}$ and $\textbf{Y}$ are as given in Example \ref{example_saod} where
the variables $x_{1}, x_{2} \cdots x_{4}$ can take values from $\mathcal{S}_{1} = \left\lbrace -3, -1, 1, 3 \right\rbrace$ and $y_{1}, y_{2} \cdots y_{4}$ can take values from $\mathcal{S}_{2} = \left\lbrace -3i, -1i, 1i, 3i \right\rbrace$.
\end{example}

\indent In the proposed scheme, the received vector at the destination is of the form
\begin{equation*}
\textbf{y} = \sqrt{\frac{\rho}{2N_{t}}}\textbf{X}\tilde{\textbf{h}}_{1} + \sqrt{\frac{\rho}{2N_{t}}}\textbf{Y}\tilde{\textbf{h}}_{2} + \textbf{n}.
\end{equation*}
Since $\tilde{\textbf{h}}_{j}$'s are real vectors and the two designs take values from orthogonal signal sets, it is clear that the two users are interference free from each other. With this the $N_{t}$-MIMO-MAC splits in to two parallel MISO channels (one for each user) such that the MISO channel from (i) User-1 to the destination and (ii) User-2 to the destination are given in \eqref{user_1_channel} and \eqref{user_2_channel} respectively.
\begin{equation}
\label{user_1_channel}
\textbf{y}_{I} = \sqrt{\frac{\rho}{2N_{t}}}\textbf{X}\tilde{\textbf{h}}_{1} + \textbf{n}_{I}.
\end{equation}
\begin{equation}
\label{user_2_channel}
i\textbf{y}_{Q} = \sqrt{\frac{\rho}{2N_{t}}}\textbf{Y}\tilde{\textbf{h}}_{2} + i\textbf{n}_{Q}.
\end{equation}
\subsubsection{Capacity of a $N_{t}$-MIMO-MAC with RODs}
\label{sec4_subsec3_subsubsec1}
Note that the channels in \eqref{user_1_channel} and \eqref{user_2_channel} are single-dimensional MISO channels with $\textbf{n}_{I}, \textbf{n}_{Q} \sim \mathcal{N} \left(0, \frac{1}{2}\textbf{I}_{T}\right)$. Hence, the average receive SNR in every dimension is $\rho$. Since the rate-1 ROD for $N_{t}$ antennas is information lossless for a single dimensional $N_{t} \times 1$ MIMO channel (Theorem \ref{rod_IL}), for $j = 1, 2$, the maximum mutual information for User-$j$ is 
\begin{equation*}
\frac{1}{2}E\left[ \mbox{log}_{2}\left( 1 + \frac{\rho}{N_{t}}\textbf{h}_{i}\textbf{h}_{i}^{H}\right)\right].
\end{equation*}
Therefore, with overloading of notations, the sum capacity of the proposed scheme is given by,
\begin{equation}
\label{cap_rod_scheme}
E\left[\mbox{log}_{2}\left( 1 + \frac{\rho}{N_{t}}\textbf{h}_{2}\textbf{h}_{2}^{H}\right)\right]
\end{equation}
which is equal to the capacity of a $N_{t} \times 1$ collocated MIMO channel for an average SNR value of $\rho$.
However, the sum capacity of a $N_{t}$-MIMO-MAC is given in \eqref{sum_capacity_fading _mac_channel} which is equal to the capacity of a $2N_{t} \times 1$ MIMO channel for an average SNR value of of $\rho$.
By comparing \eqref{cap_rod_scheme} with $C(N_{t}, N_{t}, 1, p)$, it is not clear whether the proposed scheme is information lossless or information lossy for a $N_{t}$-MIMO-MAC $\forall N_{t}$. Through simulations, in Fig. \ref{rod_capacity_plot} (shown at the top of the next page), the sum capacity of the proposed scheme is compared with $C(N_{t}, N_{t}, 1, p)$ for $N_{t}$ = 2,  $N_{t}$ = 4 and  $N_{t}$ = 8 respectively at different SNR values. Note that when $N_{t}$ = 2 and 4, the proposed scheme is information lossy by a small margin and the difference in the capacity keeps diminishing as $N_{t}$ increases (See Fig. \ref{rod_capacity_plot} for $N_{t}$ = 8). In particular, using strong law of large numbers, for large values of $N_{t}$, we have
\begin{equation*}
\lim_{N_{t} \to \infty} E\left[\mbox{log}_{2}\left( 1 + \frac{\rho}{N_{t}}\textbf{h}\textbf{h}^{H}\right)\right] = C(N_{t}, N_{t}, 1, p). 
\end{equation*}
and hence the proposed designs are information lossless for large values of $N_{t}$. The above discussion can be summarized in the following theorem,
\begin{theorem}
\label{IL_theorem_rod}
For large values of $N_{t}$, STBC pairs from rate-1 RODs are information lossless for a $N_{t}$-MIMO-MAC.
\end{theorem}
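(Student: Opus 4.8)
The plan is to show that both the sum capacity of the proposed ROD scheme in \eqref{cap_rod_scheme} and the genuine MIMO-MAC sum capacity $C(N_{t}, N_{t}, 1, p)$ converge to the \emph{same} limit as $N_{t} \to \infty$, so that their difference---the information loss---vanishes. Writing $\textbf{h}_{j}\textbf{h}_{j}^{H} = \sum_{i = 1}^{N_{t}}|h_{ji}|^{2}$, both quantities share the common form $E\left[\mbox{log}_{2}\left(1 + \rho\, \xi_{N_{t}}\right)\right]$, where $\xi_{N_{t}}$ is a normalised sum of the $|h_{ji}|^{2}$. For the ROD scheme $\xi_{N_{t}} = \frac{1}{N_{t}}\sum_{i = 1}^{N_{t}}|h_{2i}|^{2}$, whereas for the MIMO-MAC $\xi_{N_{t}} = \frac{1}{2N_{t}}\sum_{i = 1}^{N_{t}}\left(|h_{1i}|^{2} + |h_{2i}|^{2}\right)$. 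Having reduced both capacities to a single template, it suffices to analyse $E\left[\mbox{log}_{2}\left(1 + \rho\, \xi_{N_{t}}\right)\right]$ in each case.

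First I would invoke the strong law of large numbers. Since each $h_{ji} \sim \mathcal{CN}\left(0, 1\right)$, the $|h_{ji}|^{2}$ are i.i.d.\ with unit mean, so in both cases $\xi_{N_{t}} \to 1$ almost surely as $N_{t} \to \infty$. By continuity of $x \mapsto \mbox{log}_{2}(1 + \rho x)$, the integrand in each capacity expression then converges almost surely to $\mbox{log}_{2}(1 + \rho)$.

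The main obstacle is justifying the interchange of the limit and the expectation, since there is no obvious $N_{t}$-independent dominating function for a direct appeal to dominated convergence. I would sidestep this with a two-sided sandwich. For the upper bound, concavity of the logarithm and Jensen's inequality give, for \emph{every} $N_{t}$,
\begin{equation*}
E\left[\mbox{log}_{2}\left(1 + \rho\, \xi_{N_{t}}\right)\right] \leq \mbox{log}_{2}\left(1 + \rho\, E\left[\xi_{N_{t}}\right]\right) = \mbox{log}_{2}(1 + \rho),
\end{equation*}
since $E\left[\xi_{N_{t}}\right] = 1$ in both cases. For the matching lower bound, Fatou's lemma applied to the almost-sure limit yields $\liminf_{N_{t} \to \infty} E\left[\mbox{log}_{2}\left(1 + \rho\, \xi_{N_{t}}\right)\right] \geq E\left[\mbox{log}_{2}(1 + \rho)\right] = \mbox{log}_{2}(1 + \rho)$. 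These two bounds pin the limit of each capacity to $\mbox{log}_{2}(1 + \rho)$.

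Finally, combining the two limits, the sum capacity of the ROD scheme in \eqref{cap_rod_scheme} and $C(N_{t}, N_{t}, 1, p)$ both tend to $\mbox{log}_{2}(1 + \rho)$, so their difference tends to zero; hence STBC pairs from rate-1 RODs are information lossless in the limit of large $N_{t}$. The crux of the argument is entirely in the last paragraph's interchange step, and the cleanest presentation avoids the fragile dominated-convergence route, relying only on the Jensen upper bound (valid for all $N_{t}$) together with the Fatou lower bound.
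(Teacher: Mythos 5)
Your proof is correct and takes essentially the same route as the paper: the paper likewise reduces both \eqref{cap_rod_scheme} and $C(N_{t}, N_{t}, 1, p)$ to expectations of $\log_{2}(1+\rho\,\xi_{N_{t}})$ with $\xi_{N_{t}}$ a normalized sum of i.i.d.\ unit-mean $|h_{ji}|^{2}$ terms, and invokes the strong law of large numbers to conclude that both converge to $\log_{2}(1+\rho)$, so the loss vanishes. The only difference is that the paper asserts the resulting limit of expectations in one line, whereas you rigorously justify the interchange of limit and expectation via the Jensen upper bound and the Fatou lower bound, filling a gap the paper's own argument leaves implicit.
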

\begin{figure*}
\centering
\includegraphics[width=7in]{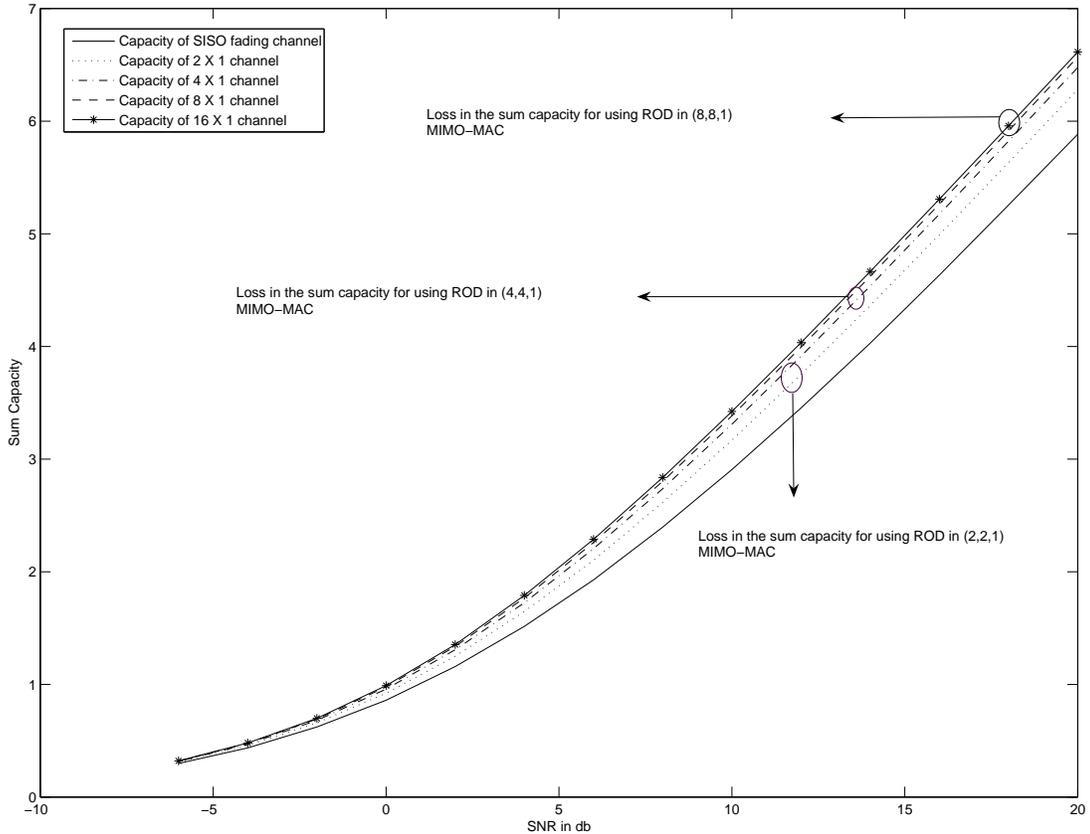}
\caption{Sum capacity of $N_{t}$-MIMO-MAC with RODs in comparison with $C(N_{t}, N_{t}, 1)$ for $N_{t} = 2, 4 ~\mbox{and}~ 8.$} 
\label{rod_capacity_plot}
\end{figure*}
\subsubsection{Minimum decoding complexity}
\label{sec4_subsec3_subsubsec2}
Apart from having the information lossless property for large values of $N_{t}$, the proposed codes also have the single-symbol ML decodable property. From \eqref{user_1_channel} and \eqref{user_2_channel}, the ML-decoding metrics for User-1 and User-2 are respectively given by
\begin{equation*}
\hat{\textbf{X}} = \mbox{arg} \min_{\mathcal{C}_{1}}||\textbf{y} - \sqrt{\frac{\alpha}{2N_{t}}}\textbf{X}\tilde{\textbf{h}}_{1}||^{2},
\end{equation*}
\begin{equation*}
\hat{\textbf{Y}} = \mbox{arg} \min_{\mathcal{C}_{2}}||\textbf{y} - \sqrt{\frac{\alpha}{2N_{t}}}\textbf{Y}\tilde{\textbf{h}}_{2}||^{2}.
\end{equation*}
Since $\tilde{\textbf{h}}_{j}$ are real vectors, and the designs, $\textbf{X}$ and $\textbf{Y}$ are RODs, for each user, every symbol can be decoded independent of the rest of the symbols. For more details on decoding the class of STBCs from RODs, we refer the reader to \cite{TJC}, \cite{Xl}. To the best of our knowledge, this is the first paper that addresses the design of STBC pairs with single symbol decodable property for two-user MIMO-MAC.
\begin{figure}
\centering
\includegraphics[width=3in]{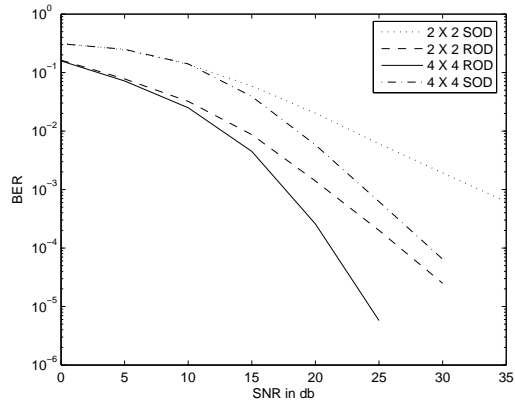}
\caption{BER comparison of STBC pairs from RODs and SODs} 
\label{ber_curves_saod_rod}
\end{figure}
\subsection{Simulation Results}
\label{sec4_subsec4}
In this subsection, we provide simulation results for the performance comparison of STBC pairs from SODs and RODs for a $N_{t}$-MIMO-MAC when $N_{t}$ = $2$ and $4$. We have used the Bit Error Rate (BER) which corresponds to errors in decoding the bits of both the users as error events of interest. For $N_{t} = 2$ and $4$, the rate-1 SOD pairs have been used for simulations. For both the cases the variables of the two users take values from a $4$-QAM signal set (with average energy per symbol being unity). With this, each user transmits $2$ bits per channel use (bpcu). For the second class of STBC pairs, RODs for $2$ and $4$ antennas are used wherein the variables of the design employed by User-1 take values from the 4-PAM signal set, $\sqrt(\frac{1}{5})\left\lbrace  -3, -1, 1, 3\right\rbrace$ whereas for User-2, the variables take values from the set $\sqrt(\frac{1}{5})\left\lbrace  -3i, -1i, 1i, 3i\right\rbrace$. With this, the transmission rate of 2 bpcu-per user is maintained for both the class of codes. For every codeword use, the destination has to decode for 8 bits (4 bits of each user) and 16 bits (8 bits for each user) for $N_{t} = 2$ and $N_{t} = 4$ respectively. BER comparison of the two schemes using the above designs is shown in Fig. \ref{ber_curves_saod_rod} where the plots show that STBC pairs from RODs perform better than the codes from SODs for both $N_{t} = 2$ and $4$. An intuitive reasoning for the above behaviour is that for the class of STBC pairs from RODs, there is no interference among the users. For the STBC pairs based on RODs, each real symbol is decoded in $\mathbb{R}$ whereas for the codes from SODs, decoding is in $\mathbb{R}^{4}$ for $N_{t} = 2$ and $\mathbb{R}^{8}$ for $N_{t} = 4$. 
\section{Discussion}
\label{sec5}
We have computed CC capacity regions of a two-user GMAC and proposed TCM schemes with the class of $M$-PSK signal sets and $M$-PAM signal sets. We have studied designing STBC pairs with low ML decoding complexity for a two-user MIMO-MAC with $N_{t}$ antennas at both the users and single antenna at the destination with the assumption of CSIT-P. Some possible directions for future work are as follows:  
\begin{itemize}
\item As a generalisation to this work, CC capacity/capacity regions for general multi terminal networks needs to be computed since in practice, communication takes place only with finite input alphabets. Also, design of coding schemes achieving rate tuples close to the CC capacity of general multi terminal networks is essential.
\item The set partitioning result presented in this paper can be generalized to the class of $M$-QAM alphabets.
\item In this paper, we have assumed equal average power constraint for both the users. It is clear that if unequal power constraint is considered, then the UD property is naturally attained. For such a setup, optimal labelling rules on the individual trellis has to be proved depending on the ratio of the average power constraints of the two users. It is straightforward to show that when the ratio of the average power constraints of the two users is sufficiently large, then irrespective of the relative angle of rotation between the alphabets, labelling of the individual trellises based on Ungerboeck partitioning is optimal in the sense of maximizing the criteria considered in this paper.
\item For a two-user GMAC, it has been shown that trellis code pairs based on TCM with $M$-PAM alphabet pairs significantly reduce the ML decoding complexity at the destination compared to TCM schemes with complex alphabet pairs. For a $K$-user GMAC with $K > 2$, designing coding schemes with low ML decoding complexity is an interesting direction of future-work.
\item In Section \ref{sec4}, we considered designing STBC pairs with low ML decoding complexity for a two-user MIMO-MAC with $N_{t}$ antennas at both the users and single antenna at the destination with the assumption of CSIT-P. Note that the assumption of CSIT-P has been exploited to obtain STBC pairs with low ML decoding complexity property. However, when the destination has multiple antennas, every transmit antenna of each user views more than one fading channel and hence phase compensation by the users is not possible. Therefore, design of low decoding complexity STBC pairs for such a set-up is not straightforward. In particular, design of low complexity STBC pairs for a MIMO-MAC without the assumption of CSIT-P is challenging.
\end{itemize}
\section*{Acknowledgment}
This work was partly supported by the DRDO-IISc Program on Advanced Research in Mathematical Engineering.

\begin{appendices}
\section{Proof of Proposition \ref{prop6}}
\label{proof_prop_6}
For $a, b \in \mathbb{C}$, let $\textit{l}(a, b)$ denote the line segment joining $a$ and $b$ in $\mathbb{R}^{2}$. It is to be noted that the complex points $0, r(I^{q-1})$ and $r(I^{q-1})e^{i\frac{2\pi}{M}}$ form the three vertices's of an isosceles triangle in $\mathbb{R}^{2}$. Since $r(O^{q}) \leq r(I^{q-1}),$ we have $d(0, r(O^{q})e^{i\frac{2\pi}{M}}) \leq  d(0, r(I^{q-1})e^{i\frac{2\pi}{M}})$. Therefore, the four points $r(O^{q}), r(O^{q})e^{i\frac{2\pi}{M}}, r(I^{q-1})$ and  $r(I^{q-1})e^{i\frac{2\pi}{M}}$ form the vertices's of an isosceles trapezoid $\Upsilon$ such that $\textit{l}(r(O^{q}), r(O^{q})e^{i\frac{2\pi}{M}})$ is parallel to $\textit{l}(r(I^{q-1}), r(I^{q-1})e^{i\frac{2\pi}{M}})$. Also, note that $d(r(I^{q-1}),  r(O^{q})e^{i\frac{2\pi}{M}})$ is the length of the diagonal of the trapezoid $\Upsilon$. Since the angle between the line segments $\textit{l}(r(O^{q}), r(O^{q})e^{i\frac{2\pi}{M}})$ and $\textit{l}(r(O^{q}), r(I^{q-1}))$ is obtuse, $d(r(I^{q-1}),  r(O^{q})e^{i\frac{2\pi}{M}}) \geq d(r(O^{q-1}), r(O^{q-1})e^{i\frac{2\pi}{M}})$. This completes the proof.
\section{Proof of Proposition \ref{prop7}}
\label{proof_prop_7}
We prove the inequality $2r(O^{M/2-1})\mbox{sin}(\frac{2\pi}{M}) \leq d(r(O^{q}), r(O^{q})e^{i\frac{2\pi}{M}})$ which can be written the following ratio : 
\begin{equation} 
\label{exp2}
\frac{2r(O^{M/2-1})\mbox{sin}(\frac{2\pi}{M})}{d(r(O^{q}), r(O^{q})e^{i\frac{2\pi}{M}})} = \frac{\mbox{sin}(\frac{\pi}{M} - \frac{\theta}{2})\mbox{sin}(\frac{2\pi}{M})}{\mbox{cos}(\frac{\theta}{2} + \frac{\pi q}{M})\mbox{sin}(\frac{\pi}{M})}.
\end{equation}
Since $\theta \leq \frac{pi}{M}$ and $q \leq M/2 - 3$, $\frac{\theta}{2} + \frac{\pi q}{M} \leq \frac{pi}{2} - \frac{5\pi}{M} < \frac{\pi}{2}$. Hence for all values of $\theta$ and $q$, $\mbox{cos}(\frac{\theta}{2} + \frac{\pi q}{M}) \geq \mbox{cos}(\frac{pi}{2} - \frac{5\pi}{M}) = \mbox{sin}(\frac{5\pi}{M})$. Also, $\mbox{sin}(\frac{\pi}{N} - \frac{\theta}{2}) \leq \mbox{sin}(\frac{\pi}{N})$. Therefore, the ratio in \eqref{exp2} satisfies the following inequality for $M \geq 8$, 
\begin{equation*}
\frac{\mbox{sin}(\frac{\pi}{M} - \frac{\theta}{2})\mbox{sin}(\frac{2\pi}{M})}{\mbox{cos}(\frac{\theta}{2} + \frac{\pi q}{M})\mbox{sin}(\frac{\pi}{M})} \leq \frac{\mbox{sin}(\frac{2\pi}{M})}{\mbox{sin}(\frac{5\pi}{2M})} \leq 1.
\end{equation*}
This completes the proof.
\section{Proof of Lemma 1}
\label{proof_lemma_1}
Since the structure of $\mathcal{S}^{eo}_{sum}$ and $\mathcal{S}^{oe}_{sum}$ are identical, we find the minimum distance of $\mathcal{S}^{eo}_{sum}$. Since the points of $\mathcal{S}^{eo}_{sum}$ are maximally separated on every circle and $O^{M/2-1}$ is the innermost circle, $d(r(O^{M/2-1}), r(O^{M/2-1})e^{i\frac{4\pi}{M}}) = 2r(O^{M/2-1})\mbox{sin}(\frac{2\pi}{M}) = d_{1}$ is a contender for $d^{eo}_{min}$. For this to be true, it is to be shown that all other intra-distances in the set are larger than or equal to $d_{1}$. In particular, the distances between the points on any two consecutive circles must be larger than $d_{1}$.
Firstly, it is shown that a point on $I^{q}$ and a point on $O^{q-1}$ which have an angular separation of $0$ radians are separated by a distance larger than $d_{1}$ for all $q = 2$ to $M/2 - 2$. In that direction, the first observation is that $r(O^{1}) - r(I^{2}) = d_{1}$. From the results of the Proposition \ref{prop3}, $ r(O^{k}) - r(I^{k+1}) \geq d_{1}$ for all $k \geq 2$. Hence the points on $I^{q}$ and $O^{q-1}$ are separated by a distance larger than $d_{1}$ for all $q = 2$ to $M/2 - 2$.\\
\indent Secondly, it is to be verified if the point on $O^{q}$ and the point on $I^{q-1}$ are separated by a distance larger than $d_{1}$ for all $q = 1$ to $M/2 - 1$. It is shown that the points on $O^{q}$ and $I^{q-1}$ having an angular separation of $\frac{2\pi}{M}$ are separated by a distance larger than $d_{1}$ only for $q = 1$ to $M/2 - 3$ but not for $q = M/2 - 1$. For $q = M/2 - 1$, $d(r(I^{q-1}),  r(O^{q})e^{i\frac{2\pi}{M}})$ can be lesser than $2r(O^{M/2-1})\mbox{sin}(\frac{2\pi}{M})$ for certain values of $\theta$. Therefore, we prove $d(r(I^{q-1}),  r(O^{q})e^{i\frac{2\pi}{M}}) \geq 2r(O^{M/2-1})\mbox{sin}(\frac{2\pi}{M})$ only for $q = 1$ to $M/2 - 3$ using the following sequence of inequalities, 

{\footnotesize
\begin{equation*}
d(r(I^{q-1}),  r(O^{q})e^{i\frac{2\pi}{M}}) \geq d(r(O^{q}), r(O^{q})e^{i\frac{2\pi}{M}}) \geq 2r(O^{M/2-1})\mbox{sin}(\frac{2\pi}{M}).
\end{equation*}
}

The first inequality is proved in Proposition \ref{prop6} whereas the second inequality is proved in Proposition \ref{prop7}. Hence, the points on $O^{q}$ and $I^{q-1}$ are separated by a distance larger than $d_{1}$ for all $q = 1$ to $M/2 - 3$. 
Therefore, $d^{eo}_{min} = \mbox{min}\left(d(r(I^{q-1}),  r(O^{q})e^{i\frac{2\pi}{M}}), d_{1}\right) $.
This completes the proof.
\section{Proof of Lemma 2}
\label{proof_lemma_2}
Since the structure of $\mathcal{S}^{ee}_{sum}$ and $\mathcal{S}^{oo}_{sum}$ are the same, we find the minimum distance of $\mathcal{S}^{ee}_{sum}$ only. Since the points of $\mathcal{S}^{ee}_{sum}$ are maximally separated with $\phi_{min} = \frac{4\pi}{M}$ on every circle and $I^{M/2-1}$ is the innermost circle, $d(r(I^{M/2-1}), r(I^{M/2-1})e^{i\frac{4\pi}{M}}) = 2r(I^{M/2-1})\mbox{sin}(\frac{2\pi}{M}) = d_{2}$
is a contender $d^{ee}_{min}$. For this to be true, it is to be shown that the distances between the points on any two consecutive circles must be larger than $d_{2}$. We show that a point on $O^{q}$ and a point on $I^{q-1}$ which have an angular separation of $0$ radians are separated by a distance larger than $d_{2}$ for all $q = 2$ to $M/2 - 2$. In that direction, the first observation is that $r(I^{1}) - r(O^{2}) = d_{2}$. From the result of the Proposition \ref{prop4} in Section \ref{sec1}, $r(I^{k}) - r(O^{k+1}) \geq d_{2}$ for all $k \geq 2$. Hence the points on $O^{q}$ and $I^{q-1}$ are separated by a distance larger than $d_{2}$ for all $q = 2$ to $M/2 - 2$ .\\
\indent Secondly, it is shown that the point on $I^{q}$ and the point on $O^{q-1}$ are separated by a distance larger than $d_{2}$ for all $q = 1$ to $M/2 - 1$. i.e. we prove $d(r(O^{q-1}),  r(I^{q})e^{i\frac{2\pi}{M}}) \geq d_{2}$ for $q = 1$ to $M/2 - 1$. In that direction, for $q = 1$ to $M/2 - 3$, we show that

{\footnotesize 
\begin{equation}
\label{eq1_sec3}
d(r(O^{q-1}),  r(I^{q})e^{i\frac{2\pi}{M}}) \geq d(r(I^{q}), r(I^{q})e^{i\frac{2\pi}{M}}) \geq 2r(I^{M/2-1})\mbox{sin}(\frac{2\pi}{M}).
\end{equation}
}

For the case when $q = M/2 - 1$, we show that 

{\footnotesize 
\begin{equation}
\label{eq2_sec3}
d(r(O^{q-1}),  r(I^{q})e^{i\frac{2\pi}{M}}) \geq d(r(O^{q-1}), r(I^{q})) \geq 2r(I^{M/2-1})\mbox{sin}(\frac{2\pi}{M}).
\end{equation}
}

The proof for the first lower bounds of \eqref{eq1_sec3} and \eqref{eq2_sec3} are on the similar lines of the proof for Proposition \ref{prop6}. The proofs of the second lower bounds of \eqref{eq1_sec3} and \eqref{eq2_sec3} are in Proposition \ref{prop8} and Proposition \ref{prop9} respectively.
Therefore, $d^{ee}_{min} = d_{2} = 4\mbox{sin}\left(\frac{\theta}{2})\mbox{sin}(\frac{2\pi}{M}\right)$.
This completes the proof.
\section{Proof of Theorem \ref{rod_IL}}
\label{proof_theorem_rod_IL}
Let $\textbf{X}$ represents the $l \times N_{t}$ ROD for $N_{t}$ antennas in the variables $x_{1}, x_{2} \cdots x_{l}$. Note that the number of channel uses is equal to the number of real variables since $\textbf{X}$ is a rate-1 ROD. Also, $\textbf{X}$ has the following column vector representation,
\begin{equation*}
\textbf{X} = \left[\textbf{A}_{1}\textbf{x} ~\textbf{A}_{2}\textbf{x} ~\cdots ~\textbf{A}_{N_{t}}\textbf{x}\right] 
\end{equation*}
where $\left\lbrace \textbf{A}_{i} ~|~ i = 1 \mbox{ to } N_{t}\right\rbrace $ is the set of column vector representation matrices of $\textbf{X}$ and $\textbf{x}^{T} = \left[x_{1} ~x_{2} ~\cdots ~x_{l} \right]$. The MISO channel equation with the above design is, $\textbf{y} = \sqrt{\frac{\rho}{N_{t}}}\textbf{X}\textbf{h} + \textbf{n}$ where $\rho$ is the average receive SNR and $\textbf{n} \sim \mathcal{N} \left(0, 1\right)$. The above channel equation can also be written as
\begin{equation*}
\textbf{y} = \sqrt{\frac{\rho}{N_{t}}}\hat{\textbf{H}}\textbf{x} + \textbf{n}
\end{equation*}
where $\hat{\textbf{H}} =  \sum_{i = 1}^{N_t} h_{i}\textbf{A}_{i}$. If the channel from every antenna to the destination is i.i.d Rayleigh distributed with unit mean, the capacity of the above channel is
\begin{equation*}
\frac{1}{2}E~\left[\mbox{log}_{2}\left(\mbox{det}\left( \textbf{I}_{l} + \frac{\rho}{N_{t}}\hat{\textbf{H}}\hat{\textbf{H}}^{H}\right)\right)\right].
\end{equation*}
Since $\textbf{A}_{i}$'s are unitary and $\textbf{A}_{i}\textbf{A}_{j}^{T} + \textbf{A}_{j}\textbf{A}_{i}^{T} = \textbf{0}_{T \times T}$ $\forall ~i, j$ such that $i \neq j$, we have $\hat{\textbf{H}}\hat{\textbf{H}}^{H} = \left(\sum_{i = 1}^{N_t} h_{i}^{2}\right)\textbf{I}_{l}$ and hence the capacity of a single-dimensional MISO channel with the ROD, $\textbf{X}$ is
\begin{equation*}
\frac{1}{2}E~\left[\mbox{log}_{2}\left(1 + \frac{\rho}{N_{t}}(\sum_{i = 1}^{N_t} h_{i}^{2})\right)\right].  
\end{equation*}
This completes the proof.
\end{appendices}

\begin{thebibliography}{1}
\bibitem{ThC}
Thomas M Cover and J. A. Thomas, ''Elements of information theory'', \emph{second edition - Wiley Series in Telecommunications}, 2004.
\bibitem{Ga}
Gallager R, "A perspective on multiaccess channels" \emph{IEEE Trans. Inform. Theory}, vol. 31, no 2, March 1985, pp. 124-142.
\bibitem{BiG}
Ezio Bigleiri, Laszlo Gyorfi, "Multiple Access Channels : Theory and Practice" \emph{IOS press}, published in cooperation with NATO public Diplomacy Division. 2007.
\bibitem{As}
R. Ahlswede, "Multi-way Communication channels", in the proceedings of \emph{ISIT-1971}, Armenian, S.S.R, 1971, pages 23-52
\bibitem{Li}
H. Liao, "A coding theorem for multiple access communications" in the proceedings of \emph{ISIT-1972}, Asilomar, CA, 1972.
\bibitem{FTL}
F. N. Brannstrom, T. M. Aulin and L. K. Rasmussen, "Constellation-Constrained capacity for Trellis code Multiple Access Systems" in the proceedings of \emph{IEEE GLOBECOM 2001}, vol. 2, San Antonio, Texas, Nov, 2001, pp. 11-15.
\bibitem{AuE}
T. Aulin and R. Espineira, "Trellis coded multiple access (TCMA)" in the proceedings of \emph{ICC '99}, Vancouver, BC, Canada, June 1999, pp. 1177-1181.
\bibitem{FAR}
Fredrik N Brannstrom, Tor M. Aulin and Lars K. Rasmussen "Iterative Multi-User Detection of Trellis Code Multiple Access using a posteriori probabilities" in the proceedings of \emph{IEEE ICC 2001}, Finland, June 2001, pp. 11-15. 
\bibitem{WCA}
Wei Zhang, C. D'Amours and A. Yongacoglu, "Trellis Coded Modulation Design form Multi-User Systems on AWGN Channels" in the proceedings of \emph{IEEE Vehicular Technology Conference 2004}, pp. 1722 - 1726. 
\bibitem{Ub}
G. Ungerbeck, "Channel coding with multilevel/phase signals," \emph{IEEE Trans. Inform. Theory}, vol. 28, no. 01, 1982, pp. 55-67.
\bibitem{Eb}
Ezio Biglieri, "Coding for wireless channels", \emph{Springer-Verlag New York, Inc}, 2005.
\bibitem{GaB}
M.E. Gartner and H. Bolcskei, "Multiuser space-time/frequency code design", in the proceedings of \emph{IEEE ISIT 2006}, pp.2819-2823.

\bibitem{MaB}
M. Badr, J. C. Belfiore, "Distributed Space-Time Block Codes for the MIMO Multiple Access Channel", in the proceedings of \emph{IEEE ISIT 2008}. pp. 2553-2557. Online on arXiv:0804.1490v1 [cs.IT].

\bibitem{TJC}
V. Tarokh, H. Jafarkhani and A. R. Calderbank, "Space time block codes from orthogonal designs" \emph{IEEE Transactions on Information theory}, vol.45, no.05, 1999, pp.1456-1467.

\bibitem{Xl}
X Liang, "Orthogonal designs with maximal rates", \emph{IEEE Transactions on Information theory}, vol.49, no.10, 2003, pp. 2468-2503.

\bibitem{ZaR}
Z. Khan and B. S. Rajan, "Single symbol maximum likelihood decodable linear STBCs", \emph{IEEE Transactions on Information theory}, vol.52, no.05, 2006, pp. 2062-2091.

\bibitem{KaR}
Karmakar, S. and B. S. Rajan, "Minimum decoding complexity, maximum rate space time block codes from Clifford algebras", \emph{IEEE ISIT 2006}, pp. 788-792.

\bibitem{YiK}
Zhihang Yi and Il-Min Kim "Single symbol ML decodable distributed STBCs for cooperative networks", \emph{IEEE Transactions on Information theory}, vol.53, no.8, 2007, pp. 2977-2985.

\bibitem{HaR2}
J. Harshan and B. S. Rajan, "Single-Symbol ML Decodable Precoded DSTBCs for Cooperative Networks", \emph{IEEE ICC 2008}, pp. 991-995.

\bibitem{KaR2}
Karmakar, S. and B. S. Rajan, "Multigroup-Decodable STBCs from Clifford Algebras," \emph{IEEE Transactions on Information Theory}, Vol. 55, No. 01, Jan. 2009, pp. 223-231.

\bibitem{SRS}
V. Shashidhar, B. Sundar Rajan and B. A. Sethuraman, "Information-Lossless Space-Time Block Codes from Crossed-Product Algebras," \emph{IEEE Transactions on Information theory}, Vol.52, No.9, Sept., 2006, pp.3913-3935.

\bibitem{HaH}
B. Hassibi and B.M. Hochwald, "High-rate codes that are linear in space and time", \emph{IEEE Transactions on Information theory}, vol.48, no.7, Jul. 2002, pages 1804-24.

\bibitem{HaR}
J. Harshan and B. Sundar Rajan, "Finite Signal-set Capacity of Two-user Gaussian Multiple Access Channel"  in the proceedings of \emph{IEEE International Symposium on Information Theory, (ISIT 2008)}, Toronto, Canada, July 06-11, 2008. pp. 1203 - 1207.
\end{thebibliography}
\end{document}